\documentclass[conference,compsoc]{IEEEtran}

\ifCLASSOPTIONcompsoc
  \usepackage[nocompress]{cite}
\else
  \usepackage{cite}
\fi

\usepackage{booktabs}
\usepackage{mathtools}
\usepackage{amssymb,amsmath,amsthm}
\usepackage{algorithm}
\usepackage{algpseudocode}
\algrenewcommand\algorithmicrequire{\textbf{Input:}}
\usepackage{setspace}
\usepackage{color}
\usepackage{hyperref}
\usepackage[normalem]{ulem}
\usepackage{soul}
\usepackage{xparse}
\setul{0.6ex}{0.1ex} % underline: drop a bit lower, slightly thicker

\newif\iffullversion
\fullversiontrue % Use \fullversionfalse for the submission version.
\NewDocumentEnvironment{fullproof}{+b}
{%
  \iffullversion
    \begin{proof}#1\end{proof}%
  \fi
}
{}

\NewDocumentEnvironment{showWhenSubmit}{+b}
{%
  \iffullversion
  \else
    #1%
  \fi
}
{}

\newtheorem{definition}{Definition}
\newtheorem{theorem}{Theorem}
\newtheorem{lemma}{Lemma}
\newtheorem{proposition}{Proposition}
\newtheorem{corollary}{Corollary}

\newcommand{\restatedhead}{}
\newtheorem*{restatedbox}{\restatedhead}
\newenvironment{restated}[3]
  {\renewcommand{\restatedhead}{#1~\ref{#2}}\begin{restatedbox}[#3, restated]}
  {\end{restatedbox}}

\newcommand{\proofloc}[1]{\iffullversion The proof is in Appendix~\ref{#1}.\fi}

\newcommand{\tx}{\textit{tx}}
\newcommand{\txset}{\textbf{TX}}
\newcommand{\psets}{\textbf{P}}

\newcommand{\alltx}{\txset_\textit{all}}
\newcommand{\winner}{f_{\textit{win}}}
\newcommand{\pubprotocol}{\mathcal{P}^\textit{pub}}
\newcommand{\bidvector}{\textbf{b}}
\newcommand{\txcontent}{\textit{data}}
\newcommand{\dtr}{\textit{dtr}}

\begin{document}
%
% paper title
% Titles are generally capitalized except for words such as a, an, and, as,
% at, but, by, for, in, nor, of, on, or, the, to and up, which are usually
% not capitalized unless they are the first or last word of the title.
% Linebreaks \\ can be used within to get better formatting as desired.
% Do not put math or special symbols in the title.
\title{Time Is Money: Incentivized Causal Transaction Ordering}
\sloppy

% author names and affiliations
% use a multiple column layout for up to three different
% affiliations
\author{\IEEEauthorblockN{Hongyin Chen}
\IEEEauthorblockA{Technion\\
hongyin.chen.contact@gmail.com}
\and
\IEEEauthorblockN{Xu Zheng}
\IEEEauthorblockA{Miami Herbert Business School\\
zhengxu@miami.edu}
\and
\IEEEauthorblockN{Jichen Li}
\IEEEauthorblockA{Tsinghua University\\
jichenli@mail.tsinghua.edu.cn}
\and
\IEEEauthorblockN{Ittay Eyal}
\IEEEauthorblockA{Technion\\
ittay@technion.ac.il}
}

% conference papers do not typically use \thanks and this command
% is locked out in conference mode. If really needed, such as for
% the acknowledgment of grants, issue a \IEEEoverridecommandlockouts
% after \documentclass

% for over three affiliations, or if they all won't fit within the width
% of the page (and note that there is less available width in this regard for
% compsoc conferences compared to traditional conferences), use this
% alternative format:
% 
%\author{\IEEEauthorblockN{Michael Shell\IEEEauthorrefmark{1},
%Homer Simpson\IEEEauthorrefmark{2},
%James Kirk\IEEEauthorrefmark{3}, 
%Montgomery Scott\IEEEauthorrefmark{3} and
%Eldon Tyrell\IEEEauthorrefmark{4}}
%\IEEEauthorblockA{\IEEEauthorrefmark{1}School of Electrical and Computer Engineering\\
%Georgia Institute of Technology,
%Atlanta, Georgia 30332--0250\\ Email: see http://www.michaelshell.org/contact.html}
%\IEEEauthorblockA{\IEEEauthorrefmark{2}Twentieth Century Fox, Springfield, USA\\
%Email: homer@thesimpsons.com}
%\IEEEauthorblockA{\IEEEauthorrefmark{3}Starfleet Academy, San Francisco, California 96678-2391\\
%Telephone: (800) 555--1212, Fax: (888) 555--1212}
%\IEEEauthorblockA{\IEEEauthorrefmark{4}Tyrell Inc., 123 Replicant Street, Los Angeles, California 90210--4321}}

% use for special paper notices
%\IEEEspecialpapernotice{(Invited Paper)}

% make the title area
\maketitle

% As a general rule, do not put math, special symbols or citations
% in the abstract
\begin{abstract}
Front-running is a subtle and persistent problem for blockchains.
A blockchain is a stateful virtual machine executing instructions called transactions. 
Users earn rewards by publishing \emph{functional transactions} essential to the system.
Attackers observe these transactions and publish their own ahead of the users', seizing the reward and eroding users' incentive to publish functional transactions.
Preventing front-running means enforcing \emph{causality}: 
If an attacker receives transaction $\textit{tx}_A$ and then publishes transaction $\textit{tx}_B$, then $\textit{tx}_A$ must be ordered before $\textit{tx}_B$.
However, this causality is only observed by the attacker. 
Practical systems order transactions by bid amount, so transactions willing to pay more get executed first, but this only results in a bidding war eroding users' rewards.
Though numerous ordering approaches have been proposed, none achieves causality, leaving users vulnerable to front-running.

We present \emph{PRECEDE}, a mechanism-design approach that enforces transaction causality by removing the economic incentive to front-run.
PRECEDE orders transactions by a power-weighted randomized lottery, whose winning probability grows super-linearly in the bid.
The user's strategy of publishing a transaction with a deterring bid forms an equilibrium where the attacker refrains from competing.
Moreover, PRECEDE prevents the prominent sandwich attack, which relies on front-running. 
PRECEDE can be directly deployed in any censorship-resistant blockchain with a simple change to its transaction ordering mechanism.

% Front-running is a central problem in blockchain systems. 
% Those are stateful virtual machines executing an ordered list of instructions called transactions.
% Their users often issue functional transactions that benefit the system and earn rewards.
% But attackers front-run them: 
% Observing a transaction, they manipulate the order to get ahead, seizing the reward.
% This strips users of the incentive to issue these functional transactions, eroding system efficiency.
% This problem is hard. 
% Front-running is a matter of \emph{causality}, whether a sender issued her transaction after observing another's; but this causal relationship is private to the attacker and invisible to the blockchain. 
% Prior work suggested, among other approaches, ordering by bid amount, i.e., transactions willing to pay more get executed first, but this results in a race to the bottom. 

% We present \emph{PRECEDE}, a mechanism-design approach that, for the first time, enforces transaction causality by removing the economic incentive to front-run.
% PRECEDE orders transactions by a power-weighted randomized lottery whose winning probability grows super-linearly in the bid.
% A strategy where the user submits a transaction with a deterring bid forms an equilibrium where the attacker refrains from competing. 
% Moreover, PRECEDE prevents the prominent sandwich attack, which relies on front-running. 
% PRECEDE can be directly deployed on existing blockchains with a simple change to their transaction ordering mechanism. 
\end{abstract}

% no keywords

% For peer review papers, you can put extra information on the cover
% page as needed:
% \ifCLASSOPTIONpeerreview
% \begin{center} \bfseries EDICS Category: 3-BBND \end{center}
% \fi
%
% For peerreview papers, this IEEEtran command inserts a page break and
% creates the second title. It will be ignored for other modes.
\IEEEpeerreviewmaketitle

% \tableofcontents

\section{Introduction}\label{sec:intro}

  \emph{Blockchains} underpin the \$3 trillion cryptocurrency market~\cite{theblock2021marketcap3trillion} and the \$170 billion Decentralized Finance (DeFi) market~\cite{coindesk2025defitvl}.
  A blockchain is a stateful virtual machine that executes an ordered sequence of instructions called \emph{transactions}, each of which updates the machine's state.
  Blockchain validators~\cite{buterin2013ethereum,nakamoto2008bitcoin,yakovenko2018solana} extend this sequence by appending vectors of transactions called \emph{blocks}.
  Users issue many \emph{functional} transactions that benefit the system, such as arbitrage that aligns prices across markets~\cite{angeris2020improved}, among others~\cite{chen2025prrr,mouallem2026resilient}.
  They receive rewards from the system for these transactions.

  These rewards, however, are exposed to \emph{front-running}~\cite{daian2020flash}.
  When a user publishes a functional transaction, it is publicly visible before a validator places it on chain.
  An attacker observes it and can bribe the validator to order her own transaction ahead of the user's, seizing the reward~\cite{daian2020flash,torres2021frontrunner}.
  Front-running also serves as a building block for other order-manipulation attacks~\cite{zhou2021high,bartoletti2025theoretical,heimbach2022sok}.
  Practical chains often order transactions by bid~\cite{buterin2013ethereum,blackshear2024sui,yakovenko2018solana}, turning front-running into a race to the bottom: 
  % The user needs to bid the entire reward away to the validator~\cite{daian2020flash,mclaughlin2023large}.
  On Flashbots~\cite{li2023demystifying}, for instance, arbitrage winners pay validators a median of over $90\%$ of their reward~\cite{mclaughlin2023large}.
  Front-running is therefore harmful to blockchain functionality:
  With their rewards competed away, users have little incentive to issue functional transactions, and the markets that rely on them lose efficiency~\cite{angeris2020improved}.

  An extensive body of work (\S\ref{sec:related}) addresses these attacks by limiting the validators' ability to manipulate transaction order.
  One approach is to order transactions by their arrival times at an ordering committee~\cite{kelkar2020order,kelkar2023themis,kursawe2020wendy,zhang2020byzantine}, but this fails to prevent front-running~\cite{park2025frontrunning}:
  No such protocol ensures ordering an earlier transaction ahead of a later one~\cite{kelkar2020order,kursawe2020wendy}.
  A second approach makes the order within each block random~\cite{xrplf2022canonicaltxset,kavousi2026blindperm}.
  However, an attacker can publish many copies so that one likely precedes the target~\cite{mazorra2026timing,geth21350_random_ordering_spam_2023}; such spam burdens the system while still allowing front-running~\cite{mazorra2026timing,wang2026blockspace,tumas2023ripple}.
  A third approach encrypts transaction content to hide it from attackers~\cite{zhang2022f3b,stathakopoulou2021adding,chiang2022fairpos,canidio2024commitment,bormet2025beat}.
  However, it still fails to prevent front-running~\cite{garimidi2025limits}.

  Addressing front-running is fundamentally hard.
  It is a matter of \emph{causality}~\cite{lamport1978time}:
  If a participant creates and publishes her transaction after observing another's~\cite{reiter1994securely,cachin2001secure,duan2017secure}, it should appear later on chain.
  But this causal relationship is private to the attacker who issues the later transaction.
  Moreover, in a decentralized system we cannot compel participants to follow the protocol, for instance to refrain from front-running or from misreporting transaction arrival times.
  Instead, each participant is free to take whichever action maximizes her own revenue~\cite{tsabary2021mad,brugger2023checkmate,yaish2023uncle,mirkin2020bdos,carlsten2016instability,eyal2018majority,chen2025prrr}.
  Without making trust assumptions that nodes may violate for gain, the protocol can therefore rely only on a limited set of signals observable to the blockchain.
  Even the size of the front-running reward is unknown to the protocol and varies across orders of magnitude~\cite{mclaughlin2023large}, so any defense must hold across the full range.
  Together, these constraints force the protocol to defend blindly: against any rational adversary, at any revenue scale, using only signals observable to the blockchain.
  
  % The most effective response is to design an incentive structure under which no participant, whether user or attacker, profits from manipulating transaction ordering~\cite{tsabary2021mad,mirkin2020bdos,chen2025prrr}.

  In this work, we propose an ordering rule that prevents front-running under these constraints.
  We model~(\S\ref{sec:model}) front-running as a competition between rational participants, a \emph{regular user} and a set of \emph{attackers}, all maximizing their own expected revenue.
  The state of the system gives rise to an \emph{opportunity}.
  Any participant can publish a transaction to take advantage of this opportunity, and the first such transaction on chain earns a reward~$R$.
  Each transaction carries a bid; the winning transaction pays its full bid, while every later transaction pays $\gamma b$, where $\gamma > 0$ is the \emph{losing-fee rate}.
  The regular user discovers an opportunity first and publishes a transaction to earn the reward.
  An attacker observes it and reacts with a competing transaction.
  The two then keep publishing transactions or updating the bids of their existing transactions.
  An \emph{ordering protocol} then orders these transactions, determining the winner and the revenues.

  Our primary goal is to enforce \emph{causal ordering} of transactions.
  Most transactions are independent and therefore commute.
  We focus on conflicting ones, whose order affects outcomes; in our setting, these are the transactions that compete for the same opportunity.
  The protocol should also be \emph{profitable}: The user earns a strictly positive expected revenue since she finds the opportunity first; otherwise the user would have little incentive to issue functional transactions like arbitrage.
  Since a participant could publish many copies of her transaction to increase her chance of winning, we require \emph{anti-spam}: Each rational participant publishes at most one transaction.

  We present \emph{PRECEDE}, \ul{P}ower-weighted \ul{R}an\-dom\-i\-za\-tion for \ul{E}n\-forc\-ing \ul{C}au\-sal ordering through \ul{E}n\-try \ul{DE}ter\-rence~(\S\ref{sec:precede}).
  PRECEDE removes the incentive that drives front-running. 
  The attack occurs only because it is profitable; PRECEDE applies mechanism design to make it unprofitable, deterring attacks.
  It thus enforces the causal order without trust assumptions.

  To achieve this, PRECEDE orders transactions by a randomized power-weighted lottery:
  It gives each transaction with bid $b$ the weight $w(b) = b^k$, where $k > 1$, and places it first with probability proportional to that weight.
  This rule generalizes two existing mechanisms.
  As $k \to \infty$, all weight concentrates on the highest bid, degenerating to the bid-priority ordering of practical blockchains; at $k = 0$, all weights are equal, degenerating to uniform random ordering.

  Under this mechanism, the user has a \emph{deterrence strategy}: By publishing one transaction with a sufficiently large bid, she ensures that every later entrant who observes her transaction earns a non-positive expected revenue.
  Intuitively, the super-linear weight squeezes counter-bids from both ends: A small bid has too little weight to win, while a large bid costs more than its expected reward.
  Front-running is therefore unprofitable.
  We give a closed-form expression for the smallest deterring bid for a newly discovered opportunity.

  Order randomization raises the same spam concern as uniform random ordering: A participant might publish many transactions to increase her chance of winning.
  We show that every rational participant publishes a single bid if and only if $k \geq \ln 2 / \ln(1+\gamma)$, so PRECEDE satisfies anti-spam.
  We further show that, when $k$ meets the same bound, the deterrence strategy is also \emph{profitable}: The user earns strictly positive expected revenue and remains incentivized to discover opportunities.

  Beyond the existence of this strategy, we also verify that rational players are incentivized to play it rather than use other strategies.
  We analyze this as a sequential game (\S\ref{sec:game}), played between a user and an attacker.
  They act alternately, and the attacker moves last, reflecting her advantage in observing and reacting to others' transactions (e.g., from lower network latency, or from controlling the set of placed transactions as the validator).
  We prove that, whenever $k \ge \max\{2,\exp(1/\gamma)/\gamma\}$, it is a Subgame-Perfect Nash Equilibrium (SPNE) for the user to play the deterrence strategy in the first step and for neither player to act thereafter, achieving causal ordering regardless of the game's length.
  This bound is loose: Our numerical experiments show that the anti-spam bound alone already suffices.

  We further study how to choose $k$ so that PRECEDE meets both the anti-spam and equilibrium requirements while maximizing the user's expected revenue (\S\ref{sec:tradeoff}).
  A numerical simulation shows that the equilibrium bound is loose, so a much smaller $k$ already suffices in practice.
  Although $\gamma$ varies across opportunities, the designer can conservatively fix a single $k$ that works for a range of $\gamma$ values.
  For a fixed $\gamma$, the user's revenue is maximized at $k = 1 + 1/\gamma$, where it equals $\gamma R/(1+\gamma)$.

  Finally, we show that PRECEDE also defends against the \emph{sandwich attack}~\cite{zhou2021high}, a prominent front-running-based attack (\S\ref{sec:sandwich}).
  In a sandwich attack, a user buys an asset on a decentralized exchange, where buying raises the asset's price.
  The attacker front-runs to buy first, lets the user's own transaction push the price higher, and then back-runs to sell at that higher price, profiting at the user's expense.
  Under PRECEDE, defending against the sandwich attack is no harder than defending against front-running, and the user's resulting cost of deterrence is strictly less than the loss she would suffer from a sandwich attack in practical blockchains.

  In all, our contributions are:
  (1)~Modeling of front-running as competition under generic ordering protocols;
  (2)~entry deterrence as a new ordering principle, distinct from time- and content-based ones;
  (3)~PRECEDE, a power-weighted randomized lottery realizing this principle;
  (4)~a closed-form expression for the smallest bid that deters all front-running;
  (5)~an anti-spam reduction showing that single bids dominate multi-bid strategies;
  (6)~game-theoretic proof that the deterrence strategy is a subgame-perfect equilibrium;
  (7)~analysis of how to choose the parameter $k$ and
  (8)~extension to sandwich attacks.

  PRECEDE can be deployed directly in any censorship-resistant blockchain with only a small change to its transaction ordering mechanism.

  %%%%%%%%%%%%%%%%%%%%%%%%%%%%%%%%%%%%%%%%%%%%%%%%%%%%%%%%%%%%%%%%%%%%%%%%%%%%%%%%
  %%%%%%%%%%%%%%%%%%%%%%%%%%%%%%%%%%%%%%%%%%%%%%%%%%%%%%%%%%%%%%%%%%%%%%%%%%%%%%%%
  %%%%%%%%%%%%%%%%%%%%%%%%%%%%%%%%%%%%%%%%%%%%%%%%%%%%%%%%%%%%%%%%%%%%%%%%%%%%%%%%
  \section{Related Work}\label{sec:related}

  Front-running is a central security concern for blockchain systems.
  Such attacks are widespread in practice as attackers front-run user transactions for profit~\cite{daian2020flash,torres2021frontrunner,tumas2023ripple,mclaughlin2023large}.
  A large body of work studies front-running attacks~\cite{daian2020flash,byers2022combating,zecirovic2024analysis,tumas2023ripple,zhang2023front,zhang2025no,zhang2023time,zhang2024transfront,qin2023blockchain,wang2024frontrunning}, as well as other attacks they enable~\cite{heimbach2022sok}, such as sandwich attacks~\cite{zhou2021high,heimbach2022eliminating,xavier2023credible,bartoletti2025theoretical,li2023demystifying,torres2021frontrunner,li2024geth,adams2024don,ji2024regulatory} and multi-victim generalizations~\cite{wang2023n,bartoletti2022maximizing}.

  An extensive body of work designs transaction ordering protocols to defend against front-running~\cite{baum2022sok,raikwar2023fairness}.
  The dominant approach relies on a committee of nodes that orders transactions by their arrival order~\cite{kelkar2020order,kelkar2023themis,baird2016swirlds} or arrival timestamps~\cite{kursawe2020wendy,zhang2020byzantine}, aiming to prevent ordering manipulation.
  However, with an adversarial network and even a single corrupted committee node, no such protocol can respect the order in which nodes receive transactions~\cite{kelkar2020order,kursawe2020wendy}. 
  These local orders can conflict cyclically, leaving no global order that satisfies them all~\cite{gehrlein1983condorcet,vafadar2023condorcet}. 
  The resulting arbitrary tie-breaks let an attacker order herself ahead, and thus do not prevent front-running.
  The field has therefore relaxed its goal to weaker notions: \emph{batch-order} fairness, which groups transactions into batches without intra-batch order guarantees~\cite{kelkar2020order,kelkar2022order,kelkar2023themis,cachin2022quick,nagda2024rashnu,nagda2025dag,hay2025optimistic,wang2025dikaios,ren2025proof,kang2025fairdag,cachin2024quick} (or with stricter intra-batch rules that still admit front-running~\cite{yu2025efficient,ramseyer2024fair}); \emph{timed-order} fairness, which only constrains transactions separated by more than a certain delay threshold~\cite{kursawe2020wendy,kursawe2021wendy,zhang2020byzantine,Constantinescu2023fair,kang2025fairdag,zhang2024chronos,gramoli2024aoab,zarbafian2023aion,wang2022phalanx} (or provides this guarantee only with high probability~\cite{xue2024travelers}); and \emph{bounded unfairness}, which caps how far a transaction can be displaced behind another~\cite{kiayias2024ordering}.
  Yet none eliminates front-running: An attacker reacting quickly enough fits within the batch, time window, or position bound each relaxation permits. 
  The \emph{Ambush} attack empirically demonstrates this for a batch-order-fair system~\cite{park2025frontrunning}.

  Another line of work hides transaction content until ordering is fixed.
  These designs differ in who opens the ciphertext: a \emph{decrypting committee}~\cite{agarwal2025weighted,zhang2022f3b,yakira2021helix,malkhi2022maximal,bormet2025beat,bormet2025beast,cai2025equibft,zarbafian2023lyra,ciampi2025universally,momeni2022fairblock,choudhuri2025practical,choudhuri2024mempool,riva2025seahorse,gramoli2024aoab}, a \emph{trusted execution environment} (TEE~\cite{Sabt2015Trusted})~\cite{stathakopoulou2021adding,li2025transaction,ciampi2024universal}, a time-locked puzzle or delay-based encryption~\cite{chiang2022fairpos,rodriguez2025cryptographically,khajehpour2023mempool,agrawal2025timelock}, a trusted relay~\cite{babel2024prof}, or the transaction creator~\cite{canidio2024commitment,mcmenamin2022fairtradex,mcmenamin2024private}.
  However, such schemes still leave room for front-running: Publishing a transaction already exposes timing, propagation, and creator metadata, signals from which an attacker can speculatively race~\cite{garimidi2025limits}.
  Moreover, parties able to open the ciphertext may be incentivized to abuse that capability and, even when confidentiality holds, distort DeFi market microstructure such as liquidations~\cite{rondelet2023mempool}.
  Within the committee-decryption subclass, the failure runs deeper still: Wadhwa et al.~\cite{wadhwa2024data} prove a general impossibility with rational participants, where committee members can always deniably collude to leak contents.
  In contrast, we make transaction content public and reshape the ordering rule to disincentivize front-running.

  Some solutions hide transaction contents and offer weaker notions of order fairness at the same time~\cite{kursawe2021wendy,ciampi2025universally,zarbafian2023lyra,ciampi2024universal,gramoli2024aoab}.
  But they are as vulnerable to front-running as the receive-order-fair protocols and content-hiding schemes they combine.

  Within a single block, it is possible to uniformly randomize the transaction order to remove the deterministic priority that reordering attacks exploit~\cite{xrplf2022canonicaltxset,kavousi2026blindperm}.
  However, uniform randomization incentivizes spam, where adversaries publish many transactions to increase their probability of front-running~\cite{mazorra2026timing,wang2026blockspace,tumas2023ripple} or sandwich attacks~\cite{tumas2024ammazing}, as observed in deployed systems~\cite{geth21350_random_ordering_spam_2023}.
  PRECEDE disincentivizes spamming.

  Several works address sandwich attacks specifically in Automated Market Makers~\cite{chitra2022differential,canidio2023batching,chan2024mechanism,xavier2023credible,wadhwa2024data,alpos2023eating,Zhou2021A2MMMF,heimbach2022eliminating,jiang2025armm} and address specific, predefined front-running attacks that do not change in response to their defense~\cite{zhang2023front,park2025frontrunning,zhao2025mitigating}.
  A complementary direction lets the applications themselves impose sequencing constraints, rather than fixing a global ordering rule~\cite{durvasula2026monotone}.
  We address the general front-running problem.

  Several adjacent lines of work study transaction ordering without directly preventing front-running.
  Some target other goals: incentivized inclusion of \emph{report} transactions~\cite{chen2025prrr}, bounded confirmation delay~\cite{ahuja2023myochain}, reduced tail latency for long-waiting transactions~\cite{sokolik2020age}, empirical quantification of order-fairness violations under adversarial reordering~\cite{mahe2024adversary,mahe2025order}, similar order positions for transactions with similar issue times and bids~\cite{cohen2025fair,zhang2025ordered}, or coordination of transaction order with participants' declared workflow intent in collaborative processes~\cite{atwi2024transparent}.
  Another line detects unfair orderings post hoc rather than preventing them~\cite{nasrulin2023accountable}.
  But such auditing mis-accuses honest miners with probability exceeding 25\% within the sub-30-second window in which front-running operates~\cite{albrecht2026effectiveness}.

  Our mechanism enforces \emph{causal ordering}, a classical notion from distributed systems~\cite{reiter1994securely,cachin2001secure,duan2017secure}: 
  A transaction issued in response to an observed one should not be ordered before it.
  However, classical solutions (like Lamport timestamps~\cite{lamport1978time}) do not apply because malicious participants can misreport their timestamps to front-run, and the blockchain cannot verify them.
  We focus on the ordering of transactions for the same opportunity, which are linked by a data dependency so that their relative order affects their execution outcomes.
  Some ordering protocols~\cite{nagda2024rashnu,nagda2025dag} and practical systems~\cite{blackshear2024sui} also focus on such transactions, but require the blockchain to explicitly extract data dependencies between transactions.
  Extracting these dependencies is hard in general~\cite{gelashvili2023block}. 
  PRECEDE does not rely on transaction semantics.

  Another line of work, like ours, orders transactions by their bids.
  Validators in deployed blockchains usually order transactions by descending bid~\cite{buterin2013ethereum,blackshear2024sui,yakovenko2018solana}. 
  This introduces a bidding war: The user must outbid the attacker to win, paying away her reward~\cite{daian2020flash,mclaughlin2023large}.
  Flashbots~\cite{li2023demystifying} moves this competition off-chain, but the same bidding war still occurs~\cite{mclaughlin2023large}.
  We instead order at random.
  Another proposal lets users pay to improve their transactions' positions through a trusted party~\cite{mamageishvili23buying}; we do not rely on this strong assumption.

  In the game theory literature, contest theory~\cite{tullock1980efficient,skaperdas1996csf} studies similar competitive settings: Players pay bids to compete for a prize, and the winner is drawn at random, with each bid's winning probability proportional to a weight that increases with the bid. 
  Our mechanism adopts the power-weight function from the Tullock contest~\cite{skaperdas1996csf}. 
  Tullock contests, however, are \emph{full-pay}: Every player pays her bid in full whether she wins or loses. 
  In practical blockchains, by contrast, a losing transaction is reverted and consumes different computational resources than the winner, paying a ratio of its bid. 
  We adopt this partial-pay setting, to which prior results do not apply.

  Close to our work, sequential Tullock contests study players who bid in turn: Gao et al.~\cite{gao2023equilibria} solve the two-player case for a general power-weight function and specifically observe entry deterrence, but only under all-pay costs and a single action per player, neither of which holds in transaction ordering.
  Hinnosaar~\cite{hinnosaar2024optimal} solves the multi-player case, but shares the same two assumptions and, in addition, applies only to a linear weight function.
  Our model instead goes beyond all-pay and lets a player act at multiple steps and publish multiple transactions (spam), both inherent to transaction ordering.

  %%%%%%%%%%%%%%%%%%%%%%%%%%%%%%%%%%%%%%%%%%%%%%%%%%%%%%%%%%%%%%%%%%%%%%%%%%%%%%%%
  %%%%%%%%%%%%%%%%%%%%%%%%%%%%%%%%%%%%%%%%%%%%%%%%%%%%%%%%%%%%%%%%%%%%%%%%%%%%%%%%
  %%%%%%%%%%%%%%%%%%%%%%%%%%%%%%%%%%%%%%%%%%%%%%%%%%%%%%%%%%%%%%%%%%%%%%%%%%%%%%%%

    \section{Model}\label{sec:model}

  We consider a \emph{regular user} and a set of \emph{attackers} competing on a blockchain~(\S\ref{sec:model:participants}) for a reward~$R$~(\S\ref{sec:model:execution}).
  Transaction publication induces a causality relation among transactions~(\S\ref{sec:model:system}).
  Our goal is to design a protocol that achieves Causal Ordering, Anti-Spam, and Profitability~(\S\ref{sec:model:goal}).

  %%%%%%%%%%%%%%%%%%%%%%%%%%%%%%%%%%%%%%%%%%%%%%%%%%%%%%%%%%%%%%%%%%%%%%%%%%%%%%%%
  %%%%%%%%%%%%%%%%%%%%%%%%%%%%%%%%%%%%%%%%%%%%%%%%%%%%%%%%%%%%%%%%%%%%%%%%%%%%%%%%

      \subsection{Participants and Blockchain}\label{sec:model:participants}
  The system comprises rational participants $\psets = \{U\} \cup \mathbf{A}$: a user~$U$ and a set of attackers~$\mathbf{A}$.
  They all take actions to maximize their own revenue.
  Nodes called validators run a state machine replication protocol to maintain a \emph{blockchain}.

  The blockchain is a stateful virtual machine that executes a totally ordered sequence of \emph{transactions}.
  Each transaction is a log entry containing an instruction that updates the state.
  Participants create and publish these transactions.
  Formally, a transaction~$\tx$ is a tuple~$(\txcontent, \textit{bid})$, where $\txcontent$ includes the instruction and the creator, and \textit{bid} is the fee that the creator is willing to pay for the transaction to be included on-chain.
  We denote the \emph{bid} of a transaction~$\tx$ by~${b(\tx) > 0}$, abbreviated~$b$ when~$\tx$ is clear from context.
  Once the transaction is on chain, the blockchain charges the creator the bid (or a ratio thereof).

  Validators maintain the blockchain by running an \emph{ordering protocol} that extends the transaction sequence with new transactions; the virtual machine then executes the transactions in the order the protocol determines.
  We abstract away the blockchain implementation details and treat the ordering protocol as an idealized service that receives transactions from participants and processes them.

  Each execution of the ordering protocol consists of two phases: a \emph{collection phase} and an \emph{ordering phase}.
  In the collection phase, the protocol gathers the set of transactions it has received from participants, which we denote by~$\alltx$.
  In the ordering phase, the protocol runs an ordering function~$f$ on~$\alltx$.
  We assume an external random beacon pulse~$Q$ that is unpredictable before the ordering phase and becomes available when it begins; such beacons are well-established~\cite{das2022spurt,bhat2021randpiper} and are used in blockchain mechanism design~\cite{chung2023foundations,chen2025prrr}.
  Formally, $f$ takes~$\alltx$ and~$Q$ as input and outputs an ordered vector~$f(\alltx, Q)$.
  The blockchain appends this output to its transaction sequence; the current execution then ends and the next begins.

  %%%%%%%%%%%%%%%%%%%%%%%%%%%%%%%%%%%%%%%%%%%%%%%%%%%%%%%%%%%%%%%%%%%%%%%%%%%%%%%%
  %%%%%%%%%%%%%%%%%%%%%%%%%%%%%%%%%%%%%%%%%%%%%%%%%%%%%%%%%%%%%%%%%%%%%%%%%%%%%%%%

      \subsection{Reward}\label{sec:model:execution}
  The state of the system allows any participant to earn an on-chain reward~$R$ by publishing a transaction.
  Only the creator of the first transaction for it in the blockchain's transaction order earns~$R$; every other transaction earns nothing.
  For simplicity, we ignore transactions that do not compete for~$R$.

  The user knows of the reward at time~$0$ and can publish a transaction to receive it throughout the window~$[0, T]$, the collection phase of a single execution of the ordering protocol.
  An attacker is initially unaware of the reward; only after the user publishes a transaction can the attacker observe it and publish her own to compete for the reward.

  We now describe the payment for transactions in~$\alltx$.
  We call the first transaction in~$f(\alltx, Q)$ the \emph{winner} and denote it by~$\winner(\alltx, Q)$.
  For any transaction~${\tx \in \alltx}$, if it is the winner, i.e.~${\tx = \winner(\alltx, Q)}$, its creator earns a reward~$R$ and pays the full bid~$b(\tx)$;
  otherwise, the creator pays a ratio~$\gamma > 0$ of the bid, the \emph{losing-fee rate}~$\gamma$, so the payment is~$\gamma\, b(\tx)$.
  This rate reflects that, in practical blockchains, a non-winning transaction reverts along a different execution path and pays only for the resources it consumes~\cite{daian2020flash}, usually less than the winner's payment but occasionally even slightly more~(\S\ref{app:practical:gamma}).
  The rate~$\gamma$ varies across competitions, but a participant knows it accurately in advance by locally pre-executing her transaction.

  %%%%%%%%%%%%%%%%%%%%%%%%%%%%%%%%%%%%%%%%%%%%%%%%%%%%%%%%%%%%%%%%%%%%%%%%%%%%%%%%
  %%%%%%%%%%%%%%%%%%%%%%%%%%%%%%%%%%%%%%%%%%%%%%%%%%%%%%%%%%%%%%%%%%%%%%%%%%%%%%%%
      \subsection{System Progress}\label{sec:model:system}

  An execution unfolds over the window~$[0, T]$.
  To compete for the reward, a participant either publishes new transactions or increases the bids of those she has already published, and she may do so at any time.
  She cannot decrease the bid of an existing transaction~\cite{daian2020flash}.

  A transaction a participant publishes becomes visible to the others only after some latency, as it propagates through the network, and this latency can be arbitrary.
  They observe it together with its bid.
  This is what lets an attacker react: The user may act at any time, whereas an attacker, initially unaware of the reward, publishes only after a user's first transaction becomes visible to her.

  We assume that the ordering protocol is censorship-resistant: Every transaction published during~$[0, T]$ enters~$\alltx$.
  This property is satisfied by a range of protocols that aggregate transactions across validators (e.g.,~\cite{kelkar2020order,kursawe2020wendy,kelkar2023themis,zhang2020byzantine}).
  Such aggregation prevents any single node from omitting or deferring transactions~\cite{chen2025prrr}.
  Among all transactions in~$\alltx$, denote by~$\txset_U$ the user's transactions and by~$\txset_A$ those of any attacker~${A \in \mathbf{A}}$. 

  Denote by~$\pubprotocol$ the \emph{publication strategy} of a participant, which specifies how she publishes transactions and updates bids based on what she has observed at any time.

  %%%%%%%%%%%%%%%%%%%%%%%%%%%%%%%%%%%%%%%%%%%%%%%%%%%%%%%%%%%%%%%%%%%%%%%%%%%%%%%%
  %%%%%%%%%%%%%%%%%%%%%%%%%%%%%%%%%%%%%%%%%%%%%%%%%%%%%%%%%%%%%%%%%%%%%%%%%%%%%%%%
      \subsection{Goal}\label{sec:model:goal}

  Our goal is to design a protocol, an ordering function~$f$ together with a publication strategy~$\pubprotocol$, i.e., a pair~$(f, \pubprotocol)$, that eliminates front-running, captured by three properties: \emph{Causal Ordering}, \emph{Anti-Spam}, and \emph{Profitability}.

  First, Causal Ordering requires that the protocol never makes any attacker's transaction the winner: Since every attacker publishes only after observing the user's transaction, the winner must be one of the user's transactions.
  The difficulty is that the ordering function~$f$ cannot observe this causality: From~$\alltx$ and~$Q$ alone it cannot tell which transaction was published in reaction to another, so it cannot single out the attacker's transaction and must secure the guarantee blindly.
  \begin{definition}[Causal Ordering]\label{def:causal}
    A protocol~$(f, \pubprotocol)$ satisfies \emph{Causal Ordering} if, for any random beacon pulse~$Q$, it never orders any attacker's transaction first, so no attacker front-runs the user successfully.
    Formally, let~$\winner(\alltx, Q)$ be the winner of the ordering~$f(\alltx, Q)$ and~$\txset_A$ the set of transactions an attacker~$A \in \mathbf{A}$ publishes; we require
    \begin{equation*}
      \forall Q, \ \winner(\alltx, Q) \notin \bigcup_{A \in \mathbf{A}} \txset_A.
    \end{equation*}
  \end{definition}

  Second, because $f$ orders transactions using the random beacon~$Q$, such randomization might invite \emph{spam}: A participant publishes many transactions to raise the chance that one of them is ordered first, reducing system efficiency.
  We therefore require that each participant publishes at most one transaction:
  \begin{definition}[Anti-Spam]\label{def:antispam}
    A protocol~$(f, \pubprotocol)$ satisfies \emph{Anti-Spam} if the user and every attacker publish at most one transaction, i.e., $|\txset_U| \le 1$ and for all $A \in \mathbf{A}$, $|\txset_A| \le 1$.
  \end{definition}

  Finally, a user who follows the prescribed publication strategy~$\pubprotocol$ should profit;
  otherwise she stops exploring the system for opportunities (e.g., arbitrage), eroding system efficiency.
  \begin{definition}[Profitability]\label{def:profitability}
    A protocol~$(f, \pubprotocol)$ satisfies \emph{Profitability} if the user obtains a strictly positive expected revenue when she follows the protocol's publication strategy~$\pubprotocol$.
  \end{definition}

  % Regular users and attackers may have incentives to deviate from~$\pubprotocol$ to raise their winning chances --- e.g., by spamming multiple transactions or launching a Concorde attack.
  % Our design must ensure that no such deviation pays off:
  % \begin{definition}[Incentive Compatibility]
  %   For every participant~$P_i \in \psets$, executing~$\pubprotocol_i$ is a dominant strategy.
  % \end{definition}

  %%%%%%%%%%%%%%%%%%%%%%%%%%%%%%%%%%%%%%%%%%%%%%%%%%%%%%%%%%%%%%%%%%%%%%%%%%%%%%%%
  %%%%%%%%%%%%%%%%%%%%%%%%%%%%%%%%%%%%%%%%%%%%%%%%%%%%%%%%%%%%%%%%%%%%%%%%%%%%%%%%
  %%%%%%%%%%%%%%%%%%%%%%%%%%%%%%%%%%%%%%%%%%%%%%%%%%%%%%%%%%%%%%%%%%%%%%%%%%%%%%%%

    \section{PRECEDE}\label{sec:precede}

  We present \emph{PRECEDE}, \ul{P}ower-weighted \ul{R}an\-dom\-i\-za\-tion for \ul{E}n\-forc\-ing \ul{C}au\-sal ordering through \ul{E}n\-try \ul{DE}ter\-rence, to address front-running.
  We provide the design intuition~(\S\ref{sec:precede:intuition}), the ordering protocol~(\S\ref{sec:precede:ordering}), and the publication strategy by which a user deters entry, achieving Causal Ordering (\S\ref{sec:publication:deterrence}).
  We then show that the protocol satisfies Anti-Spam~(\S\ref{sec:precede:spam}) and Profitability~(\S\ref{sec:precede:profit}).

  %%%%%%%%%%%%%%%%%%%%%%%%%%%%%%%%%%%%%%%%%%%%%%%%%%%%%%%%%%%%%%%%%%%%%%%%%%%%%%%%
  %%%%%%%%%%%%%%%%%%%%%%%%%%%%%%%%%%%%%%%%%%%%%%%%%%%%%%%%%%%%%%%%%%%%%%%%%%%%%%%%
      \subsection{Design Intuition}\label{sec:precede:intuition}

  PRECEDE orders transactions solely by the bid each carries, which is on-chain and verifiable, rather than by timestamps or arrival times.
  Concretely, the protocol fixes an exponent $k > 1$ and assigns each transaction $\tx$ the weight $w(b(\tx)) = b^k(\tx)$, ordering it first with probability proportional to its weight.
  Our design principle is \emph{entry deterrence}: The user publishes a bid high enough so that the attacker cannot profit by \emph{entry}, i.e., by publishing a transaction to compete, so the attacker abstains.

  Whether the user can deter the attacker while still profiting hinges on the exponent $k$.
  As $k$ grows to infinity, the highest bid wins with probability approaching one.
  The design then degenerates into the deterministic descending-bid ordering of deployed blockchains~\cite{buterin2013ethereum,blackshear2024sui,yakovenko2018solana}.
  Under this rule the user can deter entry only by bidding the entire reward $R$: With any smaller bid, the attacker will bid slightly higher, win with certainty, and earn a positive revenue.
  Such deterrence leaves the user no profit, violating Profitability.

  A carefully chosen $k$ softens this rule, letting the user deter entry more cheaply.
  Overbidding no longer guarantees a win, and a non-winning counter-bid pays the ratio $\gamma$ of its bid.
  The attacker thus faces a dilemma: A small counter-bid is cheap but, against the user's large weight, wins with too low a probability to profit, while a large one wins with high probability but, since $k > 1$, must be so large that its cost exceeds the reward $R$.
  Neither option yields a positive expected revenue, so a bid strictly below $R$ suffices to deter the attacker.

  At the other extreme, $k = 0$ makes every bid carry the same weight, so the ordering degenerates into uniform random selection.
  A participant's winning probability then grows with the number of transactions she publishes, inviting spam~\cite{wang2026blockspace}, violating the Anti-Spam property.
  An exponent $k > 1$ makes \emph{merging}, placing the whole bid on a single transaction, attractive on the winning rate:
  Because $b^k$ is superadditive, a single bid outweighs the same amount split across several transactions, so merging gives a higher winning probability than spamming.
  However, merging is not free.
  When the losing-fee rate $\gamma$ is smaller than $1$, as is common~\cite{daian2020flash}, a losing transaction pays only the fraction $\gamma$ of its bid, while the winner pays in full.
  Concentrating the whole bid on a single transaction thus charges this full amount whenever it wins, raising the expected payment.
  As before, a carefully chosen $k$ tips this balance: It should be large enough that the winning-probability gain from merging outweighs the extra payment, so each participant publishes a single transaction.

  % %%%%%%%%%%%%%%%%%%%%%%%%%%%%%%%%%%%%%%%%%%%%%%%%%%%%%%%%%%%%%%%%%%%%%%%%%%%%%%%%
  % %%%%%%%%%%%%%%%%%%%%%%%%%%%%%%%%%%%%%%%%%%%%%%%%%%%%%%%%%%%%%%%%%%%%%%%%%%%%%%%%
  %     \subsection{Publication Strategy}\label{sec:precede:publication}

  %%%%%%%%%%%%%%%%%%%%%%%%%%%%%%%%%%%%%%%%%%%%%%%%%%%%%%%%%%%%%%%%%%%%%%%%%%%%%%%%
  %%%%%%%%%%%%%%%%%%%%%%%%%%%%%%%%%%%%%%%%%%%%%%%%%%%%%%%%%%%%%%%%%%%%%%%%%%%%%%%%
      \subsection{Ordering Protocol}\label{sec:precede:ordering}

\begin{algorithm}[t]
  \caption{PRECEDE Ordering Function}\label{alg:pwro}
  \begin{spacing}{1.2}
  \begin{algorithmic}[1]
    \Require{Transaction set $\alltx$, random beacon $Q$, exponent parameter $k$}
    \Ensure{Ordered sequence of transactions}
    \ForAll{$\tx \in \alltx$ \textbf{in parallel}}
      \State $r(\tx) \gets H(\tx \mathbin{\|} Q)$ \Comment{Random value in $[0,1]$}
      \State $s(\tx) \gets r(\tx)^{1/{b^k(\tx)}}$ \Comment{set $s(\tx) = 0$ if $b(\tx) = 0$}
    \EndFor
    \State $\pi \gets$ sequence of transactions in $\alltx$ sorted by $s(\cdot)$ in descending order
    \State \Return{$\pi$}
  \end{algorithmic}
  \end{spacing}
\end{algorithm}

  Recall that an ordering protocol runs in two phases: a collection phase, in which it gathers every transaction it receives into the set $\alltx$, and an ordering phase, in which it applies an \emph{ordering function} $f$ to $\alltx$ and the random beacon $Q$ to produce the order $f(\alltx, Q)$.

  PRECEDE's ordering function $f$ realizes a \emph{weighted ordering}: It gives each transaction a weight and orders one transaction ahead of another with probability proportional to its weight.
  PRECEDE uses a power-weight function, where $\tx$ has weight~$w(b(\tx)) = b^k(\tx)$, and $k>1$ is the \emph{exponent parameter} of the protocol.
  For two transactions $\tx_1$ and~$\tx_2$, we say $\tx_1 \prec_f^Q \tx_2$ if $f$ orders~$\tx_1$ before~$\tx_2$ given~$Q$.
  By the definition of a weighted ordering, 
  \begin{equation}\label{eq:pwro:pairwise}
    \Pr[\tx_1 \prec_f^Q \tx_2] = \frac{w(b(\tx_1))}{w(b(\tx_1)) + w(b(\tx_2))}.
  \end{equation}

  However, to serve as an ordering function, $f$ should also satisfy three further properties.
  First, $f$ must produce a total order, not just the pairwise outcomes above: Realizing the pairwise probabilities independently could create a cycle such as $\tx_1 \prec_f^Q \tx_2 \prec_f^Q \tx_3 \prec_f^Q \tx_1$, which no ordering satisfies.
  Second, $\alltx$ holds more than the transactions competing for one reward; it also contains transactions competing for other rewards, and transactions that do not compete at all, and these must not affect the order among the transactions competing for the same reward.
  Third, $f$ must be efficient to compute.

  We meet all three with a score-based method~\cite{efraimidis2006weighted} that scores all transactions in parallel and then sorts once.
  Denote by $H(\cdot)$ a random oracle that maps any input to a value uniformly distributed in the range $[0,1]$ and by~$\mathbin{\|}$ the concatenation operator.
  For each transaction $\tx \in \alltx$ we draw a random value ${r(\tx) = H(\tx \mathbin{\|} Q)}$ and set its score ${s(\tx) = r(\tx)^{1/{w({b(\tx)})}}}$.
  The ordering function orders the transactions by descending score: $\tx_1 \prec_f^Q \tx_2$ if and only if $s(\tx_1) > s(\tx_2)$.
  Algorithm~\ref{alg:pwro} states $f$ in full.

  This method realizes the pairwise probability (Equation~\ref{eq:pwro:pairwise})~\cite{efraimidis2006weighted}, and it meets the three properties.
  Sorting by score yields a single total order, and scoring in parallel before sorting once is efficient.
  Moreover, PRECEDE scores each transaction by $s(\tx)$, which depends only on the bid of transaction $\tx$ and the beacon $Q$, not on any other transaction.
  Hence the relative order among the transactions competing for one reward is independent of all other transactions, exactly as if they were ordered alone, so each competition can be analyzed in isolation.
  Restricting our model to the competition for a single reward is therefore without loss of generality.

  %%%%%%%%%%%%%%%%%%%%%%%%%%%%%%%%%%%%%%%%%%%%%%%%%%%%%%%%%%%%%%%%%%%%%%%%%%%%%%%%
  %%%%%%%%%%%%%%%%%%%%%%%%%%%%%%%%%%%%%%%%%%%%%%%%%%%%%%%%%%%%%%%%%%%%%%%%%%%%%%%%
      \subsection{Deterrence Strategy}\label{sec:publication:deterrence}
  By definition~(\S\ref{sec:model:execution}), the user is the participant who first identifies the reward, and hence the first to publish; an attacker is any participant who later observes her bid and may enter to compete.
  We realize entry deterrence as a publication strategy: The user publishes a single bid large enough so no attacker can profit by competing.
  We compute the attacker's revenue~(\S\ref{sec:publication:deterrence:payoff}), derive the bid that deters entry~(\S\ref{sec:publication:deterrence:deterrence}), and provide the user's strategy based on it~(\S\ref{sec:publication:deterrence:protocol}).

  \subsubsection{Revenue of the attacker}\label{sec:publication:deterrence:payoff}
  We consider the case in which an attacker, who has not yet published any transaction, has observed a nonempty set of transactions~$\txset$ from the user, and now publishes a single transaction to compete for the reward.
  Denote the total weight of the set~$\txset$ by:
  \begin{equation*}
  W = \sum_{\tx \in \txset} b^k(\tx) > 0 .
  \end{equation*}
  The attacker considers entering with a single transaction $\tx_A$ of bid $b(\tx_A) > 0$, hereafter simply $b$.
  Her transaction has weight $b^{k}$, so PRECEDE places it first with probability ${q = b^{k}/(b^{k} + W)}$ (Equation~\ref{eq:pwro:pairwise}).
  Under the payment rule, winning yields $R - b$ (collect $R$, pay the full bid $b$) and losing yields $-\gamma b$. 
  Her expected revenue is therefore
  \begin{align}\label{eq:payoff:u}
      u(b; W)
      &\;=\; q\,(R - b) \;-\; (1 - q)\,\gamma\,b \nonumber\\
      &\;=\; \frac{b^{k}(R - b) \;-\; \gamma\,b\,W}{b^{k} + W} \nonumber\\
      &\;=\; \frac{b\,\bigl(R\,b^{k-1} - b^{k} - \gamma\,W\bigr)}{b^{k} + W}.
  \end{align}

  \subsubsection{The deterrence weight and bid}\label{sec:publication:deterrence:deterrence}

  We now find the minimal weight $W$ of observed transactions that deters the attacker from entry.
  The attacker earns revenue $u(b; W)$ from a bid $b > 0$ and $0$ from abstaining, so entry is unprofitable exactly when $u(b; W) \le 0$ for all $b > 0$. 
  The following lemma gives the exact threshold of $W$ at which this condition holds.
  \begin{lemma}[Deterrence weight]\label{lem:2sgame:deterrence}
    Fix the exponent parameter $k > 1$ and the losing-fee rate $\gamma > 0$, and let $W$ be the total weight of transactions the attacker observes.
    Then the attacker's expected revenue with any single bid $b > 0$ satisfies $ u(b; W) \le 0 $ if and only if
    \begin{equation*}
        W \;\ge\; \frac{(k-1)^{k-1}}{\gamma\,k^{k}}\,R^{k}.
    \end{equation*}
  \end{lemma}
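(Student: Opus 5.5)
Since the denominator $b^k + W$ in Equation~\eqref{eq:payoff:u} is strictly positive and $b > 0$, the sign of $u(b;W)$ equals the sign of
\begin{equation*}
g(b) \;=\; R\,b^{k-1} - b^{k} - \gamma W .
\end{equation*}
So $u(b;W) \le 0$ for all $b>0$ exactly when $\gamma W \ge h(b) := R\,b^{k-1} - b^{k}$ for every $b>0$, i.e., when $\gamma W \ge \sup_{b>0} h(b)$. The plan is to compute this supremum in closed form and read off the threshold.

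To maximize $h$ on $(0,\infty)$, I differentiate:
\begin{equation*}
h'(b) = b^{k-2}\bigl((k-1)R - k b\bigr).
\end{equation*}
Because $k>1$, the factor $b^{k-2}$ is positive on $b>0$. Hence $h$ is increasing on $(0, b^\ast)$ and decreasing on $(b^\ast,\infty)$, where $b^\ast = (k-1)R/k$. Moreover $h(b)\to 0$ as $b\to 0^+$ and $h(b)\to-\infty$ as $b\to\infty$, so the supremum is attained at $b^\ast$ and is a genuine maximum. Substituting $b^\ast$ gives
\begin{equation*}
h(b^\ast) = (b^\ast)^{k-1}\bigl(R - b^\ast\bigr) = \frac{(k-1)^{k-1}R^{k-1}}{k^{k-1}}\cdot\frac{R}{k} = \frac{(k-1)^{k-1}}{k^{k}}R^{k}.
\end{equation*}

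Both directions of the equivalence then follow. If $W \ge (k-1)^{k-1}R^k/(\gamma k^k)$, then $g(b) \le h(b^\ast) - \gamma W \le 0$ for all $b>0$. Conversely, if $W$ is strictly below the threshold, the attacker can bid $b = b^\ast$, which gives $g(b^\ast) > 0$ and hence $u(b^\ast;W) > 0$.

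No step is a serious obstacle. The only points needing care are (i) the sign reduction, which uses the strict positivity of $b$ and of $b^k+W$, and (ii) checking that the interior critical point is the global maximum. The latter relies on $k>1$, so that $b^\ast>0$ and the single sign change of $h'$ at $b^\ast$ is established. Everything else is routine algebra.
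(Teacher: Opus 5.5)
Your proof is correct and follows the paper's argument essentially step for step. Like the paper, you reduce the sign of $u(b;W)$ to that of $R b^{k-1} - b^k - \gamma W$, maximize $R b^{k-1} - b^k$ at $b^\ast = \frac{k-1}{k}R$ using the single sign change of its derivative, compare $\gamma W$ with the maximum value $\frac{(k-1)^{k-1}}{k^k}R^k$, and use $b = b^\ast$ as the explicit witness for the converse, which the paper's chain of equivalences covers implicitly.
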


  Intuitively, since the attacker's revenue $u(b; W)$ is positive if and only if $R\,b^{k-1} - b^{k} - \gamma\,W > 0$ (Equation~\ref{eq:payoff:u}), we then calculate the minimum $W$ that makes this expression non-positive for all $b > 0$.
  \begin{showWhenSubmit}
    The proof is in the full version of this report~\cite{artifact}.
  \end{showWhenSubmit}
  \proofloc{app:proof:2sgame-deterrence}

  We name the threshold of the weight $W$ that deters entry the \emph{deterrence weight}, denoted:
  \begin{equation*}
    W^{k,\gamma}_{\dtr} \;\coloneqq\; \frac{(k-1)^{k-1}}{\gamma\,k^{k}}\,R^{k} .
  \end{equation*}

  By Lemma~\ref{lem:2sgame:deterrence}, a user who wants to deter all entry with a single bid must give that bid weight at least $W^{k,\gamma}_{\dtr}$. The smallest such bid, which we call the \emph{deterrence bid} $b^{k,\gamma}_{\dtr}$, has weight exactly $W^{k,\gamma}_{\dtr}$:
  \begin{equation}\label{eq:2sgame:b1-det}
    b^{k,\gamma}_{\dtr} \;\coloneqq\; \bigl(W^{k,\gamma}_{\dtr}\bigr)^{1/k} \;=\; R\left(\frac{(k-1)^{k-1}}{\gamma\,k^{k}}\right)^{1/k} .
  \end{equation}

  This bid achieves entry deterrence by making any attacker's transaction unprofitable.
  \begin{theorem}[Entry deterrence]\label{thm:2sgame:deterrence}
    Fix the exponent parameter $k > 1$ and the losing-fee rate $\gamma > 0$, and suppose the user publishes the deterrence bid $b^{k,\gamma}_{\dtr}$.
    Then the attacker's expected revenue with any single bid $b > 0$ is non-positive, i.e., $u(b; W^{k,\gamma}_{\dtr}) \le 0$.
  \end{theorem}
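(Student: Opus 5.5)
This is essentially a corollary of Lemma~\ref{lem:2sgame:deterrence}, and I would prove it by invoking that lemma with the weight the user's bid creates.

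First, I would identify the observed weight. When the user publishes the single transaction with bid $b^{k,\gamma}_{\dtr}$, the attacker observes the set $\txset$ consisting of that transaction alone. By Equation~\ref{eq:2sgame:b1-det}, its weight is
\[
W = \bigl(b^{k,\gamma}_{\dtr}\bigr)^k = W^{k,\gamma}_{\dtr} = \frac{(k-1)^{k-1}}{\gamma\,k^{k}}\,R^{k}.
\]
The threshold inequality of Lemma~\ref{lem:2sgame:deterrence} therefore holds with equality. The "if" direction of the lemma then gives $u(b; W^{k,\gamma}_{\dtr}) \le 0$ for every $b>0$, which is the claim.

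For a self-contained argument, or in case the lemma's "if" direction needs to be re-derived, I would work from Equation~\ref{eq:payoff:u}. The factor $b/(b^k+W)$ is positive, so the sign of $u$ equals the sign of
\[
g(b) = R\,b^{k-1} - b^k - \gamma W.
\]
Next, I would maximize $h(b) = R b^{k-1} - b^k$ over $b>0$. Its derivative is
\[
h'(b) = b^{k-2}\bigl((k-1)R - k b\bigr),
\]
which is positive below $b^* = (k-1)R/k$ and negative above it. Since $k>1$, $b^*$ is the unique global maximizer on $(0,\infty)$. The maximum value is
\[
h(b^*) = b^{*\,k-1}\bigl(R - b^*\bigr) = \frac{(k-1)^{k-1}}{k^{k-1}}R^{k-1}\cdot\frac{R}{k} = \frac{(k-1)^{k-1}}{k^k}R^k.
\]
This equals $\gamma W^{k,\gamma}_{\dtr}$. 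Hence $g(b) \le 0$ for all $b>0$ when $W = W^{k,\gamma}_{\dtr}$, so $u(b; W^{k,\gamma}_{\dtr}) \le 0$.

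I expect no real obstacle. The only points needing care are the sign analysis of $h'$, which relies on $k>1$ to make $b^{k-2}>0$ and to place $b^*$ inside $(0,\infty)$, and the algebraic simplification of $h(b^*)$.
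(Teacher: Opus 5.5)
Your proposal is correct and matches the paper's argument. The deterrence bid has weight exactly $W^{k,\gamma}_{\dtr}$, so the ``if'' direction of Lemma~\ref{lem:2sgame:deterrence} applies at equality, and your self-contained backup reproduces the appendix proof of that lemma (one small nit: $b^{k-2}>0$ holds for every $b>0$ regardless of $k$, and $k>1$ is needed only to place the maximizer $(k-1)R/k$ inside $(0,\infty)$).
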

  This follows directly from Lemma~\ref{lem:2sgame:deterrence}: The deterrence bid has weight $\bigl(b^{k,\gamma}_{\dtr}\bigr)^{k} = W^{k,\gamma}_{\dtr}$ (Equation~\ref{eq:2sgame:b1-det}), so the attacker observes total weight $W = W^{k,\gamma}_{\dtr}$, at which $u(b; W^{k,\gamma}_{\dtr}) \le 0$ for all $b > 0$.

  \subsubsection{Strategy}\label{sec:publication:deterrence:protocol}

  We give a single strategy that every participant $P$ runs; what she observes, not her role, determines her action.
  The bound of Lemma~\ref{lem:2sgame:deterrence} depends only on the weight~$P$ observes, not on who produced it, so one deterrence bid deters every later participant.
  Therefore, the strategy prescribes $P$'s action in three cases.
  If she has not yet observed any transaction, meaning she is the first to publish a transaction to her knowledge, she publishes the deterrence bid $b^{k,\gamma}_{\dtr}$.
  If instead she observes a set of transactions whose total weight already deters entry, that is $W \ge W^{k,\gamma}_{\dtr}$, she publishes nothing.
  In the remaining case, $P$ publishes nothing; the user, being the first to publish, never reaches this case, so the choice is immaterial and may be replaced by any other action.
  Algorithm~\ref{alg:pub} states this strategy $\pubprotocol$, which $P$ runs once, at the first time $t \in [0, T]$ she learns of the reward $R$.
  At every other time in the window she does nothing, leaving any bid she has published unchanged.

  \begin{algorithm}[t]
    \caption{Deterrence Publication Strategy $\pubprotocol$ for $P$}\label{alg:pub}
    \begin{spacing}{1.2}
    \begin{algorithmic}[1]
      \Require{reward $R$, losing-fee rate $\gamma$, exponent parameter $k$, the set $\txset$ of transactions $P$ has received}
      \State $W \gets \sum_{\tx \in \txset} b^{k}(\tx)$ \Comment{total weight of transactions}
      \If{$\txset = \emptyset$}  \Comment{$P$ has not observed any transaction}
        \State publish a transaction with bid
        \Statex \hspace{\algorithmicindent}$b^{k,\gamma}_{\dtr} = R\bigl(\tfrac{(k-1)^{k-1}}{\gamma\,k^{k}}\bigr)^{1/k}$ \Statex \Comment{publish the deterrence bid}
      \ElsIf{$W \ge W^{k,\gamma}_{\dtr}$}\Comment{entry is unprofitable}
        \State publish nothing
      \Else\Comment{the choice here is immaterial}
        \State publish nothing
      \EndIf
    \end{algorithmic}
    \end{spacing}
  \end{algorithm}

  %%%%%%%%%%%%%%%%%%%%%%%%%%%%%%%%%%%%%%%%%%%%%%%%%%%%%%%%%%%%%%%%%%%%%%%%%%%%%%%%
  %%%%%%%%%%%%%%%%%%%%%%%%%%%%%%%%%%%%%%%%%%%%%%%%%%%%%%%%%%%%%%%%%%%%%%%%%%%%%%%%
      \subsection{Anti-Spam Requirements}\label{sec:precede:spam}

  We show that PRECEDE satisfies Anti-Spam.
  We consider any number of participants and replace one participant's transactions by a single bid of the same weight~(\S\ref{sec:precede:spam:setup}). 
  We analyze the resulting revenues~(\S\ref{sec:precede:spam:revenue}), show that the replacement leaves every other participant's revenue unchanged~(\S\ref{sec:precede:spam:others}), and show that the replaced participant is not worse off if $k$ is no less than a threshold that depends on $\gamma$~(\S\ref{sec:precede:spam:bound}).

  \subsubsection{Setup}\label{sec:precede:spam:setup}

  Anti-Spam is a per-participant property: Whether a rational participant gains by publishing multiple transactions does not depend on her role as user or attacker.
  We therefore drop this distinction and consider a general setting with $n$ participants $\psets = \{P_1, \ldots, P_n\}$, which is both more general, covering any number of participants, and stronger, since the guarantee holds for every one of them.
  By time $T$, $\alltx$ collects, for all $1 \leq i \leq n$, $P_i$'s set of~$m_i$ transactions with their final bids; we denote the $m_i$ bids of~$P_i$ by a vector $\bidvector_i = (b_i^{(1)}, \ldots, b_i^{(m_i)})$ of length $m_i \geq 0$, with $b_i^{(j)} > 0$ for all $1 \leq j \leq m_i$.
  The empty vector $\bidvector_i = ()$ denotes abstention.

  Without loss of generality, we assume $P_1$ publishes more than one transaction, i.e., $m_1 > 1$; the total weight of her transactions is $W_1 = \sum_{j=1}^{m_1} (b_1^{(j)})^k$.
  Denote by $\bar b_1$ the single bid that has the same total weight:
  \begin{equation}\label{eq:sybil:barb}
  \bar b_1
  \coloneqq
  \left(\sum_{j=1}^{m_1}(b_1^{(j)})^k\right)^{1/k}.
  \end{equation}

  \subsubsection{Revenue}\label{sec:precede:spam:revenue}

  We analyze each participant's revenue in two scenarios: the original one, in which $P_1$ publishes~$\bidvector_1$, and an alternative one, which is identical except that $P_1$ publishes the single bid $\bar b_1$ instead. 
  We detail the notation for the original scenario; the alternative scenario is obtained by substituting $(\bar b_1)$ for $\bidvector_1$ throughout.
  The participants' bids excluding $P_1$'s, $\bidvector_{-1} = (\bidvector_2, \ldots, \bidvector_n)$, are fixed.
  Denote by~$W(\bidvector_1, \bidvector_{-1})$ the total weight of all transactions:
  \begin{equation*}
    W(\bidvector_1, \bidvector_{-1})\;=\; \sum_{i=1}^n \sum_{j=1}^{m_i} (b_i^{(j)})^k .
  \end{equation*}
  Since the replacement bid $\bar b_1$ has the same weight as $\bidvector_1$ (Equation~\ref{eq:sybil:barb}), the total weight is the same in both scenarios:
  \begin{equation*}
    W(\bidvector_1, \bidvector_{-1}) = W((\bar b_1), \bidvector_{-1}).
  \end{equation*}

  Denote by $q(b, \bidvector_{-1}, \bidvector_1)$ the probability that a transaction of bid $b$ is ordered first:
  \begin{equation}\label{eq:sybil:q}
    q(b, \bidvector_{-1}, \bidvector_1) \;=\; \frac{b^k}{W(\bidvector_1, \bidvector_{-1})}.
  \end{equation}

  Denote by $C_i(\bidvector_1, \bidvector_{-1})$ the expected total payment of participant $P_i$.
  Recall that a transaction pays the ratio $\gamma$ of its bid whether or not it is ordered first, and the one ordered first pays its bid in full, that is, an additional $(1-\gamma)$ of it. 
  So a transaction of bid $b_i^{(j)}$ pays $\gamma\,b_i^{(j)}$ for sure, plus the extra~$(1-\gamma)\,b_i^{(j)}$ with probability $q(b_i^{(j)}, \bidvector_{-1}, \bidvector_1)$. Summing this expected payment over all of $P_i$'s transactions,
  \begin{equation}\label{eq:sybil:Ci}
    C_i(\bidvector_1, \bidvector_{-1}) \;=\; \sum_{j=1}^{m_i}\Bigl(\gamma\,b_i^{(j)} \;+\; (1-\gamma)\,q(b_i^{(j)}, \bidvector_{-1}, \bidvector_1)\,b_i^{(j)}\Bigr).
  \end{equation}

  The revenue $u_i(\bidvector_1, \bidvector_{-1})$ of $P_i$ is the expected reward from winning minus the expected payment
  \begin{equation}\label{eq:sybil:utility}
    u_i(\bidvector_1, \bidvector_{-1}) \;=\; R\sum_{j=1}^{m_i} q(b_i^{(j)}, \bidvector_{-1}, \bidvector_1) \;-\; C_i(\bidvector_1, \bidvector_{-1}).
  \end{equation}

\subsubsection{Other participants are unaffected}\label{sec:precede:spam:others}
  Replacing $P_1$'s bid vector $\bidvector_1$ with the single bid $(\bar b_1)$ preserves the total weight, since $\bar b_1^k = \sum_{j=1}^{m_1}(b_1^{(j)})^k$ (Equation~\ref{eq:sybil:barb}). Every other participant's winning probability and payment depend on~$\bidvector_1$ only through the total weight (Equations~\ref{eq:sybil:q}--\ref{eq:sybil:utility}), so both, and hence her revenue, are unchanged, implying the following proposition.
  \begin{proposition}[Merging is externality-free]\label{cor:sybil:others}
    For every $i \neq 1$, replacing $P_1$'s bid vector $\bidvector_1$ with the single bid $(\bar b_1)$ leaves participant $P_i$'s revenue unchanged,
    \begin{equation*}
      u_i\bigl((\bar b_1), \bidvector_{-1}\bigr) = u_i(\bidvector_1, \bidvector_{-1}).
    \end{equation*}
  \end{proposition}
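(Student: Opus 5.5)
The plan is a direct substitution argument based on Equations~\ref{eq:sybil:q}--\ref{eq:sybil:utility}. Fix any $i \neq 1$, so that $P_i$'s bid vector $\bidvector_i = (b_i^{(1)}, \ldots, b_i^{(m_i)})$ is part of $\bidvector_{-1}$ and is the same in both scenarios. If $m_i = 0$, both sums in Equation~\ref{eq:sybil:utility} are empty and the revenue is $0$ in both scenarios. So assume $m_i \ge 1$.

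The first step is to show that the total weight is invariant under the replacement. By Equation~\ref{eq:sybil:barb}, $\bar b_1^k = \sum_{j=1}^{m_1}(b_1^{(j)})^k$. Therefore $P_1$'s contribution to $W(\cdot,\bidvector_{-1})$ is the same in both scenarios, and the contributions of all other participants are literally identical terms. This gives
\[
W\bigl((\bar b_1), \bidvector_{-1}\bigr) = W(\bidvector_1, \bidvector_{-1}),
\]
as already noted in the setup.

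The second step is to conclude that every winning probability of $P_i$ is unchanged. For each $j$, Equation~\ref{eq:sybil:q} gives
\[
q\bigl(b_i^{(j)}, \bidvector_{-1}, (\bar b_1)\bigr) = \frac{(b_i^{(j)})^k}{W((\bar b_1), \bidvector_{-1})} = \frac{(b_i^{(j)})^k}{W(\bidvector_1, \bidvector_{-1})} = q\bigl(b_i^{(j)}, \bidvector_{-1}, \bidvector_1\bigr).
\]
Here I would briefly justify that Equation~\ref{eq:sybil:q} is indeed the first-place probability under the score method of Algorithm~\ref{alg:pwro}. The scores $s(\tx) = r^{1/w}$ with $r$ uniform satisfy $\Pr[s \le x] = x^{w}$, so $\max$ scores combine multiplicatively in the weights. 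The probability that a given transaction has the largest score is then $w/\sum w$, which is the standard Efraimidis--Spirakis fact the paper cites. It depends on the other transactions only through their total weight.

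The third step substitutes these equalities term by term into Equation~\ref{eq:sybil:Ci}. The terms $\gamma b_i^{(j)}$ do not depend on $\bidvector_1$, and the $q$ factors are equal by the previous step, so $C_i$ is unchanged. Substituting into Equation~\ref{eq:sybil:utility} then shows that both the expected reward $R\sum_j q(\cdot)$ and the expected payment agree, which yields the claimed equality of $u_i$.

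There is no real obstacle here. The only point needing care is the one flagged in the second step: the first-place probability among many transactions must be the full-set ratio $w/\sum w$, not merely the pairwise probability of Equation~\ref{eq:pwro:pairwise}. Since Equation~\ref{eq:sybil:q} is stated in the paper, I would take it as given.
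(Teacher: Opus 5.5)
Your proposal is correct and follows the paper's argument: the replacement preserves the total weight $W$, and $P_i$'s winning probabilities, payments, and hence revenue depend on $\bidvector_1$ only through $W$ (Equations~\ref{eq:sybil:q}--\ref{eq:sybil:utility}). Your extra check that the score method yields the first-place probability $w/\sum w$ is valid, but the paper simply takes it as given.
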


\subsubsection{Lower bound for the exponent $k$}\label{sec:precede:spam:bound}

Intuitively, the larger $k$ is, the more weight a higher bid carries, so the single bid $\bar b_1$ (Equation~\ref{eq:sybil:barb}) that matches the total weight of $P_1$'s bids can be much smaller than their sum.
Since each transaction pays a fee proportional to its bid, a large enough $k$ makes merging profitable for $P_1$.
We thereby provide a tight lower bound on $k$ that guarantees Anti-Spam.

\begin{theorem}[Anti-Spam Threshold]\label{thm:sybil:tight}
    Fix the exponent parameter $k>1$ and the losing-fee rate $\gamma>0$.
    For every bid vector $\bidvector_1$ of $P_1$ and every profile $\bidvector_{-1}$ of the other participants' bids, replacing $\bidvector_1$ with the single bid $(\bar b_1)$ does not decrease $P_1$'s revenue,
    \begin{equation*}
      u_1\bigl((\bar b_1), \bidvector_{-1}\bigr) \;\geq\; u_1(\bidvector_1, \bidvector_{-1}),
    \end{equation*}
    if and only if
    \begin{equation*}
      k \;\geq\; \frac{\ln 2}{\ln(1+\gamma)} .
    \end{equation*}
\end{theorem}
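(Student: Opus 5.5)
The plan is to compare the two revenues term by term, reduce the comparison to an inequality about $P_1$'s expected payment only, and then reduce that to a one-variable inequality. The necessity direction falls out of the equal-split case.

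\emph{Step 1 (reward terms agree).} By Equation~\ref{eq:sybil:q}, $\sum_j q(b_1^{(j)},\bidvector_{-1},\bidvector_1)=W_1/W=q(\bar b_1,\bidvector_{-1},(\bar b_1))$. Here $W$ is the same in both scenarios (\S\ref{sec:precede:spam:revenue}). So the expected reward $R\sum_j q$ is unchanged, and by Equation~\ref{eq:sybil:utility} the claim is equivalent to $C_1((\bar b_1),\bidvector_{-1})\le C_1(\bidvector_1,\bidvector_{-1})$. By Equation~\ref{eq:sybil:Ci} this reads
\begin{equation*}
\gamma\Bigl(\sum_j b_1^{(j)}-\bar b_1\Bigr)\;\ge\;\frac{1-\gamma}{W}\Bigl(\bar b_1^{\,k+1}-\sum_j (b_1^{(j)})^{k+1}\Bigr).
\end{equation*}

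\emph{Step 2 (easy regime and worst case).} Since $k\ge1$, we have $\sum_j b_1^{(j)}\ge\bar b_1$; this is the comparison of the $\ell_1$ and $\ell_k$ norms. Superadditivity of $x\mapsto x^{(k+1)/k}$ gives $\bar b_1^{\,k+1}=(\sum_j (b_1^{(j)})^k)^{(k+1)/k}\ge\sum_j (b_1^{(j)})^{k+1}$. Hence the left side is $\ge0$ and the bracket on the right is $\ge0$.
\begin{itemize}
\item If $\gamma\ge1$, the right side is $\le0$, so the inequality always holds. This is consistent with the theorem, because then $\ln2/\ln(1+\gamma)\le1<k$.
\item If $\gamma<1$, the right side is decreasing in $W$, and $W\ge W_1$. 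So the worst case is $W=W_1$, i.e.\ all others abstain, and the condition must hold there for every vector.
\end{itemize}
Normalizing $\bar b_1=1$ and writing $x_j=(b_1^{(j)})^k$, so that $\sum_j x_j=1$, the condition becomes
\begin{equation*}
\sum_j g(x_j)\ge 1,\qquad g(x)=\gamma x^{1/k}+(1-\gamma)x^{1+1/k}.
\end{equation*}

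\emph{Step 3 (necessity).} Take two equal bids, $x_1=x_2=1/2$, with no other participants. Then $2g(1/2)=2^{-1/k}(1+\gamma)$, which is $\ge1$ exactly when $k\ge\ln2/\ln(1+\gamma)$. If $k$ is below the threshold, this profile is a strict counterexample.

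\emph{Step 4 (sufficiency).} I would reduce to pairs by induction on $m_1$. Merging any two of $P_1$'s transactions into one of equal total weight leaves $W$, the reward term, and all other transactions' payments unchanged. So it suffices to show that a single pairwise merge never increases payment; repeated merges then reach $(\bar b_1)$.

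For a pair with weights $a,c$, the same worst-case argument as in Step 2 sets $W=a+c$. Normalizing $a+c=1$ leaves the one-variable claim
\begin{equation*}
\varphi(x)=g(x)+g(1-x)\ge1\quad\text{on }[0,1].
\end{equation*}
We already know $\varphi(0)=\varphi(1)=1$, $\varphi$ is symmetric about $1/2$, and $\varphi(1/2)\ge1$ by Step 3.

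The main obstacle is ruling out an interior dip of $\varphi$ below $1$ under only the assumption $(1+\gamma)^k\ge2$. The difficulty is that $g$ mixes a concave term $x^{1/k}$ with a convex term $x^{1+1/k}$, so neither convexity nor concavity of $\varphi$ is immediate. I would first try the substitution $x=t^k/(t^k+1)$, i.e.\ ratio $t=b/b'$ of the two bids, and show that the function $t\mapsto(\text{payment difference})$ has at most one sign change on $(0,1]$. A sign-change count near $t\to0$ (where $\varphi\to1^+$ thanks to the $\gamma x^{1/k}$ term) combined with $\varphi(1/2)\ge1$ would then give $\varphi\ge1$ throughout. If this fails, the fallback is to bound $\varphi$ from below by interpolating between its behavior near $0$ and the midpoint via Bernoulli-type estimates on $(1+\gamma)^k$.
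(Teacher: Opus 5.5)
\paragraph{Where the proposal stands.} Your reduction is correct and follows the paper's proof closely. The reward terms cancel, so only payments matter. The case $\gamma\ge 1$ is trivial. For $\gamma<1$ the worst case is that nobody else bids, and the equal two-way split gives necessity. Your pairwise merge with worst case $W=a+c$ is the paper's merging inequality $f(x)+f(y)\ge f(x+y)$; there the factor $t=x+y\le 1$ plays the role of your $W\ge a+c$.

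\paragraph{The gap.} All of the sufficiency direction rests on the binary claim $\varphi(x)\ge 1$ on $[0,1]$ under only $(1+\gamma)^k\ge 2$. You leave this as something you ``would try''. The single-sign-change property is not established, and it is no easier than the inequality itself. At the boundary $\gamma=2^{1/k}-1$ you have $\varphi(1/2)=1$ exactly. There, ``positive near $0$, nonnegative at $1/2$, at most one sign change'' does not exclude a dip, so you would also need a continuity argument in $\gamma$. The fallback via Bernoulli-type estimates is too vague to assess. As written, your Step~4 is incomplete and the sufficiency of the threshold is unproved.

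\paragraph{The missing idea.} Set $a=1/k$, $F_2(x)=x^a+(1-x)^a-1>0$ and $G_2(x)=1-x^{1+a}-(1-x)^{1+a}>0$. Factor
\[
\varphi(x)-1=F_2(x)\bigl(\gamma-(1-\gamma)\,G_2(x)/F_2(x)\bigr).
\]
Then prove that $G_2/F_2$ is increasing on $(0,1/2]$. Once that holds, the bracket is decreasing in $x$. Its value at $x=1/2$ is nonnegative exactly when $(1+\gamma)^k\ge 2$, so it is nonnegative throughout. This gives your single sign change and the boundary case at once.

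\paragraph{How the paper proves the monotonicity.} Both $F_2$ and $G_2$ vanish at $0$, so the paper applies the monotone form of l'H\^opital's rule. This reduces the claim to monotonicity of $G_2'/F_2'$. In the variable $r=x/(1-x)$, that ratio is $\tfrac{1+a}{a}h(r)$ with $h(r)=\frac{1-r^a}{(1+r)(r^{a-1}-1)}$. The sign of $h'$ is that of $\psi(r)=1-r^{2a-2}+(1-a)r^{a-2}(1-r^2)$. Positivity of $\psi$ on $(0,1)$ follows from two facts:
\begin{itemize}
\item $\psi(1)=0$ and $\psi'=(1-a)r^{a-3}\chi(r)$, where $\chi(r)=2r^a-ar^2+a-2$;
\item $\chi(1)=0$ and $\chi'>0$ on $(0,1)$, so $\chi<0$ there, which makes $\psi$ decreasing down to $\psi(1)=0$.
\end{itemize}
Without this argument or an equivalent one, your proof establishes only the necessity direction.
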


% \begin{theorem}[Anti-Spam Threshold]\label{thm:sybil:tight}
% Fix $\gamma\in(0,1]$ and $k>1$. For every nonempty bid vector $\bidvector_1$ of $P_1$ and every profile $\bidvector_{-1}$ of the other participants' bids, let
% \[
%     \bar b_1
%     =
%     \left(\sum_{j=1}^{m_1} (b_1^{(j)})^k\right)^{1/k}
% \]
% be the single bid with the same PRECEDE weight as $\bidvector_1$. Then replacing $\bidvector_1$ by $(\bar b_1)$ does not decrease $P_1$'s revenue,
% \begin{equation*}
%     u_1\bigl((\bar b_1),\bidvector_{-1}\bigr)
%     \ge
%     u_1(\bidvector_1,\bidvector_{-1}),
% \end{equation*}
% for all such $\bidvector_1$ and $\bidvector_{-1}$ if and only if
% \begin{equation}\label{eq:sybil:tight-threshold}
%     k
%     \ge
%     \frac{\ln 2}{\ln(1+\gamma)}.
% \end{equation}
% \end{theorem}

  To prove this, we note that replacing $P_1$'s bids with the single equal-weight bid leaves her winning probability, and hence her expected reward, unchanged, so it raises her revenue exactly when it lowers her expected cost.
  This is hardest when she uses two equal bids and no other weight competes, which gives the threshold $k \ge \ln 2/\ln(1+\gamma)$; being the hardest case, it makes the same threshold suffice for any number of bids. \proofloc{app:proof:sybil-tight}
  \begin{showWhenSubmit}
     The proof is in the full version of this report~\cite{artifact}.
  \end{showWhenSubmit}

  %%%%%%%%%%%%%%%%%%%%%%%%%%%%%%%%%%%%%%%%%%%%%%%%%%%%%%%%%%%%%%%%%%%%%%%%%%%%%%%%
  %%%%%%%%%%%%%%%%%%%%%%%%%%%%%%%%%%%%%%%%%%%%%%%%%%%%%%%%%%%%%%%%%%%%%%%%%%%%%%%%
      \subsection{Profitability of Deterrence}\label{sec:precede:profit}
  Profitability requires the user to obtain strictly positive expected revenue when following the deterrence strategy.
  Since the attacker would not compete, the user wins the reward $R$ and pays the deterrence bid~$b^{k,\gamma}_{\dtr}$.
  Denote by $u^{k,\gamma}_{\dtr}$ the user's revenue under the deterrence strategy:
  \begin{equation*}
    u^{k,\gamma}_{\dtr} \coloneqq R - b^{k,\gamma}_{\dtr}= R\left(1 - \left(\frac{(k-1)^{k-1}}{\gamma\,k^{k}}\right)^{1/k}\right).
  \end{equation*}

  We now show that whenever PRECEDE meets the Anti-Spam threshold of Theorem~\ref{thm:sybil:tight}, it also satisfies Profitability.

\begin{proposition}[Deterrence is profitable]\label{prop:det-profitable}
If
$
    k \ge \frac{\ln 2}{\ln(1+\gamma)},
$
then the deterrence strategy results in strictly positive revenue, namely
$
    u_{\dtr}^{k,\gamma}>0.
$
\end{proposition}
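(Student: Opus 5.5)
We need $b_{\dtr}^{k,\gamma}<R$, that is, $(k-1)^{k-1}/(\gamma k^k)<1$. Equivalently,
\begin{equation*}
  \gamma \;>\; \frac{(k-1)^{k-1}}{k^{k}} \;=\; \frac{1}{k}\Bigl(1-\frac{1}{k}\Bigr)^{k-1}.
\end{equation*}
The plan is to turn the anti-spam hypothesis into a lower bound on $\gamma$ in terms of $k$, and then compare that bound with the right-hand side above.

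\emph{Step 1: from the hypothesis to a bound on $\gamma$.} Since $k>1>0$ and $\ln(1+\gamma)>0$, the hypothesis $k\ge \ln 2/\ln(1+\gamma)$ rearranges to $(1+\gamma)^k\ge 2$. Hence
\begin{equation*}
  \gamma \;\ge\; 2^{1/k}-1 \;=\; e^{(\ln 2)/k}-1 \;\ge\; \frac{\ln 2}{k},
\end{equation*}
using $e^x\ge 1+x$.

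\emph{Step 2: bound the target.} It now suffices to show $\ln 2 > (1-1/k)^{k-1}$ for all $k>1$. This is the main obstacle, because $(1-1/k)^{k-1}$ decreases toward $1/e\approx 0.368$ and approaches $1$ as $k\to 1^+$. The inequality therefore fails for $k$ near $1$, and the crude bound $e^x-1\ge x$ is too weak there.

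\emph{Step 3: handle small $k$ separately.} The hypothesis gives no lower bound on $k$ beyond $k>1$, since $\gamma$ can be large. To get around this, I would compare $\gamma$ with $g(k):=(k-1)^{k-1}/k^k$ directly. Set $x=1/k\in(0,1)$. Then
\begin{equation*}
  g = x(1-x)^{1/x-1}, \qquad 2^{x}-1 \;\text{is the lower bound on } \gamma.
\end{equation*}
So the goal becomes
\begin{equation*}
  2^x-1 \;>\; x(1-x)^{(1-x)/x} \quad\text{for } x\in(0,1).
\end{equation*}
Write $(1-x)^{(1-x)/x}=\exp\bigl(\tfrac{1-x}{x}\ln(1-x)\bigr)$. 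Because $\ln(1-x)\le -x$, this is at most $e^{-(1-x)}$. It therefore suffices to show
\begin{equation*}
  \frac{2^x-1}{x} \;>\; e^{x-1} \quad\text{on } (0,1).
\end{equation*}

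\emph{Step 4: verify the one-variable inequality.} At $x\to 0$ the left side tends to $\ln 2\approx 0.693$ and the right side to $e^{-1}\approx 0.368$. At $x=1$ both sides equal $1$. This endpoint equality is harmless, since it corresponds to $k=1$, which is excluded. I would prove the inequality by taking logarithms and setting
\begin{equation*}
  h(x) \;=\; \ln(2^x-1)-\ln x-(x-1).
\end{equation*}
The key check is that $h$ is strictly decreasing on $(0,1)$ with $h(1)=0$, which gives $h>0$ on $(0,1)$. I would establish this via convexity of $(2^x-1)/x$, or by comparing derivatives.

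Since Step 1 already gives $\gamma\ge 2^{1/k}-1$, the chain $\gamma\ge 2^x-1>x\,e^{x-1}\ge g$ yields $b_{\dtr}^{k,\gamma}<R$, hence $u_{\dtr}^{k,\gamma}>0$. Should the monotonicity of $h$ prove awkward, I would instead bound $2^x-1\ge x\ln 2+\tfrac{(x\ln 2)^2}{2}$ and compare it with a convexity bound on $e^{x-1}$ on two subintervals.
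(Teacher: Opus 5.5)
Your proposal is correct and follows essentially the paper's route: reduce to $\gamma > (k-1)^{k-1}/k^k$, use $\gamma \ge 2^{1/k}-1$, substitute $x=1/k$, bound $(1-x)^{(1-x)/x}\le e^{x-1}$ via $\ln(1-x)\le -x$, and prove $2^x-1 > x e^{x-1}$ on $(0,1)$. The only difference is the final check: the paper shows $2^y-1-ye^{y-1}$ is strictly concave with zeros at $y=0,1$, while your monotonicity of $h$ also works, since $h'(x)=\frac{\ln 2}{1-2^{-x}}-\frac{1}{x}-1<\ln 2-1<0$, where the first inequality is equivalent to $e^{u}>1+u$ with $u=x\ln 2$.
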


Intuitively, the deterrence strategy is profitable exactly when the deterrence bid is strictly smaller than the reward~$R$.
A larger exponent~$k$ pushes the deterrence bid below~$R$, so Profitability reduces to a lower bound on~$k$.
The Anti-Spam threshold from Theorem~\ref{thm:sybil:tight} imposes a stronger lower bound on $k$, and therefore automatically guarantees Profitability.
\begin{showWhenSubmit}
The proof is in the full version of this report~\cite{artifact}.
\end{showWhenSubmit}
\proofloc{app:proof:det-profitable}

  %%%%%%%%%%%%%%%%%%%%%%%%%%%%%%%%%%%%%%%%%%%%%%%%%%%%%%%%%%%%%%%%%%%%%%%%%%%%%%%%
  %%%%%%%%%%%%%%%%%%%%%%%%%%%%%%%%%%%%%%%%%%%%%%%%%%%%%%%%%%%%%%%%%%%%%%%%%%%%%%%%
  %%%%%%%%%%%%%%%%%%%%%%%%%%%%%%%%%%%%%%%%%%%%%%%%%%%%%%%%%%%%%%%%%%%%%%%%%%%%%%%%
  
      \section{Game-Theoretic Analysis}\label{sec:game}
  Our model gives rise to a sequential game played by a user and an attacker (\S\ref{sec:game:model}).
  We first solve a two-step version~(\S\ref{sec:game:2seq}), where the user leads and the attacker responds, proving that the user playing the deterrence strategy and the attacker abstaining form an SPNE; we then extend this result to the full game with an arbitrary number of steps~(\S\ref{sec:game:multistep}).

  %%%%%%%%%%%%%%%%%%%%%%%%%%%%%%%%%%%%%%%%%%%%%%%%%%%%%%%%%%%%%%%%%%%%%%%%%%%%%%%%
  %%%%%%%%%%%%%%%%%%%%%%%%%%%%%%%%%%%%%%%%%%%%%%%%%%%%%%%%%%%%%%%%%%%%%%%%%%%%%%%%
      \subsection{Game Model}\label{sec:game:model}

  The competition for reward gives rise to a sequential game.
  Throughout the game analysis, we assume ${k \ge \ln 2/\ln(1+\gamma)}$, which ensures both Anti-Spam and Profitability.
  We specify the game's players~(\S\ref{sec:game:model:players}), their actions~(\S\ref{sec:game:model:progress}), their utilities~(\S\ref{sec:game:model:utility}), and the solution concept we adopt~(\S\ref{sec:game:model:solution}).

  \subsubsection{Players}\label{sec:game:model:players}
  There are two players: a regular user $P_1$ and a single attacker $P_2$.
  Modeling a single attacker is not a reduction of our model with multiple attackers, but rather a worst-case assumption: It captures all attackers colluding under one shared utility, whereas independent attackers would also compete among themselves and make deterrence easier for the user.

  \subsubsection{Game progress and actions}\label{sec:game:model:progress}
 
  The two players compete during the collection window $[0,T]$, which we discretize into $2m$ steps (for any $m \geq 1$).
  The user $P_1$ and the attacker $P_2$ move in alternating turns: $P_1$ on the odd steps and $P_2$ on the even steps, with each player observing all earlier actions before acting.
  The number of steps is even so that the attacker moves last, capturing her advantage in observing and reacting to others' transactions, be it due to lower network latency or to her control over the set of placed transactions as the validator.

  At each of her steps, a player may publish new transactions or increase the bids of those she has already published~(\S\ref{sec:model:system}).
  By Theorem~\ref{thm:sybil:tight}, replacing any set of a player's transactions by a single bid of the same total weight never lowers her own revenue, so a rational player gains nothing by publishing more than one transaction.
  By Proposition~\ref{cor:sybil:others}, the same replacement leaves the other player's revenue unchanged.
  Without loss of generality, we therefore restrict our attention to strategies in which each player publishes at most a single transaction.

  Under this restriction, the action of $P_1$ on her $j$-th step (step $2j-1$) is a bid $b_1^{(j)} \in \mathbb{R}_{\geq 0}$, and that of $P_2$ on her $j$-th step (step $2j$) is a bid $b_2^{(j)} \in \mathbb{R}_{\geq 0}$, for $j = 1, \ldots, m$; a bid of $0$ denotes publishing no transaction.
  The bid is non-decreasing, $b_i^{(j)} \le b_i^{(j+1)}$, and although a player acts at every step of the game, a step with no bid increase ($b_i^{(j+1)} = b_i^{(j)}$) requires no action from her in practice.
  Since only~$P_1$ knows the reward initially, she moves first, and~${b_1^{(j)} = 0}$ forces $b_2^{(j)} = 0$: Without observing $P_1$'s transaction, $P_2$ cannot publish.
  We denote this $2m$-step game with parameters $m$, $k$, and $\gamma$ by $\mathcal{G}^{k,\gamma}_m$.

  \subsubsection{Utility}\label{sec:game:model:utility}
  The ordering protocol uses only each player's final bid $b_i^{(m)}$:
  It weighs a transaction by $w(b) = b^k$ and charges its creator on that same bid (Algorithm~\ref{alg:pwro}).
  Hence the winner, the payments, and each player's utility depend only on the two final-step bids $b_1^{(m)}$ and $b_2^{(m)}$, not on the intermediate steps.
  In particular, if $b_1^{(m)} = b_2^{(m)} = 0$, no transaction is published and both utilities are $0$.

  The total weight of the two transactions is
  \begin{equation*}
    W\bigl(b_1^{(m)}, b_2^{(m)}\bigr) \;\coloneqq\; \bigl(b_1^{(m)}\bigr)^k + \bigl(b_2^{(m)}\bigr)^k ,
  \end{equation*}
  and $P_i$ wins with probability
  \begin{equation}\label{eq:2sgame:winprob}
    q_i\bigl(b_1^{(m)}, b_2^{(m)}\bigr) \;\coloneqq\; \frac{\bigl(b_i^{(m)}\bigr)^k}{W\bigl(b_1^{(m)}, b_2^{(m)}\bigr)}
  \end{equation}
  when $W\bigl(b_1^{(m)}, b_2^{(m)}\bigr) > 0$, and $q_i\bigl(b_1^{(m)}, b_2^{(m)}\bigr) = 0$ otherwise.

  The utility of $P_i$ is her expected revenue. 
  Specializing the general formula in~Equation~\ref{eq:sybil:utility} to the two-player single-bid setting,
  \begin{equation}\label{eq:2sgame:utility}
    \begin{aligned}
      u_i\bigl(b_1^{(m)}, b_2^{(m)}\bigr) &\;=\; -\,\gamma\,b_i^{(m)} \;+\; R\,q_i \;-\; (1-\gamma)\,q_i\,b_i^{(m)} \\
          &\;=\; q_i\,\bigl(R - b_i^{(m)}\bigr) \;-\; \bigl(1 - q_i\bigr)\,\gamma\,b_i^{(m)} .
    \end{aligned}
  \end{equation}

  \subsubsection{Solution concept}\label{sec:game:model:solution}
  Our solution concept is the Subgame-Perfect Nash Equilibrium (SPNE): The players' strategies are a Nash equilibrium in every subgame.
  Here, a subgame is the game from some step $t$ onward, given the bids placed in the previous steps $1, \ldots, t-1$.
  % We compute the SPNE by backward induction from the attacker's last move at step~$2m$.

  %%%%%%%%%%%%%%%%%%%%%%%%%%%%%%%%%%%%%%%%%%%%%%%%%%%%%%%%%%%%%%%%%%%%%%%%%%%%%%%%
  %%%%%%%%%%%%%%%%%%%%%%%%%%%%%%%%%%%%%%%%%%%%%%%%%%%%%%%%%%%%%%%%%%%%%%%%%%%%%%%%
      \subsection{Two-Step Equilibrium}\label{sec:game:2seq}

  We first analyze a simple version of the game with only two steps ($m = 1$), so each player acts once and the game $\mathcal{G}^{k,\gamma}_1$ degenerates to a Stackelberg game~\cite{von2010market}.
  By backward induction, we derive the attacker's best response~(\S\ref{sec:game:2seq:p2}), then the user's~(\S\ref{sec:game:2seq:p1}), and combine them into an SPNE~(\S\ref{sec:game:2seq:eq}).

  \subsubsection{$P_2$'s best response}\label{sec:game:2seq:p2}
  If $b_1 = 0$, then $P_2$ cannot publish, so the best response is $b_2 = 0$.
  By Lemma~\ref{lem:2sgame:deterrence}, if $b_1 \ge b^{k,\gamma}_{\dtr}$, every bid earns $P_2$ non-positive utility, so $P_2$ abstains and the best response is $b_2 = 0$.

  For $0 < b_1 < b^{k,\gamma}_{\dtr}$, Lemma~\ref{lem:2sgame:deterrence} instead guarantees that some $b_2 > 0$ earns $P_2$ a strictly positive utility.
  Since~$u_2(b_1, \cdot)$ is continuous and tends to $-\infty$ as~${b_2 \to \infty}$,~$P_2$'s set of best responses
  \[
    \operatorname{BR}_2(b_1) \;\coloneqq\; \operatorname*{arg\,max}_{b_2 \ge 0} u_2(b_1, b_2)
  \]
  is nonempty and compact.
  This set may contain several bids, so we adopt the standard assumption of breaking ties in favor of a designated player~\cite{harsanyi1988general,dutting2019simple,collina2024efficient}, here~$P_1$: Among~$P_2$'s best responses we select the one maximizing~$P_1$'s utility.
  This choice is also natural in our game: A higher attacker bid only lowers~$P_1$'s utility~(Equation~\ref{eq:2sgame:utility}), so favoring~$P_1$ selects the attacker's lowest best-response bid:
  \[
    b_2^*(b_1) \;=\; \min \operatorname{BR}_2(b_1),
  \]
  which is well defined because $\operatorname{BR}_2(b_1)$ is nonempty and compact.
  Selecting the lowest bid also minimizes the attacker's effort: Her best responses all yield her the same utility, so the lowest one attains that utility at the smallest expected payment.

  In summary, $P_2$'s best response is
  \begin{equation*}
    b_2^*(b_1) \;=\;
    \begin{cases}
      0, & b_1 = 0 \ \text{ or } \ b_1 \ge b^{k,\gamma}_{\dtr}, \\[2pt]
      \min \operatorname{BR}_2(b_1), & 0 < b_1 < b^{k,\gamma}_{\dtr}.
    \end{cases}
  \end{equation*}

  \subsubsection{$P_1$'s best response}\label{sec:game:2seq:p1}
  Given $P_2$'s best response function~$b_2^*(b_1)$,~$P_1$ chooses the bid $b_1 \ge 0$ that maximizes her utility $u_1\bigl(b_1, b_2^*(b_1)\bigr)$.

  If $b_1 \ge b^{k,\gamma}_{\dtr}$, then $b_2^*(b_1) = 0$ and ${u_1(b_1, 0) = R - b_1}$, so~$P_1$'s utility is highest at the smallest deterring bid ${b_1 = b^{k,\gamma}_{\dtr}}$, where it equals $u^{k,\gamma}_{\dtr} > 0$ (Proposition~\ref{prop:det-profitable}).
  When $b_1 = 0$, her utility is $0 < u^{k,\gamma}_{\dtr}$, so abstention is never optimal for~$P_1$.

  When $0< b_1 < b^{k,\gamma}_{\dtr}$, $P_2$'s best response $b_2^*(b_1)$ is to publish a positive bid.
  We call this the \emph{accommodation strategy} of $P_1$.
  Denote by $u^{k,\gamma}_{\mathrm{acc}}$ the maximum utility $P_1$ can earn by the accommodation strategy assuming $P_2$ adopts her best response:
  \begin{equation}\label{eq:2sgame:u1-acc}
    u^{k,\gamma}_{\mathrm{acc}} \;\coloneqq\; \sup_{0 < b_1 < b^{k,\gamma}_{\dtr}} u_1\bigl(b_1, b_2^*(b_1)\bigr) .
  \end{equation}
  This supremum is finite, since $u_1 \le R$ for every bid profile. 
  Therefore, when $u^{k,\gamma}_{\dtr} > u^{k,\gamma}_{\mathrm{acc}}$, the only best response for $P_1$ is the deterrence strategy.

  Unlike the closed-form $u^{k,\gamma}_{\dtr}$, we do not have a closed-form expression for $u^{k,\gamma}_{\mathrm{acc}}$.
  Computing it requires the attacker's best response $b_2^*(b_1)$, the positive bid that maximizes~$P_2$'s utility at each $b_1$ in the accommodation range.
  Since it is positive, $b_2^*(b_1)$ is an interior optimum and thus satisfies the first-order condition $\partial u_2/\partial b_2 = 0$.
  This condition is transcendental in $b_2$ because the winning probability weighs bids by the power $b^k$~(Equation~\ref{eq:2sgame:winprob}), and for general~$k$ and $\gamma$ we cannot solve it in closed form, leaving us without a closed form for $b_2^*(b_1)$ or for $u^{k,\gamma}_{\mathrm{acc}}$.

  \subsubsection{Equilibrium}\label{sec:game:2seq:eq}
  We want \emph{deterrence} to be an equilibrium, with $P_1$ publishing the deterring bid $b^{k,\gamma}_{\dtr}$ and $P_2$ abstaining.
  The following theorem gives the condition under which this holds.

  \begin{theorem}[Deterrence equilibrium]\label{thm:2sgame:eq}
    Fix $\gamma > 0$ and ${k > 1}$ satisfying $k \ge \ln 2/\ln(1+\gamma)$.
    If $u^{k,\gamma}_{\dtr} > u^{k,\gamma}_{\mathrm{acc}}$, then $P_1$ publishing the deterring bid $b_1^* = b^{k,\gamma}_{\dtr}$ and $P_2$ abstaining, $b_2^*(b_1^*) = 0$, form an SPNE of the game~$\mathcal{G}^{k,\gamma}_1$.
  \end{theorem}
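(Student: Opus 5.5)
The plan is a standard backward-induction argument for the Stackelberg game $\mathcal{G}^{k,\gamma}_1$. An SPNE must specify a strategy for $P_2$ at every subgame, i.e., a response function $b_2(\cdot)$ defined for every observed $b_1 \ge 0$, together with a bid $b_1^*$ for $P_1$. I take $P_2$'s strategy to be the best-response function $b_2^*(\cdot)$ from \S\ref{sec:game:2seq:p2} and $P_1$'s strategy to be $b_1^* = b^{k,\gamma}_{\dtr}$. I then verify optimality in the two kinds of subgames: those rooted at step $2$ (one for each $b_1$), and the whole game.

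\textbf{Step 1: subgames at step $2$.} For each $b_1$, $P_2$ is the only remaining mover. A Nash equilibrium of that subgame is therefore any maximizer of $u_2(b_1,\cdot)$ over $b_2 \ge 0$. There are three cases.
\begin{itemize}
\item If $b_1 = 0$, the model forces $b_2 = 0$, so this choice is trivially optimal.
\item If $b_1 \ge b^{k,\gamma}_{\dtr}$, then the observed weight is $b_1^k \ge W^{k,\gamma}_{\dtr}$. Lemma~\ref{lem:2sgame:deterrence} gives $u_2(b_1,b_2) \le 0$ for every $b_2>0$, while abstaining gives exactly $0$. Hence $b_2 = 0$ is a maximizer.
\item If $0<b_1<b^{k,\gamma}_{\dtr}$, then $b_2^*(b_1) = \min \operatorname{BR}_2(b_1)$, which lies in $\operatorname{BR}_2(b_1)$ and is a maximizer by definition.
\end{itemize}
The one fact to check is that $\operatorname{BR}_2(b_1)$ is nonempty. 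This holds because $u_2(b_1,\cdot)$ is continuous on $[0,\infty)$ (including at $0$ when $b_1>0$) and tends to $-\infty$. Along the equilibrium path $b_1^* = b^{k,\gamma}_{\dtr}$, which falls in the second case, so $b_2^*(b_1^*) = 0$.

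\textbf{Step 2: the whole game.} Given $b_2^*(\cdot)$, I show that $b_1^*$ maximizes $g(b_1) \coloneqq u_1(b_1, b_2^*(b_1))$ over $b_1 \ge 0$. I split the domain into three regions.
\begin{itemize}
\item On $b_1 \ge b^{k,\gamma}_{\dtr}$, $P_2$ abstains and $q_1 = 1$, so $g(b_1) = R - b_1$. This is decreasing and is maximized at $b^{k,\gamma}_{\dtr}$ with value $u^{k,\gamma}_{\dtr}$.
\item At $b_1 = 0$, $g(0) = 0$. By Proposition~\ref{prop:det-profitable}, whose hypothesis $k \ge \ln 2/\ln(1+\gamma)$ is assumed, $0 < u^{k,\gamma}_{\dtr}$.
\item On $0<b_1<b^{k,\gamma}_{\dtr}$, $g(b_1) \le u^{k,\gamma}_{\mathrm{acc}}$ by the definition of the supremum in Equation~\ref{eq:2sgame:u1-acc}. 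The hypothesis $u^{k,\gamma}_{\dtr} > u^{k,\gamma}_{\mathrm{acc}}$ then gives $g(b_1) < u^{k,\gamma}_{\dtr}$.
\end{itemize}
Thus $b_1^*$ is a (in fact the unique) maximizer. Combining Steps 1 and 2, the profile is a Nash equilibrium in every subgame, i.e., an SPNE.

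\textbf{Anticipated obstacles.} There is essentially no hard analytic step; the difficulty is bookkeeping. First, I must make sure the SPNE is stated with $P_2$'s full response function and not just the on-path action $0$. Second, I must justify that the restriction to single-bid strategies loses nothing. For this I will invoke Theorem~\ref{thm:sybil:tight} and Proposition~\ref{cor:sybil:others}: any multi-transaction deviation is weakly dominated by its merged single bid and leaves the opponent's payoff unchanged. Consequently, a profitable multi-bid deviation would imply a profitable single-bid deviation, which Steps 1 and 2 rule out.
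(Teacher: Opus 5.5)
Your proposal is correct and follows the paper's argument essentially step for step. On $P_2$'s side you use the best-response function $b_2^*(\cdot)$: abstain for $b_1=0$ or $b_1\ge b^{k,\gamma}_{\dtr}$ by Lemma~\ref{lem:2sgame:deterrence}, otherwise $\min\operatorname{BR}_2(b_1)$. On $P_1$'s side you compare the same three regions using Proposition~\ref{prop:det-profitable} and the definition of $u^{k,\gamma}_{\mathrm{acc}}$, and your remark on multi-bid deviations matches how the paper justifies the single-bid restriction via Theorem~\ref{thm:sybil:tight} and Proposition~\ref{cor:sybil:others}.
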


  The theorem follows directly from the best-response analysis: When $u^{k,\gamma}_{\dtr} > u^{k,\gamma}_{\mathrm{acc}}$, deterrence is $P_1$'s unique best response~(\S\ref{sec:game:2seq:p1}), to which $P_2$ responds by abstaining~(\S\ref{sec:game:2seq:p2}), so the profile is an SPNE.

  Theorem~\ref{thm:2sgame:eq} reduces the design problem to comparing~$u^{k,\gamma}_{\dtr}$ with $u^{k,\gamma}_{\mathrm{acc}}$. 
  The following lemma upper-bounds $u^{k,\gamma}_{\mathrm{acc}}$ by a closed-form expression.

  \begin{lemma}[Accommodation utility bound]\label{lem:game:acc-bound}
    For any $\gamma > 0$ and $k>1$, the accommodation utility $u^{k,\gamma}_{\mathrm{acc}}$ satisfies
    \begin{equation*}
      u^{k,\gamma}_{\mathrm{acc}} \;\le\; \frac{R}{k\,\min\{1,\gamma\}}.
    \end{equation*}
  \end{lemma}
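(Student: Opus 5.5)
The plan is to bound $P_1$'s accommodation utility by her winning probability, $u_1 \le q_1 R$, and then to bound $q_1$ using two facts about $P_2$'s best response: it is an interior optimum, so the first-order condition holds, and it is individually rational, so $u_2 \ge 0$.

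Fix any $0 < b_1 < b^{k,\gamma}_{\dtr}$ and write $b_2 = b_2^*(b_1)$. By \S\ref{sec:game:2seq:p2}, $b_2 > 0$. Let $q_1, q_2$ be the winning probabilities from Equation~\ref{eq:2sgame:winprob}; both are strictly positive and $q_1 + q_2 = 1$. From Equation~\ref{eq:2sgame:utility},
\begin{equation*}
u_1(b_1,b_2) = q_1(R-b_1) - \gamma q_2 b_1 \le q_1 R .
\end{equation*}
It therefore suffices to show $q_1 \le 1/(k\min\{1,\gamma\})$ uniformly over the accommodation range.

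\emph{First-order condition.} Since $b_2>0$ maximizes $u_2(b_1,\cdot)$ over $b_2 \ge 0$ and $u_2$ is differentiable for $b_2>0$, we have $\partial u_2/\partial b_2 = 0$ at $b_2$. Rewrite $u_2 = q_2\bigl(R-(1-\gamma)b_2\bigr) - \gamma b_2$ and use $\partial q_2/\partial b_2 = k q_1 q_2 / b_2$. The condition then becomes
\begin{equation*}
\frac{k\,q_1 q_2}{b_2}\bigl(R-(1-\gamma)b_2\bigr) \;=\; q_2 + \gamma q_1 .
\end{equation*}

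\emph{Individual rationality.} $P_2$ can abstain and get $0$, so $u_2 \ge 0$. This is $q_2(R-b_2) \ge \gamma q_1 b_2$, i.e. $q_2\bigl(R-(1-\gamma)b_2\bigr) \ge \gamma(q_1+q_2)b_2 = \gamma b_2$. Hence
\begin{equation*}
\frac{R-(1-\gamma)b_2}{b_2} \ge \frac{\gamma}{q_2}.
\end{equation*}

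\emph{Combining.} Substituting this into the first-order condition gives $q_2 + \gamma q_1 \ge k\gamma q_1$. With $q_2 = 1 - q_1$ this becomes $1 \ge q_1\bigl(1 + \gamma(k-1)\bigr)$, so
\begin{equation*}
q_1 \le \frac{1}{1+\gamma(k-1)} .
\end{equation*}
It remains to check $1+\gamma(k-1) \ge k\min\{1,\gamma\}$. If $\gamma \ge 1$, then $1+\gamma(k-1) \ge 1 + (k-1) = k$. If $\gamma < 1$, then $1+\gamma(k-1) = \gamma k + (1-\gamma) \ge \gamma k$. Thus $u_1(b_1, b_2^*(b_1)) \le R/(k\min\{1,\gamma\})$ for every $b_1$ in the accommodation range, and taking the supremum in Equation~\ref{eq:2sgame:u1-acc} yields the lemma.

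The main obstacle is the first-order step. It needs $b_2^*(b_1)$ to be a genuine interior maximizer, which follows from $b_2^* > 0$ (guaranteed by Lemma~\ref{lem:2sgame:deterrence} in this range) and differentiability of $u_2$ on $(0,\infty)$. It also needs the derivative of $q_2$ to be computed correctly. Once the first-order condition is in the product form above, the rest is algebra, and we never need a closed form for $b_2^*$.
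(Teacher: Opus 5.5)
Your proof is correct. It uses the same three ingredients as the paper's: the bound $u_1\le q_1R$, the first-order condition at $P_2$'s interior best response, and $P_2$'s nonnegative utility. The difference lies in how you combine them.

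The paper substitutes the first-order condition into $u_2^*$ to get $kq_1^*\bigl[R-(1-\gamma)b_2^*\bigr]<R$. It then needs a separate argument that $b_2^*(b_1)<R$ in order to lower-bound $R-(1-\gamma)b_2^*$ by $\min\{1,\gamma\}R$. You instead use $u_2\ge 0$ to replace $\bigl(R-(1-\gamma)b_2\bigr)/b_2$ by $\gamma/q_2$ inside the first-order condition. This eliminates $R$ and $b_2$ and leaves an inequality in $q_1$ alone.

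Your route gains two things. First, beyond the interiority $b_2^*>0$, you need only weak individual rationality, not the auxiliary fact $b_2^*<R$. Second, your intermediate bound $q_1\le 1/\bigl(1+\gamma(k-1)\bigr)$ is strictly sharper than $1/(k\min\{1,\gamma\})$ whenever $\gamma\neq 1$. It is also never vacuous, whereas the paper's bound on $q_1$ is at least $1$, and so says nothing, when $\gamma<1$ and $k\gamma\le 1$.

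The paper's route has a small expository advantage: its key inequality is a direct rewriting of $u_2^*>0$. However, it gives away a constant in the last step. Your sharper bound could replace it in the comparison of Lemma~\ref{lem:game:explicit-k-payoff}, giving a less restrictive sufficient condition on $k$.
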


  When $P_1$ accommodates, she wins with probability~$q_1^*$ and so earns $q_1^* R$ in expectation, ignoring the payment.
  We show that the attacker's optimal entry keeps~$q_1^*$ below $1/(k\,\min\{1,\gamma\})$, so the user's utility is below~$R/(k\,\min\{1,\gamma\})$.
  \begin{showWhenSubmit}
    The proof is in the full version of this report~\cite{artifact}.
  \end{showWhenSubmit}
  \proofloc{app:proof:acc-bound}

  The lemma above caps the accommodation utility at~$R/(k\,\min\{1,\gamma\})$.
  Deterrence is therefore the equilibrium whenever the deterrence utility exceeds this cap, which we show holds if $k$ is large enough.

  \begin{lemma}[Deterrence utility bound]\label{lem:game:explicit-k-payoff}
    Fix $\gamma > 0$. If
    \[
      k \;\geq\; \max\left\{2,\frac{\exp(1/\gamma)}{\gamma}\right\},
    \]
    then the utility of the deterrence strategy exceeds $\frac{R}{k\,\min\{1,\gamma\}}$:
    \begin{equation*}
      u^{k,\gamma}_{\dtr} \;>\; \frac{R}{k\,\min\{1,\gamma\}}.
    \end{equation*}
  \end{lemma}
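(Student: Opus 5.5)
The plan is to reduce the claim to an explicit inequality in $k$ and $\gamma$ and then split on whether $\gamma\ge 1$. Write the normalized deterrence bid as
\[
\frac{b^{k,\gamma}_{\dtr}}{R}=\Bigl(\frac{(k-1)^{k-1}}{\gamma k^k}\Bigr)^{1/k}=\Bigl(1-\frac1k\Bigr)^{1-1/k}(\gamma k)^{-1/k},
\]
using $(k-1)^{k-1}/k^k=(1-1/k)^{k-1}/k$. Since $u^{k,\gamma}_{\dtr}=R\,(1-b^{k,\gamma}_{\dtr}/R)$, the goal becomes
\[
1-\Bigl(1-\frac1k\Bigr)^{1-1/k}(\gamma k)^{-1/k}>\frac{1}{k\min\{1,\gamma\}}.
\]

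\emph{Case $\gamma\ge 1$.} The right-hand side is $1/k$, so we need $(1-1/k)^{1-1/k}(\gamma k)^{-1/k}<1-1/k$. Dividing by $(1-1/k)^{1-1/k}$ and raising both sides to the $k$-th power, this is equivalent to $(\gamma k)^{-1}<1-1/k$, that is, $\gamma(k-1)>1$.
\begin{itemize}
\item If $\gamma>1$, this holds because $k\ge 2$.
\item If $\gamma=1$, the hypothesis gives $k\ge e>2$, so $\gamma(k-1)>1$ again.
\end{itemize}
This case is therefore exact and easy.

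\emph{Case $0<\gamma<1$.} This is the harder case, since now we must show $1-b^{k,\gamma}_{\dtr}/R>1/(\gamma k)$. I will bound the bid with logarithm inequalities.
\begin{itemize}
\item Since $\ln(1-1/k)\le -1/k$, we have $(1-1/k)^{1-1/k}\le \exp(-(1-1/k)/k)\le \exp(-1/(2k))$, where the last step uses $k\ge 2$.
\item The hypothesis $k\ge e^{1/\gamma}/\gamma$ gives $\ln(\gamma k)\ge 1/\gamma$, hence $(\gamma k)^{-1/k}\le \exp(-1/(\gamma k))$.
\end{itemize}
Together these give $b^{k,\gamma}_{\dtr}/R\le e^{-y}$ with $y=(1/\gamma+1/2)/k$. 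Using $1-e^{-y}\ge y-y^2/2$, it suffices to show $y(1-y/2)>1/(\gamma k)$. Multiplying through by $k$, this reads
\[
\Bigl(\frac1\gamma+\frac12\Bigr)\Bigl(1-\frac y2\Bigr)>\frac1\gamma,
\]
which rearranges to $k>(1/\gamma+1/2)^2$.

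To finish, set $x=1/\gamma>1$. The hypothesis gives $k\ge xe^x$, so it suffices to check $xe^x>(x+1/2)^2$ for $x\ge1$.
\begin{itemize}
\item At $x=1$ it holds, since $e\approx2.718>2.25$.
\item For $x\ge 1$, the derivative of the left side, $(1+x)e^x$, dominates the derivative of the right side, $2x+1$.
\end{itemize}
So the inequality persists for all $x\ge 1$.

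The main obstacle is this second case. The crude bound from $\ln(\gamma k)\ge 1/\gamma$ alone yields only about $1/(\gamma k)$ minus a second-order term, which is not strict. The extra slack must come from the factor $(1-1/k)^{1-1/k}<1$, and this is why the hypothesis $k\ge 2$ is needed.
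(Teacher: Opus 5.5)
Your proof is correct. It reaches the result by a different and more elementary route than the paper, most notably in the hard case.

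\textbf{Case $\gamma\ge1$.} The paper argues by monotonicity in $\gamma$, i.e.\ $b^{k,\gamma}_{\dtr}<b^{k,1}_{\dtr}$, together with the crude estimate $(k-1)^{k-1}\le(k-1)^k$. Your exact rewriting instead shows that $u^{k,\gamma}_{\dtr}>R/k$ is \emph{equivalent} to $\gamma(k-1)>1$. This equivalence in fact holds for every $\gamma>0$. Its boundary $k=1+1/\gamma$ is exactly the revenue-optimal exponent of Proposition~\ref{prop:tradeoff:optimal-k}, where $u^{k,\gamma}_{\dtr}=\gamma R/(1+\gamma)=R/k$. You handle $\gamma=1$ here via $k\ge e$, whereas the paper places $\gamma=1$ in its second case.

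\textbf{Case $\gamma<1$.} The paper keeps the exact slack $1-1/k$ and keeps $\ln z$ free, with $z=k\gamma$. It uses $1-e^{-d}\ge d/(1+d)$. It must then prove that $F_\gamma(z)=(z-1)d$ is increasing in $z$, and check $F_\gamma(e^{1/\gamma})>1$ through the auxiliary function $g(t)=te^t+e^{-t}-t^2-t-1$ and a second-derivative argument.

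You instead substitute $\ln(\gamma k)\ge1/\gamma$ and $1-1/k\ge1/2$ immediately. You recover the lost slack with the quadratic bound $1-e^{-y}\ge y-y^2/2$. This bound is sharper than $y/(1+y)$ whenever $y\le1$, and here $y\le 3/(2e)$. Everything then collapses to the clean condition $k>(1/\gamma+1/2)^2$ plus a one-line check of $xe^x>(x+1/2)^2$.

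\textbf{Comparison.} Your argument is shorter and avoids the coupled $(k,z)$ monotonicity step. The paper's keeps more precision in the intermediate quantity, but both end at the same hypothesis.

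\textbf{One correction to your closing remark.} For $0<\gamma<1$ the hypothesis $k\ge e^{1/\gamma}/\gamma>e$ already forces $k>2$. So the assumption $k\ge2$ is genuinely needed only for $\gamma>1$, where $e^{1/\gamma}/\gamma$ can fall below the exact threshold $1+1/\gamma$ (e.g.\ at $\gamma=2$).
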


  For $\gamma>1$, we bound the deterrence bid by its value at $\gamma=1$, leaving a utility above $R/k=R/(k\,\min\{1,\gamma\})$.
  For $0<\gamma\leq1$, we reduce the desired utility bound $R/(k\,\min\{1,\gamma\})=R/(k\gamma)$ to a one-variable inequality monotone in $k\gamma$, then check it at the smallest value permitted by the hypothesis, $k\gamma = \exp(1/\gamma)$.
  \begin{showWhenSubmit}
     The proof is in the full version of this report~\cite{artifact}.
  \end{showWhenSubmit}
  \proofloc{app:proof:explicit-k-payoff}

 Combining the two bounds gives a bound on~$k$: Once ${k \ge \max\{2,\exp(1/\gamma)/\gamma\}}$, Lemmas~\ref{lem:game:explicit-k-payoff} and~\ref{lem:game:acc-bound} imply ${u^{k,\gamma}_{\dtr} > R/(k\,\min\{1,\gamma\}) \ge u^{k,\gamma}_{\mathrm{acc}}}$.
  Therefore, Theorem~\ref{thm:2sgame:eq} makes deterrence the equilibrium.
  We record this explicit sufficient condition on~$k$ as the following corollary.

  \begin{corollary}\label{cor:game:explicit-k}
    Fix $\gamma > 0$. If $k\geq \max\{2,\exp(1/\gamma)/\gamma\}$, then~$P_1$ publishing the deterring bid $b_1^* = b^{k,\gamma}_{\dtr}$ and $P_2$ abstaining, $b_2^*(b_1^*) = 0$, form an SPNE of the game~$\mathcal{G}^{k,\gamma}_1$.
  \end{corollary}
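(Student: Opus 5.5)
The corollary is essentially a chaining of results already in place, so the plan is to verify the hypotheses of Theorem~\ref{thm:2sgame:eq} under the single assumption $k\ge\max\{2,\exp(1/\gamma)/\gamma\}$. That theorem needs three things: $k>1$, $k\ge \ln 2/\ln(1+\gamma)$, and $u^{k,\gamma}_{\dtr}>u^{k,\gamma}_{\mathrm{acc}}$.

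The first hypothesis is immediate, since $k\ge 2>1$.

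For the strict inequality, invoke Lemma~\ref{lem:game:explicit-k-payoff}, whose hypothesis is exactly ours. It gives $u^{k,\gamma}_{\dtr}>R/(k\min\{1,\gamma\})$. Then apply Lemma~\ref{lem:game:acc-bound}, valid for any $k>1$, to obtain $R/(k\min\{1,\gamma\})\ge u^{k,\gamma}_{\mathrm{acc}}$. Chaining the two gives $u^{k,\gamma}_{\dtr}>u^{k,\gamma}_{\mathrm{acc}}$.

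The only nontrivial step is the Anti-Spam threshold $k\ge \ln 2/\ln(1+\gamma)$, which Theorem~\ref{thm:2sgame:eq} assumes and the game model imposes throughout. I would split on $\gamma$.
\begin{itemize}
\item If $\gamma\ge 1$, then $\ln(1+\gamma)\ge\ln 2$, so the threshold is at most $1<2\le k$.
\item If $0<\gamma<1$, use $\ln(1+\gamma)\ge \gamma/(1+\gamma)\ge \gamma/2$. This gives $\ln 2/\ln(1+\gamma)\le 2\ln 2/\gamma<2/\gamma$. It therefore suffices that $\exp(1/\gamma)/\gamma\ge 2/\gamma$, i.e.\ $\exp(1/\gamma)\ge 2$, which holds because $1/\gamma>1>\ln 2$.
\end{itemize}
So in both cases $k\ge\max\{2,\exp(1/\gamma)/\gamma\}\ge \ln 2/\ln(1+\gamma)$.

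With all three hypotheses verified, Theorem~\ref{thm:2sgame:eq} yields that $b_1^*=b^{k,\gamma}_{\dtr}$ together with $b_2^*(b_1^*)=0$ is an SPNE of $\mathcal{G}^{k,\gamma}_1$. The elementary logarithm bound in the $\gamma<1$ case is the one place that needs care, and it is still routine.
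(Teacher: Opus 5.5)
Your proposal is correct and matches the paper's argument: you chain Lemma~\ref{lem:game:explicit-k-payoff} with Lemma~\ref{lem:game:acc-bound} to get $u^{k,\gamma}_{\dtr} > R/(k\,\min\{1,\gamma\}) \ge u^{k,\gamma}_{\mathrm{acc}}$ and then invoke Theorem~\ref{thm:2sgame:eq}. The one difference is that you explicitly check that $k\ge\max\{2,\exp(1/\gamma)/\gamma\}$ implies the Anti-Spam bound $k\ge \ln 2/\ln(1+\gamma)$. The paper does not check this and instead relies on the standing assumption of Section~\ref{sec:game}; your check is correct and makes the corollary self-contained.
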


  \subsection{Multiple-Step Equilibrium}\label{sec:game:multistep}

We now extend the result to the full $2m$-step game $\mathcal{G}^{k,\gamma}_m$ for any $m \ge 1$.
Intuitively, the two-step equilibrium should persist: $P_1$ publishes the deterring bid at her first step and never raises it, while $P_2$ never enters.
Because $P_2$ still moves last, any state at which $P_1$ has failed to deter reduces to the accommodation scenario of the two-step game; so, just as before, deterrence should be the equilibrium whenever its utility is no less than the accommodation utility.

However, there is still a gap in this argument.
SPNE demands an equilibrium in every subgame, that is, after every possible sequence of past bids, even those that rational play would never reach.
Standard backward induction would require solving these off-path subgames explicitly, which is prohibitively complicated.
Instead, we first show that each such subgame has an SPNE~(\S\ref{sec:game:multistep:existence}), then bound the utility~$P_1$ can obtain in any SPNE starting from her first positive bid~(\S\ref{sec:game:multistep:bound}), and finally extend the two-step result to $\mathcal{G}^{k,\gamma}_m$ by backward induction~(\S\ref{sec:game:multistep:extension}).

\subsubsection{Existence of SPNE}\label{sec:game:multistep:existence}

We first show the existence of an SPNE in any subgame where the user has already published a positive bid.
Fix a step $t\in\{1,\ldots,2m\}$.

A \emph{history} $h^{t-1}$ at the beginning of step $t$ is the bids players published in the earlier steps $1,\ldots,t-1$.
Such a history fixes each player's \emph{latest bid}, the last one she has placed.
For convenience of notation, we set ${b_1^{(0)}=b_2^{(0)}=0}$, meaning that neither player has yet published a bid before the first step.
We denote by $x_1(h^{t-1})$ the latest bid of $P_1$ and by $x_2(h^{t-1})$ that of~$P_2$:
\[
x_1(h^{t-1}) \coloneqq b_1^{(\lfloor t/2\rfloor)},
\qquad
x_2(h^{t-1}) \coloneqq b_2^{(\lfloor (t-1)/2\rfloor)}.
\]
The floor indices count how many times each player has acted before step $t$:
Before an odd step $2j-1$, both $P_1$ and~$P_2$ have acted $j-1$ times; before an
even step $2j$, $P_1$ has acted $j$ times and $P_2$ has still acted $j-1$ times.

We show that every subgame in which $P_1$'s latest bid is positive, $x_1(h^{t-1})>0$, has a pure-strategy SPNE.

\begin{lemma}[SPNE existence after a positive user bid]
\label{lem:game:positive-state-existence}
Consider a subgame starting from step $t$ with any history $h^{t-1}$ such that
\[
x_1(h^{t-1})=b_1^{(\lfloor t/2\rfloor)}>0.
\]
Then the subgame admits a pure-strategy SPNE.
\end{lemma}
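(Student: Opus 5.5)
We prove the lemma by backward induction over the finitely many remaining steps $t, t+1, \ldots, 2m$. The tool is a value-function argument: at each step the mover maximizes a continuous function over a compact set of relevant bids, and a measurable (indeed pointwise) selection of maximizers defines the strategy. The key observation is that once $x_1(h^{t-1})>0$, the constraint ``$b_1^{(j)}=0$ forces $b_2^{(j)}=0$'' is vacuous for all later steps, because bids are non-decreasing. The continuation game is then a finite-horizon alternating game whose state is the pair of latest bids $(x_1,x_2)\in(0,\infty)\times[0,\infty)$, and whose terminal payoff is the continuous function $(u_1,u_2)$ of Equation~\ref{eq:2sgame:utility} evaluated at the final bids.

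\textbf{Compactification.} Each player's utility satisfies $u_i \le R-\gamma b_i$ when she loses and $u_i\le R - b_i$ when she wins, so $u_i \le R - \min\{1,\gamma\}\, b_i$. Any final bid above $\bar B \coloneqq R/\min\{1,\gamma\}$ therefore gives negative utility. Meanwhile, a player with a positive latest bid $x_i$ can always guarantee at least $u_i(\cdot)$ at her current bid, and $P_2$ with $x_2=0$ can guarantee $0$. I will use this to argue that, at every step, it is without loss to restrict the mover's choice to $[x_i,\max\{x_i,\bar B\}]$, which is compact. Bids already above $\bar B$ in the history are simply frozen, since raising them further only lowers utility.

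\textbf{Backward induction.} Define $V^{2m+1}_i(x_1,x_2)\coloneqq u_i(x_1,x_2)$. At step $s$, with mover $i$, choose
\[
\sigma^s(x_1,x_2)\in\operatorname*{arg\,max}_{b\in[x_i,\max\{x_i,\bar B\}]} V^{s+1}_i(\cdot),
\]
breaking ties as in \S\ref{sec:game:2seq:p2} by favoring $P_1$ (the lowest maximizer of $P_2$), or by any fixed rule. Set $V^s_j$ to be the continuation value under this selection. The subgame-perfection of the resulting profile follows from the one-shot deviation principle, which holds in finite-horizon games.

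\textbf{Main obstacle.} The difficulty is that the argmax requires $V^{s+1}_i$ to be upper semicontinuous in the mover's own choice. Under a discontinuous selection (e.g., the jump of $P_2$'s best response at the deterrence threshold), $V^{s+1}$ may fail to be continuous, and the maximum may not be attained.

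I would handle this first by noting that the mover's own value $V^{s+1}_i$ is what she maximizes. By Berge's maximum theorem, the \emph{maximized} value $\max_b V^{s+1}_{\text{mover}}$ is continuous in the state whenever the objective is continuous. The problem is that the \emph{non-mover's} value can jump under the selection. To keep the induction working, I would carry the stronger invariant that at each state the set of continuation equilibrium payoff vectors is nonempty and compact. This is the standard Harris / Hellwig--Leininger existence argument for finite-horizon perfect-information games with continuous payoffs and compact action sets. Under that invariant, I select at each step a pair (action, continuation equilibrium) that maximizes the mover's payoff, choosing among these the continuation that is upper semicontinuous for the previous mover. That is the tie-breaking that makes the maximum attained.

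If this general machinery is deemed too heavy, the fallback is to argue directly that on-path tie-breaking in favor of $P_1$ yields upper semicontinuity for $P_1$, which is the only player whose earlier choice depends on it. The attacker's jump at $b^{k,\gamma}_{\dtr}$ is downward for $P_2$'s bid, and therefore upward for $P_1$'s payoff at the threshold, which preserves attainment of $P_1$'s maximum.
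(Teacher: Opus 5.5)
Your overall plan matches the paper's: $x_1>0$ keeps the total weight positive so the utilities are continuous, you compactify the bid space, and you invoke Hellwig--Leininger for finite-horizon perfect-information games. But your compactification step fails. The cap $\bar B=R/\min\{1,\gamma\}$ only shows that a bid above it yields negative utility. That rules such bids out only for a player who can secure $0$, i.e.\ $P_2$ with $x_2=0$. A player with a positive latest bid $x_i$ has already sunk at least the losing fee: standing pat guarantees her only $-\max\{1,\gamma\}\,x_i<0$, so a negative payoff above $\bar B$ can still be her best option.

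Concretely, let $R=1$, $\gamma=0.8$, $k=10$, so $\bar B=1.25$ and $k\ge\ln 2/\ln(1+\gamma)$. Consider $P_2$'s final step, in a game with $m\ge 2$, with $b_1^{(m)}=1.5$ and her latest bid $b_2^{(m-1)}=1.2$. Keeping $1.2$ gives $u_2\approx-0.886$, and no bid in $[1.2,1.25]$ does better, yet $b_2=1.6$ gives $u_2\approx-0.834$. Your claim that bids already above $\bar B$ can be frozen fails the same way: from a latest bid of $1.3$, keeping it gives $u_2\approx-0.897$, while raising to $1.6$ gives $\approx-0.834$. So an SPNE of your restricted game is not sequentially rational in the original subgame at such histories. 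These histories are within the lemma's scope, since it is applied after arbitrary deviations in the proof of Theorem~\ref{thm:game:multistep}.

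The missing idea is that the exclusion threshold must be measured against $-\max\{1,\gamma\}\,x_i$, not against $0$. Since $u_i\le R-\min\{1,\gamma\}\,\hat b_i$ for any final bid $\hat b_i\ge b'$, every $b'>(R+\max\{1,\gamma\}\,x_i)/\min\{1,\gamma\}$ is strictly worse than keeping $x_i$. This threshold grows with the current bids, which is why the paper uses step-dependent caps $B_0=\max\{x_1(h^{t-1}),x_2(h^{t-1})\}$ and $B_{\ell+1}=(R+\max\{1,\gamma\}B_\ell)/\min\{1,\gamma\}$. It then applies Hellwig--Leininger to the capped game, rules out deviations above the cap by the exclusion bound, and covers histories that leave the capped region by induction on the number of remaining steps.

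Separately, once the hypotheses are verified you should simply cite the theorem. Your paraphrase of its proof is inaccurate: nonempty compact equilibrium payoff sets at each state do not guarantee that the mover's maximum is attained; the argument needs upper hemicontinuity of the equilibrium correspondence in the history. Your tie-breaking fallback is a heuristic, not a proof.
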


By Hellwig and Leininger~\cite[Theorem~1]{hellwig1987existence}, a finite-horizon perfect-information game has a pure-strategy SPNE whenever every decision has a compact action set, the feasible-action correspondence is nonempty, compact-valued, and continuous in the history, and the utilities are continuous.
Our game satisfies these conditions except that its bid space is unbounded.
We therefore restrict the remaining bids to sufficiently large bounded intervals.
The lower endpoint of each interval is the acting player's latest bid, so the restriction respects the no-decrease constraint.
We then choose the upper bounds large enough so that every excluded bid is strictly worse than simply keeping the latest bid.
Thus an SPNE of the restricted compact game is also an SPNE of the original unbounded subgame.
\begin{showWhenSubmit}
    The proof is in the full version of this report~\cite{artifact}.
\end{showWhenSubmit}
\proofloc{app:proof:positive-state-existence}

\subsubsection{Bounding the utility}
\label{sec:game:multistep:bound}

We call a history at the beginning of an odd step an \emph{all-zero state} if both players' latest bids are zero, and a subgame \emph{all-zero} if it begins at an all-zero state.
We bound the utility $P_1$ can obtain by entering with a positive bid from an all-zero state.

\begin{lemma}[Positive-entry continuation bound]
\label{lem:game:positive-entry-bound}
Consider a subgame immediately after $P_1$ publishes a positive bid from an all-zero state, and let $\bigl(b_1^{(m)},b_2^{(m)}\bigr)$ be the final bids under any SPNE of this subgame.
Then
\[
u_1\bigl(b_1^{(m)},b_2^{(m)}\bigr)
\le
\max\left\{
u^{k,\gamma}_{\dtr},
u^{k,\gamma}_{\mathrm{acc}}
\right\}.
\]
\end{lemma}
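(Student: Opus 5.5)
The plan is to use that $P_2$ moves last: the final bids are pinned down by $P_2$'s optimality at step $2m$, and the two-step analysis of \S\ref{sec:game:2seq} then applies directly.

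Fix any SPNE of the subgame and let $b \coloneqq b_1^{(m)}$ be $P_1$'s final bid. Bids are non-decreasing and $P_1$ entered with a positive bid, so $b>0$. Let $x_2$ be $P_2$'s latest bid before step $2m$. Subgame perfection at step $2m$ forces $b_2^{(m)}$ to maximize $u_2(b,\cdot)$ over the feasible set $[x_2,\infty)$.

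The first ingredient is a monotonicity fact. From Equation~\ref{eq:2sgame:utility},
\[
u_1(b,b_2)=q_1\bigl(R-b+\gamma b\bigr)-\gamma b .
\]
Here $q_1$ is non-increasing in $b_2$. Hence, whenever $R-(1-\gamma)b\ge 0$, $u_1(b,\cdot)$ is non-increasing in $b_2$. This holds for every $b<R$, and for every $b$ if $\gamma\ge1$.

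If $b\ge R$, a separate bound suffices: $u_1=q_1(R-b)-(1-q_1)\gamma b\le 0<u^{k,\gamma}_{\dtr}$, using Proposition~\ref{prop:det-profitable} and the standing assumption $k\ge\ln2/\ln(1+\gamma)$. So we may assume $b<R$. The argument then splits into two cases.

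\emph{Case 1: $b\ge b^{k,\gamma}_{\dtr}$.} By monotonicity, $u_1(b,b_2^{(m)})\le u_1(b,0)=R-b\le R-b^{k,\gamma}_{\dtr}=u^{k,\gamma}_{\dtr}$.

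\emph{Case 2: $0<b<b^{k,\gamma}_{\dtr}$.} Here I would show $b_2^{(m)}\ge b_2^*(b)=\min\operatorname{BR}_2(b)$. Once that holds, monotonicity gives
\[
u_1(b,b_2^{(m)})\le u_1\bigl(b,b_2^*(b)\bigr)\le u^{k,\gamma}_{\mathrm{acc}}
\]
by definition~\eqref{eq:2sgame:u1-acc}. The claim follows from a short comparison of constrained and unconstrained maximization.
\begin{itemize}
\item If $x_2\le b_2^*(b)$, the unconstrained maximizers that lie in $[x_2,\infty)$ include $b_2^*(b)$. So the constrained maximum equals the unconstrained one, every constrained maximizer lies in $\operatorname{BR}_2(b)$, and each is at least $\min\operatorname{BR}_2(b)$.
\item If $x_2>b_2^*(b)$, every feasible bid already exceeds $b_2^*(b)$.
\end{itemize}
This step does not depend on how ties are broken in the SPNE: any maximizer $P_2$ selects is at least the minimal one, and a larger attacker bid only hurts $P_1$.

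The step I expect to need the most care is this comparison between $P_2$'s history-constrained choice at step $2m$ and the unconstrained two-step best response $b_2^*(\cdot)$. In particular, I must check that the no-decrease constraint can only push $P_2$'s bid upward relative to $b_2^*(b)$. I must also confirm that existence of the SPNE (Lemma~\ref{lem:game:positive-state-existence}) is all that is needed, so that no explicit solution of the intermediate steps is required.
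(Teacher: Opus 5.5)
Your proposal is correct and follows essentially the paper's proof. Both lower-bound $b_2^{(m)}$ by $\min\operatorname{BR}_2(b_1^{(m)})$ through the constrained-versus-unconstrained comparison at step $2m$, use that $u_1$ is non-increasing in $P_2$'s bid when $R-(1-\gamma)b_1^{(m)}>0$, and split on $b_1^{(m)}\ge b^{k,\gamma}_{\dtr}$ versus the accommodation range. The only differences are cosmetic: the paper secures the sign of $R-(1-\gamma)b_1^{(m)}$ by first disposing of the case $u_1\le 0$ rather than $b_1^{(m)}\ge R$, and it derives the lower bound on $b_2^{(m)}$ before the case split rather than only in the accommodation case.
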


In the two-step game, $P_1$ either deters $P_2$ and earns $u^{k,\gamma}_{\dtr}$, or accommodates her and earns at most $u^{k,\gamma}_{\mathrm{acc}}$.
The lemma asserts that the multi-step \emph{continuation}, the actions the players take in the subgame after $P_1$'s positive bid, cannot beat the larger of the two.
Since $P_1$'s utility depends only on the final bids $(b_1^{(m)},b_2^{(m)})$, the continuation is effectively a single two-step interaction, except that $P_2$'s final bid is bounded below by her earlier bid $b_2^{(m-1)}$.
This lower bound can only hurt $P_1$: Were it absent, $P_2$ would bid no higher, and a lower bid by $P_2$ only raises $P_1$'s winning probability and hence her utility whenever that utility is positive (and if it is already negative, the bound is immediate).
$P_1$'s utility is therefore at most that of the corresponding two-step outcome, which is in turn at most the deterrence or the accommodation utility.
\begin{showWhenSubmit}
    The proof is in the full version of this report~\cite{artifact}.
\end{showWhenSubmit}
\proofloc{app:proof:positive-entry-bound}

\subsubsection{Extension of the two-step SPNE}
\label{sec:game:multistep:extension}

We now extend the two-step deterrence equilibrium to the full $2m$-step game.
The above lemma caps the utility $P_1$ can gain by entering with a positive bid from an all-zero state.
With this bound in hand, we construct a deterrence equilibrium in every all-zero subgame by backward induction.

\begin{theorem}[Deterrence from every all-zero subgame]
\label{thm:game:multistep}
Suppose $u_{\dtr}^{k,\gamma}>u_{\mathrm{acc}}^{k,\gamma}$.
For every odd step $t=2j-1$, consider an all-zero subgame beginning at step $t$, that is,
\[
b_1^{(\lfloor t/2\rfloor)}
=
b_1^{(j-1)}
=
0
\]
and
\[
b_2^{(\lfloor (t-1)/2\rfloor)}
=
b_2^{(j-1)}
=
0.
\]
Then this all-zero subgame admits an SPNE in which
\[
b_1^{(\ell)}=b_{\dtr}^{k,\gamma}
\quad\text{and}\quad
b_2^{(\ell)}=0
\qquad
\text{for every }\ell=j,\ldots,m.
\]
\end{theorem}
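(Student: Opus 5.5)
We argue by backward induction on $j$, from $j=m$ down to $j=1$, and build the SPNE strategy profile explicitly. The profile is specified on three kinds of histories.

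At all-zero states at odd steps $2j'-1$ with $j' \ge j$, $P_1$ bids $b^{k,\gamma}_{\dtr}$. At every even step that follows a latest user bid of zero, $P_2$ bids $0$; this is forced by the model. At every history where $P_1$'s latest bid is positive, the players follow the pure-strategy SPNE of that subgame guaranteed by Lemma~\ref{lem:game:positive-state-existence}. For continuation histories reached after $P_1$ bids $b^{k,\gamma}_{\dtr}$ with $P_2$ at $0$, we select a particular such SPNE, namely $P_1$ never raising her bid and $P_2$ never entering, and we verify that it is indeed an SPNE.

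The first step is to show that, after $P_1$ bids $b^{k,\gamma}_{\dtr}$ from an all-zero state, the continuation ``$P_1$ keeps $b^{k,\gamma}_{\dtr}$, $P_2$ stays at $0$'' is an equilibrium of every sub-subgame consistent with that path.
\begin{itemize}
\item For $P_2$: her latest-bid lower bound is $0$. Since $P_1$'s bid can only stay at or above $b^{k,\gamma}_{\dtr}$, Lemma~\ref{lem:2sgame:deterrence} implies that any final entry yields $P_2$ utility $\le 0$. Abstaining is therefore a best response at every one of her nodes.
\item For $P_1$: given that $P_2$ never enters, raising her bid only increases her payment. Keeping the bid gives $R-b^{k,\gamma}_{\dtr}=u^{k,\gamma}_{\dtr}$, which is maximal.
\end{itemize}
Off-path deviations lead to positive-latest-bid histories. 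There we use the SPNE from Lemma~\ref{lem:game:positive-state-existence}, chosen consistently so that the overall profile is subgame perfect in every positive state.

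The second step handles $P_1$'s choice at an all-zero odd step $2j-1$. She has three kinds of options.
\begin{itemize}
\item Bid $b^{k,\gamma}_{\dtr}$: she earns $u^{k,\gamma}_{\dtr}$.
\item Any other positive bid: Lemma~\ref{lem:game:positive-entry-bound} bounds her SPNE continuation utility by $\max\{u^{k,\gamma}_{\dtr},u^{k,\gamma}_{\mathrm{acc}}\}=u^{k,\gamma}_{\dtr}$, using the hypothesis $u^{k,\gamma}_{\dtr}>u^{k,\gamma}_{\mathrm{acc}}$.
\item Stay at $0$: $P_2$ is forced to $0$, and the game reaches the all-zero state at step $2j+1$. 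By the induction hypothesis, the profile there gives $P_1$ at most $u^{k,\gamma}_{\dtr}$. If $j=m$, staying at $0$ yields utility $0<u^{k,\gamma}_{\dtr}$ by Proposition~\ref{prop:det-profitable}.
\end{itemize}
So bidding $b^{k,\gamma}_{\dtr}$ is optimal. Together with the forced zero response of $P_2$ and the first step, this shows the constructed profile is an SPNE of the all-zero subgame at step $2j-1$, and its path has $b_1^{(\ell)}=b^{k,\gamma}_{\dtr}$ and $b_2^{(\ell)}=0$ for all $\ell\ge j$.

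The main obstacle is making the profile consistent. The SPNE we select on the deterrence path must agree with an SPNE on all positive-state subgames, and Lemma~\ref{lem:game:positive-entry-bound} must apply to the particular SPNE we fixed off-path. To handle this, I would define the strategies globally before the induction: the explicit deterrence continuation on histories where $P_1$'s bid is at least $b^{k,\gamma}_{\dtr}$ and $P_2$'s is $0$, and an arbitrary SPNE from Lemma~\ref{lem:game:positive-state-existence} on every other positive-state history. The one check needed is that the deterrence continuation remains optimal against arbitrary deviations, which again follows from Lemma~\ref{lem:2sgame:deterrence}. Lemma~\ref{lem:game:positive-entry-bound} holds for any SPNE, so it covers whichever selection we make.
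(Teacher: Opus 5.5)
Your proposal is correct and follows essentially the same route as the paper. Both use backward induction over all-zero subgames, a ``keep bids unchanged'' deterring continuation justified by Lemma~\ref{lem:2sgame:deterrence} (switching to a Lemma~\ref{lem:game:positive-state-existence} SPNE after any entry by $P_2$), and a three-way comparison at step $2j-1$ that uses the induction hypothesis for the zero bid and Lemma~\ref{lem:game:positive-entry-bound} for every other positive bid. The only differences are minor: you fold the base case $j=m$ into the general step, where a zero bid gives $0<u^{k,\gamma}_{\dtr}$ by Proposition~\ref{prop:det-profitable}, instead of citing Theorem~\ref{thm:2sgame:eq}; and your treatment of off-path consistency is at the same level of detail as the paper's.
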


We prove the result by backward induction over all-zero subgames. The last all-zero subgame is exactly the two-step game. For an earlier all-zero subgame, publishing the deterring bid gives $P_1$ revenue $u_{\dtr}^{k,\gamma}$, publishing zero moves the game to the next all-zero subgame, and the revenue of publishing any positive alternative bid is bounded by Lemma~\ref{lem:game:positive-entry-bound}. Since $u_{\dtr}^{k,\gamma}>u_{\mathrm{acc}}^{k,\gamma}$, no deviation can improve on deterrence.
\begin{showWhenSubmit}
    The proof is in the full version of this report~\cite{artifact}.
\end{showWhenSubmit}
\proofloc{app:proof:multistep}

The full game is itself the all-zero subgame beginning at step $t=1$, so the above theorem induces a deterrence equilibrium for the whole game.

\begin{corollary}[Multi-step deterrence SPNE]
\label{cor:game:multistep}
If
\[
u^{k,\gamma}_{\dtr}>u^{k,\gamma}_{\mathrm{acc}},
\]
then the full game $\mathcal G_m^{k,\gamma}$ has an SPNE in which
\[
b_1^{(j)}=b^{k,\gamma}_{\dtr}
\quad\text{and}\quad
b_2^{(j)}=0
\qquad
\text{for every }j=1,\ldots,m.
\]
\end{corollary}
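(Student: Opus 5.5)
The plan is to instantiate Theorem~\ref{thm:game:multistep} at the first step and to check that the full game is exactly the subgame that theorem covers. By the convention $b_1^{(0)}=b_2^{(0)}=0$, the history before step $t=1$ is empty and both latest bids are zero. So the full game $\mathcal G_m^{k,\gamma}$ coincides with the all-zero subgame beginning at the odd step $t=1$, which corresponds to $j=1$.

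Under the hypothesis $u^{k,\gamma}_{\dtr}>u^{k,\gamma}_{\mathrm{acc}}$, Theorem~\ref{thm:game:multistep} with $j=1$ yields an SPNE of this subgame along whose path $b_1^{(\ell)}=b^{k,\gamma}_{\dtr}$ and $b_2^{(\ell)}=0$ for every $\ell=1,\ldots,m$. This is exactly the claimed profile.

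The one point to verify is that an SPNE of the subgame starting at step $1$ is an SPNE of the whole game. SPNE requires a Nash equilibrium in every subgame. Every subgame of $\mathcal G_m^{k,\gamma}$ is a subgame of the subgame rooted at step $1$, because the two games are literally the same object: same players, same action sets and no-decrease constraints, same forcing rule ($b_1^{(j)}=0\Rightarrow b_2^{(j)}=0$), and same utilities, which depend only on the final bids. The strategy profile supplied by the theorem therefore already prescribes play after every history, including off-path ones, for example via the SPNEs of Lemma~\ref{lem:game:positive-state-existence} in subgames with a positive user bid.

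I expect no real obstacle. The only care needed is to make sure that the theorem's SPNE is a complete strategy profile on all histories rather than only the equilibrium path. Its backward-induction construction over all-zero subgames, combined with Lemma~\ref{lem:game:positive-state-existence} for the remaining subgames, provides this.
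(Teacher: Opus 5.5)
Your proposal is correct and follows the paper's own argument: the paper also observes that the full game $\mathcal G_m^{k,\gamma}$ is the all-zero subgame beginning at step $t=1$ (using $b_1^{(0)}=b_2^{(0)}=0$) and applies Theorem~\ref{thm:game:multistep} with $j=1$. Your added check that the theorem's construction specifies play at every off-path history is accurate. That construction uses the induction hypothesis and Lemma~\ref{lem:game:positive-state-existence}, and it is this that makes the instantiated profile a genuine SPNE of the whole game.
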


In all, our game-theoretic analysis indicates that, whenever deterrence is more profitable than accommodation, the user would publish the deterring bid, and the attacker would abstain.
No rational participant gains by deviating.

  %%%%%%%%%%%%%%%%%%%%%%%%%%%%%%%%%%%%%%%%%%%%%%%%%%%%%%%%%%%%%%%%%%%%%%%%%%%%%%%%
  %%%%%%%%%%%%%%%%%%%%%%%%%%%%%%%%%%%%%%%%%%%%%%%%%%%%%%%%%%%%%%%%%%%%%%%%%%%%%%%%
  %%%%%%%%%%%%%%%%%%%%%%%%%%%%%%%%%%%%%%%%%%%%%%%%%%%%%%%%%%%%%%%%%%%%%%%%%%%%%%%%
      \section{Parameter Selection and User Revenue}\label{sec:tradeoff}

  Corollary~\ref{cor:game:explicit-k} only gives a sufficient condition on $k$ for deterrence to be the equilibrium at a specific losing-fee rate~$\gamma$.
  We first show how to choose a single $k$ that covers a range of $\gamma$ values~(\S\ref{sec:tradeoff:feasible}).
  We then show how the user's revenue grows with $\gamma$~(\S\ref{sec:tradeoff:design}), and derive the $k$ that maximizes the user's revenue~(\S\ref{sec:tradeoff:optimization}).

  %%%%%%%%%%%%%%%%%%%%%%%%%%%%%%%%%%%%%%%%%%%%%%%%%%%%%%%%%%%%%%%%%%%%%%%%%%%%%%%%
  %%%%%%%%%%%%%%%%%%%%%%%%%%%%%%%%%%%%%%%%%%%%%%%%%%%%%%%%%%%%%%%%%%%%%%%%%%%%%%%%
    \subsection{Choosing $k$ in Practice}\label{sec:tradeoff:feasible}

  For a fixed $\gamma$, our analysis gives two bounds on $k$.
  The Anti-Spam bound $k \ge \ln 2/\ln(1+\gamma)$ (Theorem~\ref{thm:sybil:tight}) is tight and hence necessary.
  For deterrence to be the equilibrium, the exact condition is $u^{k,\gamma}_{\dtr} > u^{k,\gamma}_{\mathrm{acc}}$ (Theorem~\ref{thm:2sgame:eq}); since $u^{k,\gamma}_{\mathrm{acc}}$ has no closed form, we instead use the sufficient bound $k \ge \max\{2,\exp(1/\gamma)/\gamma\}$ (Corollary~\ref{cor:game:explicit-k}).
  Clearing both is safe, and since both grow as $\gamma$ shrinks, choosing $k$ for the smallest $\gamma$ in a target range covers the whole range.

  We now turn to choosing $k$ in practice to deter front-running.
  We first measure the losing-fee rate $\gamma$ for arbitrage on Ethereum's Uniswap V2~\cite{UniswapV2} and V3~\cite{UniswapV3} (\S\ref{app:practical:gamma}).
  To defend against front-running on both, the designer must cover the smaller of the two rates, $\gamma = 0.259$.
  At this~$\gamma$, the two bounds we must clear are the Anti-Spam bound $k \ge \ln 2/\ln(1+\gamma) \approx 3.010$ and the sufficient deterrence bound $k \ge \max\{2,\exp(1/\gamma)/\gamma\} \approx 184$.
  Clearing both requires their maximum, $k \ge 184$, an impractically large exponent dictated entirely by the sufficient bound.

  Setting $k = 184$, Equation~\ref{eq:2sgame:b1-det} gives a deterrence bid $b^{k,\gamma}_{\dtr} \approx 0.974R$, so the user is left with almost nothing, a revenue of only $u^{k,\gamma}_{\dtr} = R - b^{k,\gamma}_{\dtr} \approx 0.026R$.
  This is no accident: By Equation~\ref{eq:2sgame:b1-det}, the deterrence bid approaches $R$ as $k$ grows, so an unnecessarily large $k$ drives the user's revenue toward zero.

  However, we show that this sufficient bound is far looser than necessary.
  We conduct a numerical simulation (\S\ref{app:empirical-k}) to find the smallest $k$ that meets our requirements.
  For ${\gamma = 0.259}$, the Anti-Spam bound requires only ${k \ge 3.010}$, which already satisfies the exact deterrence condition ${u^{k,\gamma}_{\dtr} > u^{k,\gamma}_{\mathrm{acc}}}$ for both Uniswap V2 and V3.

  We now compute the user's revenue at ${k=3.010}$.
  This revenue depends on the losing-fee rate $\gamma$: We compute the deterrence bid $b^{k,\gamma}_{\dtr}$ from Equation~\ref{eq:2sgame:b1-det} and the user's revenue is then given by the difference $u^{k,\gamma}_{\dtr}=R-b^{k,\gamma}_{\dtr}$.
  For a V2 arbitrage, $\gamma=0.259$ gives $b^{k,\gamma}_{\dtr}\approx0.830R$, so the user earns revenue $u^{k,\gamma}_{\dtr}=R-b^{k,\gamma}_{\dtr}\approx0.170R$; for a V3 arbitrage, $\gamma=0.867$ gives $b^{k,\gamma}_{\dtr}\approx0.555R$, so the user earns revenue $u^{k,\gamma}_{\dtr}=R-b^{k,\gamma}_{\dtr}\approx0.445R$.
  This is much better than the revenue of only $0.026R$ when selecting $k=184$.

  %%%%%%%%%%%%%%%%%%%%%%%%%%%%%%%%%%%%%%%%%%%%%%%%%%%%%%%%%%%%%%%%%%%%%%%%%%%%%%%%
  %%%%%%%%%%%%%%%%%%%%%%%%%%%%%%%%%%%%%%%%%%%%%%%%%%%%%%%%%%%%%%%%%%%%%%%%%%%%%%%%
  \subsection{Design Implications for $\gamma$}\label{sec:tradeoff:design}

  The example above already hints that a larger $\gamma$ raises the user's revenue: Holding $k=3.010$, the revenue grows from about $0.170R$ to $0.445R$ as $\gamma$ rises from $0.259$ to $0.867$.
  This holds in general: The revenue rises with $\gamma$ at every admissible $k$.

  \begin{proposition}[Revenue monotonicity at every $k$]\label{prop:tradeoff:monotone}
    Fix any $k > 1$. The deterrence bid $b^{k,\gamma}_{\dtr}(k, \gamma)$ is strictly decreasing in $\gamma > 0$, and consequently the user's revenue $u^{k,\gamma}_{\dtr}(k, \gamma) = R - b^{k,\gamma}_{\dtr}(k, \gamma)$ is strictly increasing in $\gamma$.
  \end{proposition}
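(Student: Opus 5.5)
The plan is to read the monotonicity directly off the closed form in Equation~\ref{eq:2sgame:b1-det}, isolating the only place where $\gamma$ enters. Fix $k>1$ and write
\[
  b^{k,\gamma}_{\dtr} \;=\; R\,C_k^{1/k}\,\gamma^{-1/k},
  \qquad
  C_k \;\coloneqq\; \frac{(k-1)^{k-1}}{k^{k}} .
\]
The factor $R\,C_k^{1/k}$ does not depend on $\gamma$, and it is strictly positive because $k-1>0$ and $k>0$. So the whole $\gamma$-dependence is carried by $\gamma^{-1/k}$.

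The key step is to observe that $\gamma\mapsto\gamma^{-1/k}$ is strictly decreasing on $(0,\infty)$, since its exponent $-1/k$ is negative. If a derivative is wanted, it is
\[
  \partial_\gamma b^{k,\gamma}_{\dtr} \;=\; -\tfrac{1}{k}\,R\,C_k^{1/k}\,\gamma^{-1/k-1} \;<\; 0 .
\]
Multiplying a strictly decreasing function by the positive constant $R\,C_k^{1/k}$ keeps it strictly decreasing, which gives the first claim. The user's revenue is $u^{k,\gamma}_{\dtr}=R-b^{k,\gamma}_{\dtr}$, where $R$ is fixed, so it is strictly increasing in $\gamma$. That is the second claim.

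There is no real obstacle here. The only point that needs care is confirming that the constant $C_k$ is strictly positive and finite for every $k>1$, so that strictness survives the multiplication. This holds because both $(k-1)^{k-1}$ and $k^k$ are positive reals when $k>1$.

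Note that the admissibility conditions on $k$ (the Anti-Spam bound and the equilibrium bound) play no role in the argument. The statement holds for any $k>1$, which matches the hypothesis as stated.
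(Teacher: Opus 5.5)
Your proposal is correct and takes the same approach as the paper, which also reads the monotonicity directly off the closed form in Equation~\ref{eq:2sgame:b1-det}: $b^{k,\gamma}_{\dtr}\propto\gamma^{-1/k}$ is strictly decreasing for every $k>1$, so $R-b^{k,\gamma}_{\dtr}$ is strictly increasing. Your check that the $\gamma$-independent factor $R\,C_k^{1/k}$ is strictly positive (which also uses $R>0$) makes explicit a detail the paper leaves implicit.
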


  This follows directly from the closed form of the deterrence bid (Equation~\ref{eq:2sgame:b1-det}): $b^{k,\gamma}_{\dtr}\propto\gamma^{-1/k}$ is strictly decreasing in $\gamma$ for every $k>1$, so $u^{k,\gamma}_{\dtr}=R-b^{k,\gamma}_{\dtr}$ strictly increases.

  The implication for system design is direct: Although $\gamma$ is not freely tunable, charging losing transactions a higher~$\gamma$ raises user revenue at every~$k$.

  %%%%%%%%%%%%%%%%%%%%%%%%%%%%%%%%%%%%%%%%%%%%%%%%%%%%%%%%%%%%%%%%%%%%%%%%%%%%%%%%
  %%%%%%%%%%%%%%%%%%%%%%%%%%%%%%%%%%%%%%%%%%%%%%%%%%%%%%%%%%%%%%%%%%%%%%%%%%%%%%%%
  \subsection{Optimization of User Revenue}\label{sec:tradeoff:optimization}

  The previous analysis sets $k$ only to ensure Anti-Spam and to make deterrence the equilibrium.
  We now determine, for a fixed $\gamma$, the choice of $k$ that maximizes the user's deterrence revenue.

  \begin{proposition}[Revenue-optimal $k$]\label{prop:tradeoff:optimal-k}
    Fix $\gamma > 0$. The user's deterrence revenue $u^{k,\gamma}_{\dtr}(k,\gamma)$, viewed as a function of ${k>1}$, is strictly increasing on $\bigl(1,\,1+\frac{1}{\gamma}\bigr)$ and strictly decreasing on $\bigl(1+\frac{1}{\gamma},\infty\bigr)$. Its maximum, attained uniquely at $k = 1 + \frac{1}{\gamma}$, is
    \begin{equation*}
      u^{k,\gamma}_{\dtr}\Bigl(1+\tfrac{1}{\gamma},\,\gamma\Bigr) \;=\; \frac{\gamma}{1+\gamma}\,R .
    \end{equation*}
  \end{proposition}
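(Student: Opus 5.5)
Since $u^{k,\gamma}_{\dtr}=R-b^{k,\gamma}_{\dtr}$ and $R>0$ is fixed, maximizing the revenue over $k>1$ is equivalent to minimizing $b^{k,\gamma}_{\dtr}/R$. I would therefore work with its logarithm,
\begin{equation*}
  g(k) \;\coloneqq\; \ln\frac{b^{k,\gamma}_{\dtr}}{R} \;=\; \frac{(k-1)\ln(k-1) - k\ln k - \ln\gamma}{k},
\end{equation*}
which comes directly from Equation~\ref{eq:2sgame:b1-det}. Because $\exp$ is strictly increasing, the revenue is strictly increasing (decreasing) in $k$ exactly where $g$ is strictly decreasing (increasing). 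So it suffices to determine the sign of $g'(k)$ on $(1,\infty)$.

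First I simplify $g(k)=\frac{k-1}{k}\ln(k-1)-\ln k-\frac{\ln\gamma}{k}$ and differentiate term by term. The derivative of $\frac{k-1}{k}\ln(k-1)$ is $\frac{1}{k^2}\ln(k-1)+\frac{1}{k}$, and the derivative of $-\ln k$ is $-\frac1k$, so these $\frac1k$ terms cancel. The derivative of $-\frac{\ln\gamma}{k}$ is $\frac{\ln\gamma}{k^2}$. Hence
\begin{equation*}
  g'(k)=\frac{\ln(k-1)+\ln\gamma}{k^2}=\frac{\ln\bigl(\gamma(k-1)\bigr)}{k^2}.
\end{equation*}
This expression is negative for $k<1+1/\gamma$, zero at $k=1+1/\gamma$, and positive for $k>1+1/\gamma$. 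Therefore $g$, and with it $b^{k,\gamma}_{\dtr}$, is strictly decreasing on $(1,1+1/\gamma)$ and strictly increasing on $(1+1/\gamma,\infty)$. The revenue $u^{k,\gamma}_{\dtr}$ has the reverse monotonicity, and strict monotonicity on both sides makes the maximizer $k=1+1/\gamma$ unique.

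Finally I evaluate at $k^*=1+1/\gamma$, where $k^*-1=1/\gamma$ and $k^*=(1+\gamma)/\gamma$. Then
\begin{equation*}
  \frac{(k^*-1)^{k^*-1}}{\gamma\,(k^*)^{k^*}} = \frac{\gamma^{-(k^*-1)}}{\gamma}\cdot\frac{\gamma^{k^*}}{(1+\gamma)^{k^*}} = (1+\gamma)^{-k^*}.
\end{equation*}
Taking the $1/k^*$ power gives $b^{k^*,\gamma}_{\dtr}=R/(1+\gamma)$, and so $u^{k^*,\gamma}_{\dtr}=R-\frac{R}{1+\gamma}=\frac{\gamma}{1+\gamma}R$, as stated.

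The only step that needs care is the derivative computation, in particular confirming that the $\frac1k$ terms cancel. Once $g'(k)$ is reduced to the single logarithm $\ln(\gamma(k-1))$, the sign analysis and the evaluation at $k^*$ are immediate.
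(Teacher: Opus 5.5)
Your proof is correct and takes essentially the same approach as the paper: you study $\ln\bigl(b^{k,\gamma}_{\dtr}/R\bigr)$, obtain the derivative $\ln\bigl(\gamma(k-1)\bigr)/k^{2}$, read off the sign change at $k=1+1/\gamma$, and evaluate the bid there as $R/(1+\gamma)$. The differences are cosmetic: you differentiate the split form term by term rather than the combined numerator, and you evaluate the closed form through powers of $\gamma$ and $1+\gamma$ rather than through logarithms.
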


  Intuitively, we track the trend through the deterrence bid: Its derivative in $k$ shows that $b^{k,\gamma}_{\dtr}$ first decreases and then increases. 
  Since $u^{k,\gamma}_{\dtr} = R - b^{k,\gamma}_{\dtr}$, the revenue follows the opposite trend, rising and then falling.
  \begin{showWhenSubmit}
    The proof is in the full version of this report~\cite{artifact}.
  \end{showWhenSubmit}
  \proofloc{app:proof:optimal-k}

  The revenue-optimal exponent $k = 1+\frac{1}{\gamma}$ depends on~$\gamma$, so no single $k$ maximizes revenue across a range of $\gamma$ values.
  A designer who chooses a single $k$ therefore trades off the revenues of users facing different $\gamma$ values.

  %%%%%%%%%%%%%%%%%%%%%%%%%%%%%%%%%%%%%%%%%%%%%%%%%%%%%%%%%%%%%%%%%%%%%%%%%%%%%%%%
  %%%%%%%%%%%%%%%%%%%%%%%%%%%%%%%%%%%%%%%%%%%%%%%%%%%%%%%%%%%%%%%%%%%%%%%%%%%%%%%%
  %%%%%%%%%%%%%%%%%%%%%%%%%%%%%%%%%%%%%%%%%%%%%%%%%%%%%%%%%%%%%%%%%%%%%%%%%%%%%%%%
    \section{Sandwich Resistance}\label{sec:sandwich}

  We finally show that PRECEDE also defends against the \emph{sandwich attack}~\cite{zhou2021high}, a prominent attack built on front-running.
  We model the attack (\S\ref{sec:sandwich:model}), show that under PRECEDE defending against a sandwich attack is no harder than defending against front-running (\S\ref{sec:sandwich:reduction}), and conclude that the user stops it at a cost below the loss she would otherwise suffer (\S\ref{sec:sandwich:comparison}).

  %%%%%%%%%%%%%%%%%%%%%%%%%%%%%%%%%%%%%%%%%%%%%%%%%%%%%%%%%%%%%%%%%%%%%%%%%%%%%%%%%%%%%%%%%
  %%%%%%%%%%%%%%%%%%%%%%%%%%%%%%%%%%%%%%%%%%%%%%%%%%%%%%%%%%%%%%%%%%%%%%%%%%%%%%%%%%%%%%%%%
  \subsection{Sandwich Model}\label{sec:sandwich:model}

  In a sandwich attack, the user publishes a transaction~$\tx_U$ that buys an asset (e.g., a swap on an automated decentralized exchange~\cite{angeris2020improved}), and an attacker brackets it with two transactions of her own.
  The front-running transaction $\tx_{F}$ buys the same asset ahead of $\tx_U$ and raises its price, so the user overpays; the back-running transaction $\tx_{B}$ then sells at this raised price, and the user's overpayment becomes the attacker's profit~\cite{zhou2021high,heimbach2022eliminating}.
  If the attack succeeds, this trading profit $V > 0$ is exactly the user's loss~\cite{heimbach2022eliminating}; here $V$ counts only the trade, neglecting transaction fees.

  Since PRECEDE orders at random, the attacker can rarely place $\tx_{F}$ and $\tx_{B}$ in the slots immediately adjacent to $\tx_U$.
  Following Tumas et al.~\cite{tumas2024ammazing}, we therefore grant the attacker the weakest success condition: The sandwich succeeds whenever $\tx_{F}$ is ordered before $\tx_U$ \emph{and} $\tx_{B}$ after it, regardless of any transactions ordered in between.
  
  %%%%%%%%%%%%%%%%%%%%%%%%%%%%%%%%%%%%%%%%%%%%%%%%%%%%%%%%%%%%%%%%%%%%%%%%%%%%%%%%%%%%%%%%%
  %%%%%%%%%%%%%%%%%%%%%%%%%%%%%%%%%%%%%%%%%%%%%%%%%%%%%%%%%%%%%%%%%%%%%%%%%%%%%%%%%%%%%%%%%
  \subsection{Reduction to Front-Running}\label{sec:sandwich:reduction}

  We now show that, under PRECEDE, defending against the sandwich attack is no harder than defending against front-running with reward $R = V$.

  First, the back-run is free.
  The back-run only needs~$\tx_{B}$ ordered after $\tx_U$, and $\tx_{B}$ needs no weight to achieve this: Since $\tx_U$ carries a positive bid, a bid close to~$0$ already places $\tx_{B}$ after $\tx_U$ with probability close to one.
  The back-run cost is thus negligible.

  What remains matches our front-running model.
  Being ordered first is worth $V$ to both sides: If $\tx_{F}$ precedes~$\tx_U$, the attacker collects $V$ and the user loses it; if $\tx_U$ stays first, the attacker gets nothing.
  This is precisely a single competition~(\S\ref{sec:model:execution}) with reward $R = V$, awarded to whichever of the user's $\tx_U$ and the attacker's $\tx_{F}$ is ordered first.
  The whole front-running analysis therefore applies with~$R$ replaced by~$V$.

  %%%%%%%%%%%%%%%%%%%%%%%%%%%%%%%%%%%%%%%%%%%%%%%%%%%%%%%%%%%%%%%%%%%%%%%%%%%%%%%%%%%%%%%%%
  %%%%%%%%%%%%%%%%%%%%%%%%%%%%%%%%%%%%%%%%%%%%%%%%%%%%%%%%%%%%%%%%%%%%%%%%%%%%%%%%%%%%%%%%%
  \subsection{Cheaper Defense by PRECEDE}\label{sec:sandwich:comparison}

  By the reduction, the user defends against sandwich attacks exactly as she defends against front-running: She publishes the deterrence bid $b^{k,\gamma}_{\dtr}$ (Equation~\ref{eq:2sgame:b1-det}) evaluated at ${R = V}$, and no attack is profitable.
  When ${k \ge \ln 2/\ln(1+\gamma)}$, this bid is strictly below~$V$ (Proposition~\ref{prop:det-profitable}), so the user defends at a cost smaller than the loss~$V$ she would otherwise suffer.
  Empirically, with $k=3.010$ covering Uniswap V2 and V3~(\S\ref{sec:tradeoff:feasible}), the user stops a sandwich at a cost of $0.830V$ on V2 and $0.555V$ on V3, well below the loss $V$ she would suffer in practical blockchains.

  Practical blockchains order by descending bid~\cite{buterin2013ethereum,yakovenko2018solana}, which is the $k \to \infty$ limit of PRECEDE~(\S\ref{sec:precede:intuition}).
  In this limit, the deterrence bid is $V$ itself.
  To keep $\tx_U$ ahead of the front-run, the user must then pay her entire potential loss: Any smaller bid lets the attacker outbid her and profit, so the user would lose both $V$ and her defense cost.
  The defense under PRECEDE is therefore strictly cheaper than under practical systems.

  % At the opposite extreme, uniform random ordering (${k \to 0}$), the attacker would instead spam many \hcAdd{front-running transactions to execute the sandwich attack on the user}~\cite{tumas2024ammazing}, which PRECEDE's Anti-Spam guarantee already rules out~(Theorem~\ref{thm:sybil:tight}).

  %%%%%%%%%%%%%%%%%%%%%%%%%%%%%%%%%%%%%%%%%%%%%%%%%%%%%%%%%%%%%%%%%%%%%%%%%%%%%%%%%%%%%%
  %%%%%%%%%%%%%%%%%%%%%%%%%%%%%%%%%%%%%%%%%%%%%%%%%%%%%%%%%%%%%%%%%%%%%%%%%%%%%%%%%%%%%%
  %%%%%%%%%%%%%%%%%%%%%%%%%%%%%%%%%%%%%%%%%%%%%%%%%%%%%%%%%%%%%%%%%%%%%%%%%%%%%%%%%%%%%%
    \section{Conclusion}\label{sec:conclusion}
  Front-running is a central security problem for blockchains.
  Despite theoretical work and practical mitigation efforts, existing defenses fall short, as enforcing a causal order is fundamentally hard.
  We presented PRECEDE, which prevents front-running by removing its incentive rather than trusting any party.
  By ordering transactions through a power-weighted randomized lottery, PRECEDE allows a user to deter every entrant with a single bid.
  PRECEDE guarantees Causal Ordering, Anti-Spam, Profitability, and in particular defense against sandwich attacks. 
  A user playing this deterrence strategy while the attacker abstains forms a subgame-perfect Nash equilibrium, so no rational attacker can profit by deviating.
  Every censorship-resistant blockchain can benefit from PRECEDE simply by updating its ordering function.
  
  % Our analysis also offers system designers a lever: Although $\gamma$ is not freely tunable, a larger $\gamma$ raises user revenue.

  \bibliographystyle{IEEEtran}
  \bibliography{reference}

\appendices

\section{Practical Losing-Fee Rates}\label{app:practical:gamma}

    \begin{table*}[t]
    \centering
    \footnotesize
    \setlength{\tabcolsep}{4pt}
    \begin{tabular}{lcccc}
    \toprule
    \textbf{Pool Name} & \textbf{Success Gas} & \textbf{Failure Gas} & $\boldsymbol{\gamma}$ & \textbf{Contract Address} \\
    \midrule
    \multicolumn{5}{l}{\textit{Uniswap V2}} \\
    USDT/WETH   & $[121{,}047,\ 121{,}059]$   & $[31{,}330,\ 31{,}342]$   & $[0.259,\ 0.259]$ & \texttt{0x0d4a11d5eeaac28ec3f61d100daf4d40471f1852} \\
    USDC/WETH  & $[120{,}603,\ 120{,}615]$   & $[31{,}360,\ 31{,}372]$   & $[0.260,\ 0.260]$ & \texttt{0xb4e16d0168e52d35cacd2c6185b44281ec28c9dc} \\
    \midrule
    \multicolumn{5}{l}{\textit{Uniswap V3}} \\
    USDT/WETH 0.01\%  & $[130{,}219,\ 131{,}545]$   & $[117{,}826,\ 118{,}247]$ & $[0.898,\ 0.905]$ & \texttt{0xc7bbec68d12a0d1830360f8ec58fa599ba1b0e9b} \\
    USDC/WETH 0.01\%  & $[129{,}565,\ 130{,}617]$   & $[116{,}214,\ 141{,}948]$ & $[0.890,\ 1.087]$ & \texttt{0xe0554a476a092703abdb3ef35c80e0d76d32939f} \\
    USDC/WETH 0.05\%  & $[120{,}911,\ 130{,}649]$   & $[113{,}466,\ 123{,}078]$ & $[0.890,\ 1.010]$ & \texttt{0x88e6a0c2ddd26feeb64f039a2c41296fcb3f5640} \\
    USDT/WETH 0.05\%  & $[121{,}912,\ 131{,}678]$   & $[115{,}436,\ 124{,}149]$ & $[0.877,\ 1.017]$ & \texttt{0x11b815efb8f581194ae79006d24e0d814b7697f6} \\
    WBTC/WETH 0.05\%  & $[113{,}934,\ 122{,}696]$   & $[105{,}555,\ 115{,}121]$ & $[0.867,\ 1.010]$ & \texttt{0x4585fe77225b41b697c938b018e2ac67ac5a20c0} \\
    USDT/WBTC 0.05\%  & $[125{,}941,\ 134{,}727]$   & $[117{,}524,\ 127{,}214]$ & $[0.873,\ 1.009]$ & \texttt{0x56534741cd8b152df6d48adf7ac51f75169a83b2} \\
    USDC/USDT 0.01\%  & $[133{,}600,\ 133{,}698]$   & $[126{,}125,\ 126{,}223]$ & $[0.944,\ 0.944]$ & \texttt{0x3416cf6c708da44db2624d63ea0aaef7113527c6} \\
    DAI/WETH 0.05\%   & $[120{,}054,\ 150{,}186]$   & $[105{,}877,\ 159{,}114]$ & $[0.881,\ 1.059]$ & \texttt{0x60594a405d53811d3bc4766596efd80fd545a270} \\
    \bottomrule
    \end{tabular}
    \vspace{2mm}
    \caption{Practical $\gamma$ on a mainnet fork (10 block samples per pool).}
    \label{tab:practical-gamma}
  \end{table*}
  We measure practical values of the losing-fee rate $\gamma$ for arbitrage on Ethereum.
  A transaction consumes resources measured in \emph{gas} and bids a price per unit of gas, so its total payment is the product of the gas price and the gas it consumes; this is the bid $b$ in our model.
  Winning the front-running competition, it executes fully and pays this $b$; losing, it reverts after consuming a ratio~$\gamma$ of that gas, and so pays $\gamma b$.
  Therefore, $\gamma$ is simply the ratio of the gas a transaction consumes if it loses to that if it wins.

  The value of $\gamma$ varies across arbitrages, but a participant can simulate her transaction before publishing it to find $\gamma$ in advance, so the user can set her deterrence bid accordingly.

  An arbitrage is simply a token swap, so we measure~$\gamma$ by running swaps on a local simulator (an Ethereum~\cite{buterin2013ethereum} mainnet fork served by Foundry's Anvil~\cite{foundry}), across the ten most-traded pools of Uniswap, the leading decentralized exchange on Ethereum, in~2025, of which two use version~2~(V2)~\cite{UniswapV2} and eight use version~3~(V3)~\cite{UniswapV3}, listed in Table~\ref{tab:practical-gamma}.
  Each pool swaps one pair of tokens at a fixed trading fee rate; since~V3 offers the same pair in several pools at different rates, we annotate each pool with its rate.
  We uniformly sample 10 blocks at random from all Ethereum blocks in 2025.
  For each pool, at each sampled block, we send five identical swaps: The first succeeds and the remaining four fail.
  Every pool pairs tokens among WETH and WBTC, the digital tokens standing for Ether~\cite{buterin2013ethereum} and Bitcoin~\cite{nakamoto2008bitcoin}, and US-dollar stablecoins (USDT, USDC, DAI), digital tokens each worth one US dollar.
  We publish each swap with a fixed input amount: 1~WETH or 1~WBTC when the pool holds ether or bitcoin, and 1{,}000~USDT in the stablecoin-only pool.

  In Uniswap~V3, a failing swap follows a different computation path than a successful one and can consume more gas.
  A rational participant caps her exposure with a gas limit, but cannot set it to the exact success cost: A successful execution needs some headroom to complete, so she sets the limit somewhat above the gas a success consumes, and a failing swap may consume up to this higher limit, pushing~$\gamma$ slightly above $1$.

  In every pool and block, the four failing swaps consume exactly the same amount of gas as one another.
  Across the two~V2 pools, $\gamma$ is stable at about $0.26$, that is, a failing swap uses less gas than a successful one.
  Across the eight~V3 pools, $\gamma$ is more variable, ranging over~$[0.867, 1.087]$, that is, a failing swap uses nearly as much gas as a successful one.
  Table~\ref{tab:practical-gamma} details, for each pool, the minimum and maximum of gas consumption and $\gamma$ over all 100 samples (10 blocks for each of the 10 pools).
  \begin{showWhenSubmit}
    The code is in our anonymous repo~\cite{artifact}.
\end{showWhenSubmit}

\iffullversion

\section{Proof of Lemma~\ref{lem:2sgame:deterrence}}\label{app:proof:2sgame-deterrence}

\begin{restated}{Lemma}{lem:2sgame:deterrence}{Deterrence weight}
    Fix the exponent parameter $k > 1$ and the losing-fee rate $\gamma > 0$, and let~$W$ be the total weight of transactions the attacker observes.
    Then the attacker's expected revenue with any single bid $b > 0$ satisfies $ u(b; W) \le 0 $ if and only if
    \begin{equation*}
        W \;\ge\; \frac{(k-1)^{k-1}}{\gamma\,k^{k}}\,R^{k}.
    \end{equation*}
\end{restated}

  \begin{proof}
    For $b > 0$, the factors $b$ and $b^{k} + W$ in Equation~\ref{eq:payoff:u} are positive, so $u(b; W)$ has the same sign as $R\,b^{k-1} - b^{k} - \gamma\,W$. Hence $u(b; W) \le 0$ for all $b > 0$ if and only if
    \begin{equation}\label{eq:det:proof-cond}
      R\,b^{k-1} - b^{k} \;\le\; \gamma\,W \qquad \text{for all } b > 0.
    \end{equation}
    
    That is, if and only if $\gamma\,W$ is at least the maximum of $g(b) \coloneqq R\,b^{k-1} - b^{k}$ over $b > 0$.
    We compute this maximum through differentiation.
    The derivative of $g$ is
    \begin{equation*}
      g'(b) = (k-1)R\,b^{k-2} - k\,b^{k-1} = b^{k-2}\bigl((k-1)R - k\,b\bigr) .
    \end{equation*}
    For $b > 0$ the factor $b^{k-2}$ is positive, so $g'(b)$ has the sign of $(k-1)R - k\,b$, which is positive for $b < \frac{k-1}{k}\,R$ and negative for $b > \frac{k-1}{k}\,R$. We denote this point by
    \begin{equation*}
      b^{\star} \;\coloneqq\; \frac{k-1}{k}\,R .
    \end{equation*}
    Thus $g(b)$ increases up to $b^{\star}$ and decreases afterwards, attaining its maximum at $b^{\star}$. Since $R - b^{\star} = R/k$, the maximum of $g(b)$ is
    \begin{equation*}
      \begin{aligned}
        \max_{b > 0} g(b) =& g(b^{\star})\\
        =& (b^{\star})^{k-1}\,(R - b^{\star})\\
        =& \left(\frac{k-1}{k}\,R\right)^{k-1}\frac{R}{k} \\
        =& \frac{(k-1)^{k-1}}{k^{k}}\,R^{k} .
      \end{aligned}
    \end{equation*}
    Substituting this into~Equation~\ref{eq:det:proof-cond}, the condition ${u(b; W) \le 0}$ for all $b > 0$ becomes $\gamma\,W \ge \frac{(k-1)^{k-1}}{k^{k}}\,R^{k}$.
    Therefore, $u(b; W) \le 0$ for all $b > 0$ if and only if
    \begin{equation*}
      W \;\ge\; \frac{(k-1)^{k-1}}{\gamma\,k^{k}}\,R^{k} . \qedhere
    \end{equation*}
  \end{proof}

\section{Proof of Theorem~\ref{thm:sybil:tight}}\label{app:proof:sybil-tight}

\begin{restated}{Theorem}{thm:sybil:tight}{Anti-Spam Threshold}
    Fix the exponent parameter $k>1$ and the losing-fee rate $\gamma>0$.
    For every bid vector $\bidvector_1$ of $P_1$ and every profile $\bidvector_{-1}$ of the other participants' bids, replacing $\bidvector_1$ with the single bid $(\bar b_1)$ does not decrease $P_1$'s revenue,
    \begin{equation*}
      u_1\bigl((\bar b_1), \bidvector_{-1}\bigr) \;\geq\; u_1(\bidvector_1, \bidvector_{-1}),
    \end{equation*}
    if and only if
    \begin{equation*}
      k \;\geq\; \frac{\ln 2}{\ln(1+\gamma)} .
    \end{equation*}
\end{restated}

\begin{proof}
% \begin{fullproof}
We show that the single bid $(\bar b_1)$ weakly dominates the original split vector $\bidvector_1$ for $P_1$.
We first reduce this revenue comparison to a single inequality between expected payments, and then treat the cases $\gamma \ge 1$ and~$0 < \gamma < 1$ separately.

The single bid and the split win with the same probability, since they have the same total weight:
\begin{equation*}
    \sum_{j=1}^{m_1} q(b_1^{(j)}, \bidvector_{-1}, \bidvector_1)=
    q(\bar b_1, \bidvector_{-1}, (\bar b_1)).
\end{equation*}

Recall that $P_1$'s revenue is her expected reward minus her expected payment $C_1$ (Equation~\ref{eq:sybil:utility}). Since the winning probability is unchanged, the difference in revenue between the single bid and the split is
\begin{align*}
    &u_1\bigl((\bar b_1),\bidvector_{-1}\bigr) - u_1(\bidvector_1,\bidvector_{-1})\\
    =& R\,q(\bar b_1, \bidvector_{-1}, (\bar b_1)) - R\sum_{j=1}^{m_1} q(b_1^{(j)}, \bidvector_{-1}, \bidvector_1) \\
    & \quad + C_1(\bidvector_1, \bidvector_{-1}) - C_1\bigl((\bar b_1), \bidvector_{-1}\bigr)\\
    =& C_1(\bidvector_1, \bidvector_{-1}) - C_1\bigl((\bar b_1), \bidvector_{-1}\bigr).
\end{align*}

We expand both payments by Equation~\ref{eq:sybil:Ci}.
Under the split $\bidvector_1$, Equation~\ref{eq:sybil:Ci} gives
\[
    \begin{aligned}
        C_1(\bidvector_1,\bidvector_{-1})
        &=
        \sum_{j=1}^{m_1}\Bigl(\gamma\, b_1^{(j)} + (1-\gamma)\, q(b_1^{(j)}, \bidvector_{-1}, \bidvector_1)\, b_1^{(j)}\Bigr) \\
        &=
        \gamma\sum_{j=1}^{m_1} b_1^{(j)}
        +
        (1-\gamma)\sum_{j=1}^{m_1} q(b_1^{(j)}, \bidvector_{-1}, \bidvector_1)\, b_1^{(j)} \\
        &=
        \gamma\sum_{j=1}^{m_1} b_1^{(j)}
        +
        (1-\gamma)\frac{\sum_{j=1}^{m_1}(b_1^{(j)})^{k+1}}{\sum_{i=1}^{n}\sum_{l=1}^{m_i}(b_i^{(l)})^k} .
    \end{aligned}
\]
Denote $P_1$'s total weight by $S \coloneqq \sum_{j=1}^{m_1}(b_1^{(j)})^k = \bar b_1^k$, and the total weight from other participants by $\Omega \coloneqq \sum_{i\neq 1}\sum_{l=1}^{m_i}(b_i^{(l)})^k$.
Denoting $L \coloneqq \sum_{j=1}^{m_1} b_1^{(j)}$ and ${\Lambda \coloneqq \sum_{j=1}^{m_1}(b_1^{(j)})^{k+1}}$, we obtain
\[
    C_1(\bidvector_1,\bidvector_{-1}) = \gamma L + (1-\gamma)\frac{\Lambda}{S+\Omega}.
\]

Under the surrogate $(\bar b_1)$, $P_1$ holds the single bid $\bar b_1$, so Equation~\ref{eq:sybil:Ci} gives
\[
    \begin{aligned}
        C_1\bigl((\bar b_1),\bidvector_{-1}\bigr)
        &=
        \gamma\bar b_1 + (1-\gamma)\, q\bigl(\bar b_1, \bidvector_{-1}, (\bar b_1)\bigr)\, \bar b_1 \\
        &=
        \gamma\bar b_1 + (1-\gamma)\frac{\bar b_1^k}{\bar b_1^k + \Omega}\, \bar b_1 \\
        &=
        \gamma\bar b_1 + (1-\gamma)\frac{S\bar b_1}{S+\Omega}.
    \end{aligned}
\]

Substituting the two payments into the revenue difference $C_1(\bidvector_1,\bidvector_{-1}) - C_1\bigl((\bar b_1),\bidvector_{-1}\bigr)$,
\[
    \begin{aligned}
        & u_1\bigl((\bar b_1),\bidvector_{-1}\bigr) - u_1(\bidvector_1,\bidvector_{-1}) \\
        =\;& C_1(\bidvector_1,\bidvector_{-1}) - C_1\bigl((\bar b_1),\bidvector_{-1}\bigr) \\
        =\;& \gamma(L - \bar b_1) - (1-\gamma)\frac{S\bar b_1 - \Lambda}{S+\Omega}.
    \end{aligned}
\]
Therefore, the single bid weakly dominates the split, $u_1\bigl((\bar b_1),\bidvector_{-1}\bigr) \ge u_1(\bidvector_1,\bidvector_{-1})$, if and only if this difference is non-negative, that is,
\begin{equation}\label{eq:sybil:tight-main}
    \gamma(L-\bar b_1)
    \ge
    (1-\gamma)\frac{S\bar b_1-\Lambda}{S+\Omega}.
\end{equation}

Since $b_1^{(j)}\le \bar b_1$ for every $j$, we have
\[
    \Lambda
    =
    \sum_{j=1}^{m_1}(b_1^{(j)})^k b_1^{(j)}
    \le
    \bar b_1\sum_{j=1}^{m_1}(b_1^{(j)})^k
    =
    S\bar b_1.
\]
Therefore, we have 
\begin{equation}\label{eq:sybil:tight-main-weak}
      S\bar b_1 - \Lambda \ge 0.
\end{equation}

Now we discuss the two cases for $\gamma$.
First, if $\gamma \ge 1$, the single bid weakly dominates the split for every $\bidvector_1$ and every profile $\bidvector_{-1}$.
The left-hand side $\gamma(L-\bar b_1)$ of Equation~\ref{eq:sybil:tight-main} is non-negative: Since $k > 1$ and $b_1^{(j)} \le L$ for every $j$,
\[
    \bar b_1^k = \sum_{j=1}^{m_1}(b_1^{(j)})^k \leq (\sum_{j=1}^{m_1}(b_1^{(j)}))^k = L^k ,
\]
so $\bar b_1 \le L$.
The right-hand side $(1-\gamma)(S\bar b_1-\Lambda)/(S+\Omega)$ is non-positive, since $1-\gamma \le 0$ and $S\bar b_1 - \Lambda \ge 0$ (Equation~\ref{eq:sybil:tight-main-weak}).
Hence Equation~\ref{eq:sybil:tight-main} holds.
This is consistent with Theorem~\ref{thm:sybil:tight}: For $\gamma \ge 1$ the Anti-Spam bound is automatically met, as $\ln 2/\ln(1+\gamma) \le 1 < k$.

Second, we consider the case where ${0 < \gamma < 1}$.
We first show that, for a fixed $k$, the single bid weakly dominates every split if and only if $\gamma$ is not below a threshold $\gamma_0$ that depends on $k$. 
We then deduce the stated bound on $k$.
We proceed in four steps.
Step~1 reformulates the dominance condition as a threshold on $\gamma$. 
Step~2 evaluates this supremum at the equal two-way split, which gives the necessary threshold $\gamma_0 = 2^{1/k}-1$.
Step~3 shows, by a merging argument, that no finer split requires a larger $\gamma$, so~$\gamma_0$ is also sufficient.
Step~4 returns to $k$, rearranging~$\gamma \ge \gamma_0$ into the bound $k \ge \ln 2/\ln(1+\gamma)$.

\emph{Step 1: Reformulation as a threshold on $\gamma$.}
Since ${1-\gamma > 0}$ and $S\bar b_1 - \Lambda \ge 0$ (Equation~\ref{eq:sybil:tight-main-weak}), the right-hand side of Equation~\ref{eq:sybil:tight-main} is non-negative and decreasing in $\Omega$. Hence, it holds for every $\Omega \ge 0$ if and only if it holds at $\Omega = 0$, where the right-hand side is largest and the inequality is hardest to satisfy.
It therefore suffices to prove the inequality at $\Omega = 0$,
\begin{equation}\label{eq:sybil:tight-main-0}
    \gamma(L-\bar b_1)
    \ge
    (1-\gamma)\Bigl(\bar b_1 - \frac{\Lambda}{S}\Bigr).
\end{equation}

We now reformulate the inequality in terms of $P_1$'s weight shares, the fractions of her total weight $S$ carried by her individual bids.
For each bid $j$, denote this share by~$p^{(j)}$, and let~$a \coloneqq 1/k$ be the reciprocal exponent:
\[
    a\coloneqq \frac{1}{k}\in(0,1),
    \qquad
    p^{(j)}
    \coloneqq
    \frac{(b_1^{(j)})^k}{S}
    =
    \left(\frac{b_1^{(j)}}{\bar b_1}\right)^k.
\]
We have
$$
    \frac{b_1^{(j)}}{\bar b_1}=(p^{(j)})^a,
$$
and
$$
    \sum_{j=1}^{m_1} p^{(j)}= \sum_{j=1}^{m_1} \frac{(b_1^{(j)})^k}{S}=1.
$$

Dividing Equation~\ref{eq:sybil:tight-main-0} by $\bar b_1$ gives
\[
    \gamma\Bigl(\frac{L}{\bar b_1} - 1\Bigr)
    \ge
    (1-\gamma)\Bigl(1 - \frac{\Lambda}{S\bar b_1}\Bigr).
\]
By the definitions of $L$ and $\Lambda$, with $S = \bar b_1^k$, ${p^{(j)} = (b_1^{(j)})^k/\bar b_1^k}$, and $b_1^{(j)}/\bar b_1 = (p^{(j)})^a$,
\[
    \begin{aligned}
        \frac{L}{\bar b_1}
        &= \sum_{j=1}^{m_1}\frac{b_1^{(j)}}{\bar b_1}
        = \sum_{j=1}^{m_1}(p^{(j)})^a, \\
        \frac{\Lambda}{S\bar b_1}
        &= \sum_{j=1}^{m_1}\frac{(b_1^{(j)})^k}{\bar b_1^k}\,\frac{b_1^{(j)}}{\bar b_1} \\
        &= \sum_{j=1}^{m_1} p^{(j)}\,(p^{(j)})^a
        = \sum_{j=1}^{m_1}(p^{(j)})^{1+a}.
    \end{aligned}
\]
Substituting these into the inequality above gives
\begin{equation}\label{eq:sybil:tight-normalized}
    \gamma\left(\sum_{j=1}^{m_1} (p^{(j)})^a-1\right)
    \ge
    (1-\gamma)\left(1-\sum_{j=1}^{m_1} (p^{(j)})^{1+a}\right).
\end{equation}

Let $p=(p^{(1)}, \ldots, p^{(m_1)})$ be the resulting share vector.
For a split, define
$
    {F(p)\coloneqq \sum_{j=1}^{m_1} (p^{(j)})^a-1},
$
and
$
    {G(p)\coloneqq 1-\sum_{j=1}^{m_1} (p^{(j)})^{1+a}}.
$
By the definitions of $F$ and~$G$, Equation~\ref{eq:sybil:tight-normalized} is exactly
$
    \gamma F(p) \ge (1-\gamma)G(p).
$
Both terms are positive: $F(p)>0$ since $x^a$ is strictly concave on $(0,1)$, and $G(p)>0$ since $(p^{(j)})^{1+a}<p^{(j)}$ for every component strictly between $0$ and $1$.
Hence $F(p)+G(p)>0$, so dividing by it, the inequality is equivalent to
\begin{equation}\label{eq:sybil:gamma-ratio}
    \gamma
    \ge
    \frac{G(p)}{F(p)+G(p)}.
\end{equation}
The single bid therefore weakly dominates \emph{every} split if and only if $\gamma$ satisfies Equation~\ref{eq:sybil:gamma-ratio} for all split vectors $p$, that is,
\[
    \gamma \ge \sup_{p}\frac{G(p)}{F(p)+G(p)}.
\]
This supremum is the exact threshold on $\gamma$.

\emph{Step 2: A necessary bound from the worst binary split.}
The threshold is the supremum over all splits, so any particular split lower-bounds it and thus yields a necessary condition on $\gamma$. 
We therefore probe with the simplest family, the two-bid splits $p=(z,1-z)$ for $z\in(0,1)$ (recall that $\sum_{j} p^{(j)} = 1$). 
Among these we identify the worst one, the split that maximizes the ratio in Equation~\ref{eq:sybil:gamma-ratio} and so requires the largest $\gamma$.
Specializing $F$ and $G$ to this two-component vector, denote
\begin{equation}\label{eq:sybil:F2G2}
    \begin{aligned}
        F_2(z) &\coloneqq F\bigl((z,1-z)\bigr) = z^a+(1-z)^a-1, \\
        G_2(z) &\coloneqq G\bigl((z,1-z)\bigr) = 1-z^{1+a}-(1-z)^{1+a}.
    \end{aligned}
\end{equation}
Both are symmetric under $z\mapsto 1-z$, so $(z,1-z)$ and its mirror image $(1-z,z)$ give the same ratio; we may therefore restrict to $z\in(0,1/2]$.
For this binary split, the required lower bound on $\gamma$ is
\[
    \frac{G_2(z)}{F_2(z)+G_2(z)}
    =
    \frac{G_2(z)/F_2(z)}{1+G_2(z)/F_2(z)}.
\]
This increases with the ratio $G_2(z)/F_2(z)$, so maximizing the bound is the same as maximizing $G_2(z)/F_2(z)$.

We show that $G_2(z)/F_2(z)$ is increasing on $(0,1/2]$. 
Both functions vanish at the left endpoint, since $F_2(0)=0^a+1-1=0$ and $G_2(0)=1-0-1=0$, so by the monotone form of l'H\^opital's rule it suffices to show that the ratio of derivatives $G_2'(z)/F_2'(z)$ is increasing on $(0,1/2]$.

Indeed,
\[
    F_2'(z)
    =
    a\left(z^{a-1}-(1-z)^{a-1}\right)>0,
\]
and
\[
    G_2'(z)
    =
    (1+a)\left((1-z)^a-z^a\right)>0.
\]
Dividing the two derivatives,
\[
    \frac{G_2'(z)}{F_2'(z)}
    =
    \frac{1+a}{a}\,
    \frac{(1-z)^a-z^a}{z^{a-1}-(1-z)^{a-1}}.
\]
Dividing the numerator and denominator by $(1-z)^a$,
\[
    \frac{G_2'(z)}{F_2'(z)}
    =
    \frac{1+a}{a}\,
    \frac{(1-z)\bigl[1-\bigl(\tfrac{z}{1-z}\bigr)^a\bigr]}{\bigl(\tfrac{z}{1-z}\bigr)^{a-1}-1}.
\]
Using $1-z=1/\bigl(1+\tfrac{z}{1-z}\bigr)$, the right-hand side depends on $z$ only through $\tfrac{z}{1-z}$,
\[
    \frac{G_2'(z)}{F_2'(z)}
    =
    \frac{1+a}{a}\,
    \frac{1-\bigl(\tfrac{z}{1-z}\bigr)^a}{\bigl(1+\tfrac{z}{1-z}\bigr)\bigl[\bigl(\tfrac{z}{1-z}\bigr)^{a-1}-1\bigr]}.
\]
Let $r=\tfrac{z}{1-z}$; then we have
\begin{equation}\label{eq:sybil:r}
    \frac{G_2'(z)}{F_2'(z)}
    =
    \frac{1+a}{a}\,
    \frac{1-r^a}{(1+r)\bigl(r^{a-1}-1\bigr)}.
\end{equation}

We now show the right-hand side of Equation~\ref{eq:sybil:r} is increasing in ${r\in(0,1]}$. 
Let $h(r)\coloneqq\dfrac{1-r^a}{(1+r)\bigl(r^{a-1}-1\bigr)}$. Differentiating,
\[
    h'(r)=\frac{1-r^{2a-2}+(1-a)\,r^{a-2}\bigl(1-r^2\bigr)}{(1+r)^2\bigl(r^{a-1}-1\bigr)^2} .
\]
The denominator is positive, so $h'$ has the sign of its numerator, which we denote $\psi(r)\coloneqq 1-r^{2a-2}+(1-a)\,r^{a-2}\bigl(1-r^2\bigr)$. 
It suffices to show $\psi>0$ on $(0,1)$: Then $h'>0$ on $(0,1)$, so $h$ is increasing on $(0,1]$.
We have $\psi(1)=0$ and
\[
    \psi'(r)=(1-a)\,r^{a-3}\bigl(2r^a-a r^2+(a-2)\bigr) .
\]
With $\chi(r)\coloneqq 2r^a-a r^2+(a-2)$ and $(1-a)\,r^{a-3}>0$ on~$(0,1)$, we have $\psi'(r)>0$ if and only if $\chi(r)>0$.
We have $\chi(1)=2-a+a-2=0$, and
\[
    \chi'(r)=2a\,r^{a-1}-2a\,r=2a\,r\,\bigl(r^{a-2}-1\bigr) .
\]
For $r\in(0,1)$ the exponent $a-2$ is negative, so ${r^{a-2}>1}$ and hence $\chi'(r)>0$. Thus $\chi$ is increasing on $(0,1]$, and since $\chi(1)=0$, we get $\chi<0$ on $(0,1)$.
Returning to $\psi'(r)=(1-a)\,r^{a-3}\,\chi(r)$, on $(0,1)$ the factors $1-a$ and~$r^{a-3}$ are positive while $\chi(r)<0$, so $\psi'(r)<0$. 
Thus~$\psi$ is decreasing on $(0,1]$, and since $\psi(1)=0$, we get $\psi>0$ on $(0,1)$, the inequality we needed.

It remains to transfer this back to $z$. Since $r=z/(1-z)$ is increasing in $z$ and maps $(0,1/2]$ onto $(0,1]$, $G_2'(z)/F_2'(z)$ is increasing on $(0,1/2]$. 
The rule above then lifts this to~$G_2(z)/F_2(z)$, which is increasing on $(0,1/2]$ and hence maximal at $z=1/2$. 
The binary ratio increases with $G_2/F_2$, so it too is maximal at $z=1/2$: The worst binary split is the equal split $z=1/2$.
At this split,
\[
    F_2(1/2)
    =
    2^{1-a}-1,
    \qquad
    G_2(1/2)
    =
    1-2^{-a},
\]
and hence the binary ratio at $z=1/2$ is
\begin{equation*}
    \frac{G_2(1/2)}{F_2(1/2)+G_2(1/2)}
    =
    2^a-1.
\end{equation*}
Denote $\gamma_0 \coloneqq 2^a-1$. 
Since the binary ratio is maximized at $z=1/2$, every $z\in(0,1)$ satisfies ${G_2(z)/(F_2(z)+G_2(z)) \le \gamma_0}$, that is,
\begin{equation}\label{eq:sybil:binary-bound}
    \gamma_0 F_2(z) \ge (1-\gamma_0)\,G_2(z).
\end{equation}
The worst binary split therefore requires $\gamma \ge \gamma_0$, the necessary bound on the threshold.

\emph{Step 3: Sufficiency by a merging argument.}
We now show that this binary-split bound is generally sufficient: For every split $p$, $\gamma_0$ is at least $\frac{G(p)}{F(p)+G(p)}$.
We turn to the equivalent form of this: Every split satisfies Equation~\ref{eq:sybil:tight-normalized} when $\gamma = \gamma_0$.

Setting $\gamma = \gamma_0$ in Equation~\ref{eq:sybil:tight-normalized} and using ${\gamma_0 + (1-\gamma_0) = 1}$, the inequality is equivalent to a sum over the shares,
\begin{equation}\label{eq:sybil:tight-normalized-2}
    \sum_{j=1}^{m_1}\bigl[\gamma_0 (p^{(j)})^a + (1-\gamma_0)(p^{(j)})^{1+a}\bigr] \ge 1.
\end{equation}
Denote the contribution of a single share by
\[
    f(x) \coloneqq \gamma_0 x^a + (1-\gamma_0)x^{1+a}.
\]
Note that $f(1) = 1$, so Equation~\ref{eq:sybil:tight-normalized-2} is exactly
\begin{equation}\label{eq:sybil:f-sum}
    \sum_{j=1}^{m_1} f(p^{(j)}) \ge f(1).
\end{equation}
We prove Equation~\ref{eq:sybil:f-sum} by a merging argument. 
Take any two components $x,y>0$, let $t=x+y\le 1$, and denote~$\zeta=x/t$, so that $x=\zeta t$ and $y=(1-\zeta)t$; then we have
\begin{align*}
    & f(x)+f(y)-f(t) \\
    =\;& \gamma_0 x^a + (1-\gamma_0)x^{1+a}+ \gamma_0 y^a + (1-\gamma_0)y^{1+a} \\
    & \quad {}- \gamma_0 t^a - (1-\gamma_0)t^{1+a} \\
    =\;& \gamma_0\bigl(x^a+y^a-t^a\bigr) \\
    & \quad + (1-\gamma_0)\bigl(x^{1+a}+y^{1+a}-t^{1+a}\bigr) \\
    =\;& \gamma_0 t^a\bigl(\zeta^a+(1-\zeta)^a-1\bigr) \\
    & \quad + (1-\gamma_0)t^{1+a}\bigl(\zeta^{1+a}+(1-\zeta)^{1+a}-1\bigr) \\
    \overset{\eqref{eq:sybil:F2G2}}{=}\;& \gamma_0 t^a F_2(\zeta) - (1-\gamma_0)t^{1+a}G_2(\zeta) \\
    =\;& t^a\bigl[\gamma_0 F_2(\zeta) - (1-\gamma_0)\,t\,G_2(\zeta)\bigr] \\
    \ge\;& t^a\bigl[\gamma_0 F_2(\zeta) - (1-\gamma_0)G_2(\zeta)\bigr] \\
    \overset{\eqref{eq:sybil:binary-bound}}{\ge}\;& 0.
\end{align*}
Therefore,
$
    f(x)+f(y)\ge f(x+y).
$
Iteratively merging the components of any split vector $p$ yields
\[
    \sum_{j=1}^{m_1} f(p^{(j)})
    \ge
    f\left(\sum_{j=1}^{m_1} p^{(j)}\right)
    =
    f(1)
    =
    1,
\]
which proves Equation~\ref{eq:sybil:f-sum}. 
Hence $\gamma\ge\gamma_0$ is sufficient: The single bid weakly dominates every split.

\emph{Step 4: Bound on $k$.}
Together with the necessary bound from the worst binary split, this shows that for $0{<\gamma<1}$ the single bid weakly dominates every split if and only if
\begin{equation*}
    \gamma\ge\gamma_0 = 2^{1/k}-1,
\end{equation*}
or, equivalently,
\begin{equation}\label{eq:sybil:gamma0-condition}
    k\ge \frac{\ln 2}{\ln(1+\gamma)} .
\end{equation}

\emph{Conclusion.}
For any $\gamma>0$, the single bid weakly dominates every split if and only if $k\ge \ln 2/\ln(1+\gamma)$: For $\gamma\ge1$ this holds automatically, as $\ln 2/\ln(1+\gamma)\le 1<k$, and for $0<\gamma<1$ it is the condition just derived (Equation~\ref{eq:sybil:gamma0-condition}).
% \end{fullproof}
\end{proof}

\section{Proof of Proposition~\ref{prop:det-profitable}}\label{app:proof:det-profitable}

\begin{restated}{Proposition}{prop:det-profitable}{Deterrence is profitable}
If
$
    {k \ge \frac{\ln 2}{\ln(1+\gamma)}},
$
then the deterrence strategy results in strictly positive revenue, namely
$
    u_{\dtr}^{k,\gamma}>0.
$
\end{restated}

\begin{proof}
% \begin{fullproof}
We show that Profitability holds exactly when the losing-fee rate satisfies $\gamma > (k-1)^{k-1}/k^k$, and that the Anti-Spam threshold is stricter, hence guarantees it.

Recall that the deterrence bid is
$
    {b_{\dtr}^{k,\gamma}
    =
    R\left(
        \frac{(k-1)^{k-1}}{\gamma k^k}
    \right)^{1/k}}
$ (Equation~\ref{eq:2sgame:b1-det}),
and hence the user's revenue under deterrence is
$
    u_{\dtr}^{k,\gamma}
    =
    R-b_{\dtr}^{k,\gamma}.
$
Therefore, deterrence is profitable if and only if
$
    b_{\dtr}^{k,\gamma}<R.
$
Since $R>0$, this condition is equivalent to
$
    \left(
        \frac{(k-1)^{k-1}}{\gamma k^k}
    \right)^{1/k}
    <1,
$
or, equivalently,
\begin{equation}\label{eq:det:profit-threshold}
    \gamma
    >
    \frac{(k-1)^{k-1}}{k^k}.
\end{equation}
Therefore, it is now sufficient to show that the above equation is satisfied when $k \geq \frac{\ln 2}{\ln(1+\gamma)}$, which is equivalent to~$\gamma \geq 2^{1/k}-1$.
Hence it suffices to prove that, for every~$k>1$,
\begin{equation}\label{eq:det:sybil-implies-profit}
    2^{1/k}-1
    >
    \frac{(k-1)^{k-1}}{k^k}.
\end{equation}
This inequality says exactly that the weakest losing-fee rate required for Anti-Spam is already strictly larger than the weakest losing-fee rate required for the deterrence bid to be below $R$.

Denote
$
    y\coloneqq \frac{1}{k}\in(0,1).
$
Then
\[
    \frac{(k-1)^{k-1}}{k^k}
    =
    y(1-y)^{(1-y)/y}.
\]
Thus Equation~\ref{eq:det:sybil-implies-profit} becomes
\begin{equation}\label{eq:det:y-ineq}
    2^y-1
    >
    y(1-y)^{(1-y)/y}.
\end{equation}

We first upper bound the right-hand side. Since ${\ln(1-y)<-y}$ for $y\neq 0$, multiplying by $(1-y)/y$, which is positive,
\[
    \begin{aligned}
        \frac{1-y}{y}\ln(1-y)
        &< \frac{1-y}{y}\,(-y) \\
        &= -(1-y).
    \end{aligned}
\]
Therefore,
\begin{equation}\label{eq:det:exp-upper}
    \begin{aligned}
        y(1-y)^{(1-y)/y}
        &=
        y\exp\left(\frac{1-y}{y}\ln(1-y)\right) \\
        &<
        y e^{-(1-y)}\\
        &=
        y e^{y-1}.
    \end{aligned}
\end{equation}

Therefore, to prove Equation~\ref{eq:det:y-ineq} it suffices to show that $y e^{y-1} < 2^y-1$.

Define $\xi(y)
    \coloneqq
    2^y-1-y e^{y-1}$.
Then
$
    \xi(0)=\xi(1)=0.
$
Moreover,
\[
    \xi''(y)
    =
    (\ln 2)^2 2^y-(y+2)e^{y-1}.
\]
We claim that $\xi''(y)<0$ for every $y\in[0,1]$. Since ${(y+2)e^{y-1}>0}$, $\xi''(y)<0$ is equivalent to
\[
    \frac{(\ln 2)^2 2^y}{(y+2)e^{y-1}} <1 .
\]
The logarithm of this ratio is $$2\ln(\ln 2)+y\ln 2-\ln(y+2)-(y-1).$$ Its derivative, the logarithmic derivative of the ratio, equals $ \ln 2-1-\frac{1}{y+2}<0$,
so this ratio is decreasing on~$[0,1]$, hence maximized at $y=0$. At $y=0$ it is
$
    {\frac{e(\ln 2)^2}{2}<1}.
$
Hence the ratio is strictly smaller than $1$ on $[0,1]$, which proves $\xi''(y)<0$.

Thus $\xi$ is strictly concave on $[0,1]$. Since ${\xi(0)=\xi(1)=0}$, strict concavity gives
$
    \xi(y)>0
$
for every $y\in(0,1)$.
Equivalently,
\begin{equation}\label{eq:det:middle-upper}
    y e^{y-1}
    <
    2^y-1
    \qquad\text{for all } y\in(0,1).
\end{equation}
Combining Equation~\ref{eq:det:exp-upper} and Equation~\ref{eq:det:middle-upper}, we obtain
\[
    y(1-y)^{(1-y)/y}
    <
    2^y-1,
\]
which proves Equation~\ref{eq:det:y-ineq}, and therefore Equation~\ref{eq:det:sybil-implies-profit}.

Consequently,
\[
    \gamma
    \ge
    2^{1/k}-1
    >
    \frac{(k-1)^{k-1}}{k^k}.
\]
By Equation~\ref{eq:det:profit-threshold}, this implies
$
    b_{\dtr}^{k,\gamma}<R.
$
Hence
$
    {u_{\dtr}^{k,\gamma}
    =
    R-b_{\dtr}^{k,\gamma}
    >
    0}.
$
This proves Profitability.
% \end{fullproof}
\end{proof}

\section{Proof of Lemma~\ref{lem:game:acc-bound}}\label{app:proof:acc-bound}

\begin{restated}{Lemma}{lem:game:acc-bound}{Accommodation utility bound}
    For any $\gamma > 0$ and $k>1$, the accommodation utility $u^{k,\gamma}_{\mathrm{acc}}$ satisfies
    \begin{equation*}
      u^{k,\gamma}_{\mathrm{acc}} \;\le\; \frac{R}{k\,\min\{1,\gamma\}}.
    \end{equation*}
\end{restated}

  \begin{proof}
% \begin{fullproof}
    We bound $P_1$'s utility by $q_1^* R$, and bound the winning probability $q_1^*$ using $P_2$'s first-order optimality condition.

    Fix any $b_1\in(0,b^{k,\gamma}_{\dtr})$, and recall that $b_2^*(b_1)$ is $P_2$'s selected best response~(\S\ref{sec:game:2seq:p2}).
    Since $b_1$ is below the deterring bid, Lemma~\ref{lem:2sgame:deterrence} implies that some positive bid gives $P_2$ positive utility.
    Therefore the best response also gives positive utility:
    \[
      u_2^*\coloneqq u_2(b_1,b_2^*(b_1))>0.
    \]
    In particular, $b_2^*(b_1)>0$.
    Moreover, no bid $b_2\ge R$ can be a best response.
    For such a bid, Equation~\ref{eq:2sgame:utility} gives $u_2(b_1,b_2)=q_2(R-b_2)-(1-q_2)\gamma b_2$, in which the winning payoff $R-b_2\le 0$ and the losing payoff $-\gamma b_2<0$.
    Because $b_1>0$ makes $q_1>0$, and hence $q_2<1$, the losing event carries positive weight $1-q_2>0$, so $u_2(b_1,b_2)<0$.
    Since $u_2^*>0$, the best response therefore satisfies
    \begin{equation}\label{eq:game:b2-below-R}
      b_2^*(b_1)<R.
    \end{equation}

    Let $q_i^*\coloneqq q_i(b_1,b_2^*(b_1))$
    be the winning probabilities at this best response.
    Since ${b_1>0}$ and $b_2^*(b_1)>0$, both probabilities are positive and $q_1^*+q_2^*=1$.

    We now express $P_2$'s utility~(Equation~\ref{eq:2sgame:utility}) as a function of her own bid, holding $b_1$ fixed:
    \begin{equation}\label{eq:game:u2-obj}
      u_2(b_1,b_2)
      =
      q_2(b_1,b_2)R
      -
      b_2\bigl[\gamma+(1-\gamma)q_2(b_1,b_2)\bigr].
    \end{equation}
    For $b_2>0$, since $q_2 = b_2^k/(b_1^k+b_2^k)$ (Equation~\ref{eq:2sgame:winprob}),
    \[
      \frac{\partial q_2}{\partial b_2}
      =
      \frac{k\,b_1^k\,b_2^{k-1}}{(b_1^k+b_2^k)^2}
      =
      \frac{k}{b_2}q_1q_2.
    \]
    Differentiating $u_2$~(Equation~\ref{eq:game:u2-obj}) term by term, with the product rule on~$b_2 q_2$,
    \begin{align*}
      \frac{\partial u_2}{\partial b_2}
      &=
      R\,\frac{\partial q_2}{\partial b_2} - \gamma - (1-\gamma)\Bigl(q_2 + b_2\,\frac{\partial q_2}{\partial b_2}\Bigr) \\
      &=
      \frac{\partial q_2}{\partial b_2}\bigl[R-(1-\gamma)b_2\bigr] - \gamma - (1-\gamma)q_2 \\
      &=
      \frac{k}{b_2}q_1q_2\bigl[R-(1-\gamma)b_2\bigr] - \gamma - (1-\gamma)q_2 .
    \end{align*}
    
    Since $b_2^*(b_1)>0$ lies in the interior of $[0,\infty)$ and maximizes~$u_2(b_1,\cdot)$, $\partial u_2/\partial b_2 = 0$; multiplying by $b_2^*(b_1)$ gives the first-order condition
    \begin{equation}\label{eq:game:u2-br-foc}
      kq_1^*q_2^*\bigl[R-(1-\gamma)b_2^*(b_1)\bigr]
      =
      b_2^*(b_1)\bigl[\gamma+(1-\gamma)q_2^*\bigr].
    \end{equation}

    Next substitute~Equation~\ref{eq:game:u2-br-foc} into $P_2$'s utility~(Equation~\ref{eq:game:u2-obj}):
    \begin{align}
      u_2^*
      &=
      q_2^*R
      -
      b_2^*(b_1)\bigl[\gamma+(1-\gamma)q_2^*\bigr] \nonumber\\
      &=
      q_2^*
      \Bigl\{
        R-kq_1^*\bigl[R-(1-\gamma)b_2^*(b_1)\bigr]
      \Bigr\}.
      \label{eq:game:u2-foc}
    \end{align}
    $u_2^*$ is the product of $q_2^*$ and ${R-kq_1^*\bigl[R-(1-\gamma)b_2^*(b_1)\bigr]}$ (Equation~\ref{eq:game:u2-foc}).
    Since this product is positive ($u_2^*>0$) and its first factor is positive ($q_2^*>0$), the second factor must be positive as well, that is,
    \[
      kq_1^*\bigl[R-(1-\gamma)b_2^*(b_1)\bigr]<R.
    \]
    Since $b_2^*(b_1)<R$ (Equation~\ref{eq:game:b2-below-R}), we bound $R-(1-\gamma)b_2^*(b_1)$ in two cases: For $0<\gamma\le 1$, $(1-\gamma)\bigl(R-b_2^*(b_1)\bigr)\ge 0$ gives $R-(1-\gamma)b_2^*(b_1)\ge\gamma R$; for $\gamma>1$, $R-(1-\gamma)b_2^*(b_1)=R+(\gamma-1)b_2^*(b_1)\ge R$ since $b_2^*(b_1)\ge0$. 
    In short,
    \[
      R-(1-\gamma)b_2^*(b_1)\ge\min\{1,\gamma\} R.
    \]
    Combining the last two inequalities yields
    \[
      q_1^*
      <
      \frac{R}{k\bigl[R-(1-\gamma)b_2^*(b_1)\bigr]}
      \le
      \frac{1}{k\,\min\{1,\gamma\}}.
    \]

    Finally, bound $P_1$'s utility at this accommodated outcome:
    \[
      u_1(b_1,b_2^*(b_1))
      =
      q_1^*(R-b_1)-q_2^*\gamma b_1
      \le
      q_1^*R
      <
      \frac{R}{k\,\min\{1,\gamma\}}.
    \]
    This holds for every $b_1\in(0,b^{k,\gamma}_{\dtr})$.
    Taking the supremum over exactly this interval gives
    $u^{k,\gamma}_{\mathrm{acc}}\le R/(k\,\min\{1,\gamma\})$.
% \end{fullproof}
  \end{proof}

\section{Proof of Lemma~\ref{lem:game:explicit-k-payoff}}\label{app:proof:explicit-k-payoff}

\begin{restated}{Lemma}{lem:game:explicit-k-payoff}{Deterrence utility bound}
    Fix $\gamma > 0$. If
    \[
      k \;\geq\; \max\left\{2,\frac{\exp(1/\gamma)}{\gamma}\right\},
    \]
    then the utility of the deterrence strategy exceeds $\frac{R}{k\,\min\{1,\gamma\}}$:
    \begin{equation}\label{eq:game:sufficient}
      u^{k,\gamma}_{\dtr} \;>\; \frac{R}{k\,\min\{1,\gamma\}}.
    \end{equation}
\end{restated}

  \begin{proof}
    We treat the two cases $\gamma>1$ and $0<\gamma\leq1$ separately.

    First suppose $\gamma>1$.
    The idea is to bound the deterrence bid by its value at $\gamma=1$, where it is largest, and then bound that value crudely using $k\geq2$.

    By the closed form (Equation~\ref{eq:2sgame:b1-det}), ${b^{k,\gamma}_{\dtr}/R=\bigl((k-1)^{k-1}/(\gamma k^k)\bigr)^{1/k}}$ is proportional to~$\gamma^{-1/k}$, hence strictly decreasing in $\gamma$.
    Raising $\gamma$ past $1$ therefore only lowers the bid:
    \begin{equation}\label{eq:game:det-bid-decreasing}
      b^{k,\gamma}_{\dtr}
      <
      b^{k,1}_{\dtr}.
    \end{equation}

    We next bound $b^{k,1}_{\dtr}$.
    Because $k\geq2$, we have $k-1\geq1$, so $(k-1)^{k-1}\leq(k-1)^k$.
    Therefore,
    \begin{equation}\label{eq:game:det-bid-upper}
      \frac{b^{k,1}_{\dtr}}{R}
      =
      \left(\frac{(k-1)^{k-1}}{k^k}\right)^{1/k}
      \leq
      \frac{k-1}{k}.
    \end{equation}

    Combining Equation~\ref{eq:game:det-bid-decreasing} and Equation~\ref{eq:game:det-bid-upper}, we have
    $$b^{k,\gamma}_{\dtr}<\frac{k-1}{k}R.$$
    Therefore, we have a lower bound on the deterrence utility:
    \[
      u^{k,\gamma}_{\dtr}
      =
      R-b^{k,\gamma}_{\dtr}
      >
      R-\frac{k-1}{k}R
      =
      \frac{R}{k}.
    \]
    Finally, since $\gamma>1$, we have $\min\{1,\gamma\}=1$, so that ${u^{k,\gamma}_{\dtr}>R/(k\,\min\{1,\gamma\})}$.

    Now suppose $0<\gamma\leq1$.
    Our goal is to show that $u^{k,\gamma}_{\dtr}>R/(k\,\min\{1,\gamma\})$, which in this case is~${u^{k,\gamma}_{\dtr}>R/(k\gamma)}$, whenever $k\geq\exp(1/\gamma)/\gamma$.

    Denoting $z\coloneqq k\gamma$, the hypothesis becomes ${z\geq\exp(1/\gamma)>1}$.
    The idea is to reduce this utility bound to a one-variable inequality in~$z$, monotone in~$z$, and then verify it at the smallest admissible value $z=\exp(1/\gamma)$.
    By the deterrence bid (Equation~\ref{eq:2sgame:b1-det}),
    \begin{align}\label{eq:game:det-bid-z}
      \frac{b^{k,\gamma}_{\dtr}}{R}
      &\;=\;
      \left(\frac{(k-1)^{k-1}}{\gamma k^k}\right)^{1/k} \nonumber\\
      &\;=\;
      \exp\!\left(\frac{(k-1)\ln(k-1)-\ln\gamma-k\ln k}{k}\right) \nonumber\\
      &\;=\;
      \exp\!\left(\frac{(k-1)\ln(1-1/k)-\ln k-\ln\gamma}{k}\right) \nonumber\\
      &\;=\;
      \exp\!\left(\frac{(k-1)\ln(1-1/k)-\ln z}{k}\right).
    \end{align}
    Since $\ln(1-1/k)\leq -1/k$, we have $$(k-1)\ln(1-1/k)\leq -(1-1/k).$$
    Combining this with Equation~\ref{eq:game:det-bid-z}, we have
    \[
      \frac{b^{k,\gamma}_{\dtr}}{R}
      \;\leq\;
      \exp\!\left(\frac{-(1-1/k)-\ln z}{k}\right).
    \]
    Define
    \[
      d\coloneqq \frac{\ln z+1-1/k}{k}.
    \]
    Since $-(1-1/k)-\ln z = -(\ln z+1-1/k)$, the exponent equals $-d$, so
    \[
      \frac{b^{k,\gamma}_{\dtr}}{R}
      \;\leq\;
      \exp(-d).
    \]
    It remains to turn this bound on the bid into the utility bound in Equation~\ref{eq:game:sufficient}.
    Since $u^{k,\gamma}_{\dtr}/R=1-b^{k,\gamma}_{\dtr}/R$, the bound $b^{k,\gamma}_{\dtr}/R\leq\exp(-d)$ gives
    \[
      \frac{u^{k,\gamma}_{\dtr}}{R} \;\geq\; 1-\exp(-d).
    \]
    Our next goal is therefore to prove
    \begin{equation}\label{eq:game:det-revenue-z}
      1-\exp(-d) \;>\; \frac{1}{z};
    \end{equation}
    together with $u^{k,\gamma}_{\dtr}/R\geq 1-\exp(-d)$, this gives the desired bound $u^{k,\gamma}_{\dtr}/R>1/z$.

    We weaken the left-hand side of Equation~\ref{eq:game:det-revenue-z} using the elementary inequality $1-\exp(-d)\geq d/(1+d)$, which holds for every $d\geq0$.
    So it suffices to prove
    \begin{equation*}
      \frac{d}{1+d} \;>\; \frac{1}{z}.
    \end{equation*}
    Multiplying through by the positive quantities $z$ and $1+d$, this is equivalent to $(z-1)d>1$.

    It is convenient to express $(z-1)d$ in closed form.
    Recall that $d=(\ln z+1-1/k)/k$, and that $k=z/\gamma$, so that $1/k=\gamma/z$.
    Substituting both occurrences of $1/k$ gives
    \[
      d \;=\; \frac{\gamma}{z}\left(\ln z+1-\frac{\gamma}{z}\right).
    \]
    Multiplying by $z-1$ and using $(z-1)/z=1-1/z$, we obtain
    \begin{equation*}
      F_\gamma(z)
      \;\coloneqq\; (z-1)\,d
      \;=\; \gamma\left(\ln z+1-\frac{\gamma}{z}\right)\left(1-\frac{1}{z}\right).
    \end{equation*}
    Our goal is now to show that $F_\gamma(z)>1$ for every~${z\geq\exp(1/\gamma)}$.

    We first show that $F_\gamma$ is strictly increasing in $z>1$.
    Differentiating the closed form gives
    \begin{equation*}
      \frac{F_\gamma'(z)}{\gamma}
      \;=\; \left(\frac{1}{z}+\frac{\gamma}{z^2}\right)\left(1-\frac{1}{z}\right)
      + \frac{\ln z+1-\gamma/z}{z^2} .
    \end{equation*}
    The first product is positive, since $z>1$ makes both of its factors positive.
    In the second term, the numerator~$\ln z+1-\gamma/z$ is increasing in $z$, and at $z=1$ it equals~$1-\gamma$, which is non-negative because $\gamma\leq1$.
    Hence the numerator is non-negative for all $z\geq1$.
    Both terms are therefore non-negative and the first is strictly positive, so~$F_\gamma'(z)>0$.

    Because $F_\gamma$ is increasing, over the range $z\geq\exp(1/\gamma)$ it is smallest at the left endpoint $z=\exp(1/\gamma)$.
    It is therefore enough to check that $F_\gamma(\exp(1/\gamma))>1$.
    Set $t\coloneqq1/\gamma$, so that $t\geq1$ because $\gamma\leq1$.
    Substituting $z=\exp(t)$ and~$\gamma=1/t$ into the closed form and simplifying yields
    \begin{equation*}
      F_\gamma(\exp(1/\gamma))-1
      \;=\; \gamma^2\exp(-t)\,g(t),
    \end{equation*}
    where
    \begin{equation*}
      g(t)\;\coloneqq\; t\exp(t)+\exp(-t)-t^2-t-1 .
    \end{equation*}
    Since $\gamma^2\exp(-t)>0$, it remains to show that $g(t)>0$ for~$t\geq1$, which we build up from its derivatives.
    The second derivative
    \[
      g''(t)\;=\;(2+t)\exp(t)+\exp(-t)-2
    \]
    is positive for $t\geq1$, so $g'$ is increasing on $[1,\infty)$.
    At the left endpoint, $g'(1)=2e-1/e-3>0$, so $g'$ is positive throughout $[1,\infty)$, and hence $g$ is increasing there.
    Finally $g(1)=e+1/e-3>0$, so $g(t)\geq g(1)>0$ for all~$t\geq1$.
    This gives $F_\gamma(\exp(1/\gamma))>1$, and by monotonicity $F_\gamma(z)>1$ for every $z\geq\exp(1/\gamma)$.

    Collecting the inequalities, for every $z\geq\exp(1/\gamma)$ we have
    \begin{equation*}
      \begin{aligned}
        \frac{u^{k,\gamma}_{\dtr}}{R}
        &\;=\; 1-\frac{b^{k,\gamma}_{\dtr}}{R}
        \;\geq\; 1-\exp(-d) \\
        &\;\geq\; \frac{d}{1+d}
        \;>\; \frac{1}{z}
        \;=\; \frac{1}{k\gamma}
        \;=\; \frac{1}{k\,\min\{1,\gamma\}},
      \end{aligned}
    \end{equation*}
    where the strict inequality is precisely $(z-1)d=F_\gamma(z)>1$.

    In all, when $k\geq\max\{2,\exp(1/\gamma)/\gamma\}$, we have ${u^{k,\gamma}_{\dtr}>R/(k\,\min\{1,\gamma\})}$ for every $\gamma>0$.
  \end{proof}

\section{Proof of Lemma~\ref{lem:game:positive-state-existence}}\label{app:proof:positive-state-existence}

\begin{restated}{Lemma}{lem:game:positive-state-existence}{SPNE existence after a positive user bid}
Consider a subgame starting from step $t$ with any history $h^{t-1}$ such that
\[
x_1(h^{t-1})=b_1^{(\lfloor t/2\rfloor)}>0.
\]
Then the subgame admits a pure-strategy SPNE.
\end{restated}

\begin{proof}
% \begin{fullproof}
Consider the subgame that starts from a history $h^{t-1}$ at the beginning of step $t$ at which $P_1$'s latest bid is positive, $x_1(h^{t-1})>0$.
By Hellwig and Leininger~\cite[Theorem~1]{hellwig1987existence}, a game has a pure-strategy SPNE provided that (i) it has a finite horizon, (ii) it has perfect information, (iii) the action sets are compact, (iv) the utility functions are continuous, and (v) the constraint correspondence, which assigns to each history the set of actions then feasible, is closed-valued and continuous in the history.

The subgame has a finite horizon and perfect information, so conditions~(i) and~(ii) hold.
The acting player~$P_i$ chooses a bid from $\mathbb{R}_{\ge0}$, and since bids are non-decreasing her feasible set is the closed interval $[x_i(h),\infty)$, where $x_i(h)$ is her latest bid.
This interval is closed, and its lower endpoint $x_i(h)$ varies continuously with the history, so the constraint correspondence is closed-valued and continuous in the history, and condition~(v) holds.
The action set $\mathbb{R}_{\ge0}$, however, is not compact, so condition~(iii) fails and the theorem does not apply to the subgame directly.
We proceed in three steps.
Step~1 verifies condition~(iv), the continuity of the utilities.
Step~2 constructs a new game with compact action sets, satisfying all of the conditions~(i)--(v), and applies the theorem to establish the existence of a pure-strategy SPNE of that game.
Step~3 extends this equilibrium to the original subgame by induction.

\emph{Step 1: Continuity of the utilities.}
Because bids are non-decreasing, $P_1$'s final bid satisfies
\[
b_1^{(m)}\ge x_1(h^{t-1})>0,
\]
and consequently
\[
(b_1^{(m)})^k+(b_2^{(m)})^k
\ge
\bigl(x_1(h^{t-1})\bigr)^k
>0.
\]
The total weight $W$ is therefore strictly positive, so the winning probabilities are continuous in the final bids.
By Equation~\ref{eq:2sgame:utility}, the utility functions are continuous.

\emph{Step 2: Construction of a compact game.}
We now construct a new game by capping each bid, so that the unbounded action sets become compact intervals.
We choose each cap so high that exceeding it is never worthwhile, which keeps the equilibria unchanged.
How high this upper bound must be follows from comparing two options for the acting player: raising her bid, or keeping her latest bid unchanged.
We denote
\[
\underline{\gamma}\coloneqq \min\{\gamma,1\}
 \text{, and }
\overline{\gamma}\coloneqq \max\{\gamma,1\}.
\]
Suppose $P_i$ acts at some step, with history~$h$ and latest bid~$x_i(h)$.
Keeping her bid at $x_i(h)$ through all of her remaining steps guarantees her a utility of at least $-\overline{\gamma}x_i(h)$: If her final bid remains $x_i(h)$, then by Equation~\ref{eq:2sgame:utility} her utility at any final winning probability $q_i\in[0,1]$ is
\[
\begin{aligned}
u_i
&=
q_i\bigl(R-x_i(h)\bigr)-(1-q_i)\gamma x_i(h) \\
&=
q_iR-\bigl[\gamma+(1-\gamma)q_i\bigr]x_i(h),
\end{aligned}
\]
which is at least $-\overline{\gamma}x_i(h)$ because $\gamma+(1-\gamma)q_i\le \overline{\gamma}$ and $q_iR\ge 0$.
Suppose instead that she raises her bid to some $b'$ at this step.
Let $\widehat b_i$ be her resulting final bid; since bids are non-decreasing, $\widehat b_i\ge b'$.
Equation~\ref{eq:2sgame:utility} gives
\[
u_i
=
q_iR-\bigl[\gamma+(1-\gamma)q_i\bigr]\widehat b_i
\le
R-\underline{\gamma}\widehat b_i
\le
R-\underline{\gamma}b',
\]
where the first inequality uses $\gamma+(1-\gamma)q_i\ge \underline{\gamma}$.
Hence, as soon as
\[
b'>\frac{R+\overline{\gamma}x_i(h)}{\underline{\gamma}},
\]
we have $u_i\le R-\underline{\gamma}b'<-\overline{\gamma}x_i(h)$, so raising the bid to~$b'$ is strictly worse than keeping $x_i(h)$.
We call this the \emph{exclusion bound}: Any bid above $(R+\overline{\gamma}x_i(h))/\underline{\gamma}$ can be safely excluded.

We turn this principle into a compact game by induction on the number of remaining steps
\[
r\coloneqq 2m-t+1.
\]
The base case $r=0$ leaves no step to play, so the claim vacuously holds.
For the inductive step, suppose $r\ge 1$ and that the claim holds for every subgame in which $P_1$'s latest bid is positive, with fewer than $r$ remaining steps.

Guided by the exclusion bound, we construct a new game, the \emph{restricted game}, that coincides with the subgame except that every bid is capped, so that the action sets become compact intervals.
The caps are a deterministic sequence of bounds
\[
B_0,B_1,\ldots,B_r,
\]
where $B_\ell$ bounds both players' latest bids after $\ell$ of the remaining steps have been played.
Starting from the latest bids, we set
\[
B_0\coloneqq \max\{x_1(h^{t-1}),x_2(h^{t-1})\}.
\]
For $\ell=0,\ldots,r-1$, define
\[
B_{\ell+1}\coloneqq
\frac{R+\overline{\gamma}B_\ell}{\underline{\gamma}},
\]
the exclusion bound's threshold with the latest bid $x_i(h)$ replaced by its upper bound $B_\ell$.
Because $R>0$, $\underline{\gamma}\le 1$, and $\overline{\gamma}\ge 1$, we have
\[
B_{\ell+1}>B_\ell
\]
for every $\ell$.

Consider the $(\ell+1)$-st remaining step, where $\ell=0,\ldots,r-1$.
Let the latest bid pair of this step be $(x_1,x_2)$.
In the restricted game, we maintain the invariant
\[
x_1(h^{t-1})\le x_1\le B_\ell,
\qquad
0\le x_2\le B_\ell.
\]
The invariant holds at $\ell=0$ by the definition of $B_0$.
If it holds before the $(\ell+1)$-st remaining step, then the acting player is allowed to choose a bid only up to $B_{\ell+1}$, while the non-acting player's bid is unchanged.
Hence the invariant holds at the next step as well.

More explicitly, the $(\ell+1)$-st remaining step is step number
\[
s\coloneqq t+\ell
\]
of the full $2m$-step game.
If $s$ is odd, then $P_1$ acts and her feasible bids in the restricted game are
\[
[x_1,B_{\ell+1}].
\]
If $s$ is even, then $P_2$ acts and her feasible bids in the restricted game are
\[
[x_2,B_{\ell+1}].
\]
The action set at each step is the compact interval $[0,B_{\ell+1}]$, and the constraint correspondence assigns the nonempty closed subinterval $[x_i,B_{\ell+1}]$.
This correspondence is continuous, because its lower endpoint is the acting player's latest bid and its upper endpoint is fixed at that step.
The state transition is also continuous: After a bid $b$ by $P_1$, the new latest bid pair is $(b,x_2)$; after a bid $b$ by $P_2$, it is $(x_1,b)$.

Together with the continuity of the utilities from Step~1, the restricted game is a finite-horizon game of perfect information with compact action sets, a closed-valued continuous constraint correspondence, and continuous utilities.
By Theorem~1 of Hellwig and Leininger~\cite{hellwig1987existence}, it admits a pure-strategy SPNE.

\emph{Step 3: Extension to the unbounded subgame.}
It remains to show that this restricted-game SPNE is also an SPNE of the original unbounded subgame.
Consider any history $h$ inside the compact region before the $(\ell+1)$-st remaining step, and let $P_i$ be the acting player with the latest bid $x_i(h)$, so that $x_i(h)\le B_\ell$ by the invariant.
Suppose she deviates outside the restricted feasible set to some bid $b>B_{\ell+1}$.
Then
\[
b>B_{\ell+1}
=
\frac{R+\overline{\gamma}B_\ell}{\underline{\gamma}}
\ge
\frac{R+\overline{\gamma}x_i(h)}{\underline{\gamma}},
\]
so by the exclusion bound of Step~2 this deviation gives her utility strictly below $-\overline{\gamma}x_i(h)$.
Keeping her bid at $x_i(h)$, on the other hand, is feasible in the restricted game and guarantees her at least $-\overline{\gamma}x_i(h)$.
Hence no excluded bid is a profitable deviation.

We now assemble an equilibrium of the original unbounded subgame.
On histories that remain inside the compact region, use the SPNE of the restricted game.
If a player nevertheless chooses a bid outside the compact region, the resulting subgame has fewer than $r$ remaining steps, and $P_1$'s latest bid is still positive: If $P_1$ deviates, her own bid remains positive; if $P_2$ deviates, $P_1$'s positive bid is unchanged.
By the induction hypothesis, fix a pure-strategy SPNE for each such off-path subgame.

The resulting strategy profile is sequentially optimal at every history.
At histories inside the compact region, deviations within the compact action set are unprofitable by the restricted-game SPNE, while deviations outside the compact action set are unprofitable by the exclusion bound.
At histories outside the compact region, sequential optimality follows from the induction hypothesis.
Hence the constructed strategy profile is an SPNE of the original subgame.

This completes the induction and proves the lemma.
% \end{fullproof}
\end{proof}

\section{Proof of Lemma~\ref{lem:game:positive-entry-bound}}\label{app:proof:positive-entry-bound}

\begin{restated}{Lemma}{lem:game:positive-entry-bound}{Positive-entry continuation bound}
Consider a subgame immediately after $P_1$ publishes a positive bid from an all-zero state, and let $\bigl(b_1^{(m)},b_2^{(m)}\bigr)$ be the final bids under any SPNE of this subgame.
Then
\[
u_1\bigl(b_1^{(m)},b_2^{(m)}\bigr)
\le
\max\left\{
u^{k,\gamma}_{\dtr},
u^{k,\gamma}_{\mathrm{acc}}
\right\}.
\]
\end{restated}

\begin{proof}
% \begin{fullproof}
We proceed in three steps.
Step~1 bounds $P_2$'s final bid below by her smallest best response over all non-negative bids.
Step~2 shows that lowering $P_2$'s final bid to that smallest best response does not decrease $P_1$'s utility, reducing the bound to a single two-step outcome.
Step~3 bounds this two-step outcome by either the deterrence or the accommodation utility.

Before carrying out these steps, we dispose of a trivial case.
If $P_1$'s utility is nonpositive, then the bound is immediate:
\[
u_1\bigl(b_1^{(m)},b_2^{(m)}\bigr)\le 0<u_{\dtr}^{k,\gamma}\le\max\bigl\{u_{\dtr}^{k,\gamma},u_{\mathrm{acc}}^{k,\gamma}\bigr\},
\]
where $u_{\dtr}^{k,\gamma}>0$ by Proposition~\ref{prop:det-profitable}.
It therefore remains to consider the case
\begin{equation}\label{eq:game:positive-entry-positive-utility}
u_1\bigl(b_1^{(m)},b_2^{(m)}\bigr)>0.
\end{equation}

\emph{Step 1: A lower bound on $P_2$'s final bid.}
Recall that $b_2^{(m-1)}$ is $P_2$'s bid immediately before her final action.
Since the strategy is subgame perfect, $P_2$'s final bid solves the constrained best-response problem
\[
b_2^{(m)}
\in
\operatorname*{arg\,max}_{b_2\ge b_2^{(m-1)}}
u_2\bigl(b_1^{(m)},b_2\bigr).
\]
We now consider $P_2$'s best response to $b_1^{(m)}$ over all non-negative bids, which coincides with her best response in the two-step game.
Because $P_1$ has published a positive bid and bids are non-decreasing, we have $b_1^{(m)}>0$.
The two-step analysis (\S\ref{sec:game:2seq:p2}) shows that $P_2$'s best-response correspondence $\operatorname{BR}_2\bigl(b_1^{(m)}\bigr)$ is nonempty and compact.
Denote its smallest element by
\[
\underline b_2
\coloneqq
\min \operatorname{BR}_2\bigl(b_1^{(m)}\bigr).
\]

We now show that $\underline b_2$ is a lower bound on $b_2^{(m)}$, that is, $\underline b_2\le b_2^{(m)}$.
If $\underline b_2$ lies below $P_2$'s minimal feasible final bid, that is, $\underline b_2<b_2^{(m-1)}$, the claim is immediate from $b_2^{(m)}\ge b_2^{(m-1)}$.
Otherwise $\underline b_2\ge b_2^{(m-1)}$, so $\underline b_2$ is feasible in the constrained final-step problem.
Since $\underline b_2$ already maximizes $u_2(b_1^{(m)},\cdot)$ over all non-negative bids, the constrained and unconstrained maxima coincide, so $b_2^{(m)}$ is itself an unconstrained best response:
\[
b_2^{(m)}
\in
\operatorname{BR}_2\bigl(b_1^{(m)}\bigr).
\]
By the minimality of $\underline b_2$, we again obtain
\[
\underline b_2\le b_2^{(m)}.
\]

\emph{Step 2: Lowering $P_2$'s final bid does not decrease $P_1$'s utility.}
We show that replacing $b_2^{(m)}$ by $\underline b_2$ does not decrease $P_1$'s utility $u_1$.
Using the utility formula (Equation~\ref{eq:2sgame:utility}), we obtain
\begin{multline*}
u_1\bigl(b_1^{(m)},b_2^{(m)}\bigr)
= \\
 -\gamma b_1^{(m)}  + q_1\bigl(b_1^{(m)},b_2^{(m)}\bigr)\left[R-(1-\gamma)b_1^{(m)}\right].
\end{multline*}
By Equation~\ref{eq:game:positive-entry-positive-utility}, $u_1\bigl(b_1^{(m)},b_2^{(m)}\bigr)>0$, which is possible only if
\begin{equation}\label{eq:game:positive-entry-coeff}
R-(1-\gamma)b_1^{(m)}>0.
\end{equation}
Since $\underline b_2\le b_2^{(m)}$, lowering $P_2$'s final bid weakly increases~$P_1$'s winning probability:
\[
q_1\bigl(b_1^{(m)},\underline b_2\bigr)
\ge
q_1\bigl(b_1^{(m)},b_2^{(m)}\bigr).
\]
Together with Equation~\ref{eq:game:positive-entry-coeff}, this implies
\[
u_1\bigl(b_1^{(m)},b_2^{(m)}\bigr)
\le
u_1\bigl(b_1^{(m)},\underline b_2\bigr).
\]

\emph{Step 3: Comparison with the deterrence and accommodation utilities.}
It remains to show that
\[
u_1\bigl(b_1^{(m)},\underline b_2\bigr)\le\max\bigl\{u_{\dtr}^{k,\gamma},u_{\mathrm{acc}}^{k,\gamma}\bigr\}.
\]
We treat separately the two cases $b_1^{(m)}\ge b_{\dtr}^{k,\gamma}$ and $0<b_1^{(m)}<b_{\dtr}^{k,\gamma}$, which yield the bounds $u_{\dtr}^{k,\gamma}$ and $u_{\mathrm{acc}}^{k,\gamma}$, respectively.

If $b_1^{(m)}\ge b_{\dtr}^{k,\gamma}$, using $q_1\bigl(b_1^{(m)},\underline b_2\bigr)\le 1$ and Equation~\ref{eq:game:positive-entry-coeff}, we obtain
\[
\begin{aligned}
u_1\bigl(b_1^{(m)},\underline b_2\bigr)
&=
-\gamma b_1^{(m)}
+
q_1\bigl(b_1^{(m)},\underline b_2\bigr)
\left[
R-(1-\gamma)b_1^{(m)}
\right] \\
&\le
-\gamma b_1^{(m)}
+
R-(1-\gamma)b_1^{(m)} \\
&=
R-b_1^{(m)} \\
&\le
R-b_{\dtr}^{k,\gamma}\\
&=
u_{\dtr}^{k,\gamma}.
\end{aligned}
\]
If instead $0<b_1^{(m)}<b_{\dtr}^{k,\gamma}$, then $\underline b_2 = \min\operatorname{BR}_2\bigl(b_1^{(m)}\bigr)$ is by definition $P_2$'s selected best response $b_2^*\bigl(b_1^{(m)}\bigr)$~(\S\ref{sec:game:2seq:p2}).
Since $b_1^{(m)}$ lies in the accommodation range, the definition of $u_{\mathrm{acc}}^{k,\gamma}$ as a supremum over that range (Equation~\ref{eq:2sgame:u1-acc}) yields
\[
u_1\bigl(b_1^{(m)},\underline b_2\bigr)=u_1\bigl(b_1^{(m)},b_2^*(b_1^{(m)})\bigr)\le u_{\mathrm{acc}}^{k,\gamma}.
\]
Combining the two cases, we have
\[
u_1\bigl(b_1^{(m)},\underline b_2\bigr)\le\max\bigl\{u_{\dtr}^{k,\gamma},u_{\mathrm{acc}}^{k,\gamma}\bigr\}.
\]

\emph{Summary.}
Chaining the conclusions of Steps~2 and~3, we obtain
\[
u_1\bigl(b_1^{(m)},b_2^{(m)}\bigr)\le u_1\bigl(b_1^{(m)},\underline b_2\bigr)\le\max\bigl\{u_{\dtr}^{k,\gamma},u_{\mathrm{acc}}^{k,\gamma}\bigr\},
\]
which proves the lemma.
% \end{fullproof}
\end{proof}

\section{Proof of Theorem~\ref{thm:game:multistep}}\label{app:proof:multistep}

\begin{restated}{Theorem}{thm:game:multistep}{Deterrence from every all-zero subgame}
Suppose $u_{\dtr}^{k,\gamma}>u_{\mathrm{acc}}^{k,\gamma}$.
For every odd step $t=2j-1$, consider an all-zero subgame beginning at step $t$, that is,
\[
b_1^{(\lfloor t/2\rfloor)}
=
b_1^{(j-1)}
=
0
\]
and
\[
b_2^{(\lfloor (t-1)/2\rfloor)}
=
b_2^{(j-1)}
=
0.
\]
Then this all-zero subgame admits an SPNE in which
\[
b_1^{(\ell)}=b_{\dtr}^{k,\gamma}
\quad\text{and}\quad
b_2^{(\ell)}=0
\qquad
\text{for every }\ell=j,\ldots,m.
\]
\end{restated}

\begin{proof}
% \begin{fullproof}
We proceed by backward induction on $j$.
When~${j=m}$, the all-zero subgame consists only of steps $2m-1$ and $2m$.
The claim is therefore exactly the two-step deterrence equilibrium in Theorem~\ref{thm:2sgame:eq}.

For the induction step, fix $j<m$ and assume the theorem holds for index $j+1$, that is, for the all-zero subgame beginning at step $2(j+1)-1=2j+1$.

Before the concrete induction analysis, we first specify a continuation, the actions the players take in the subgame afterwards, to be used after a deterring bid has been published.
Consider any history $h$ at which $P_1$'s latest bid satisfies~$x_1(h)\ge b_{\dtr}^{k,\gamma} $ and $P_2$'s latest bid satisfies~$x_2(h)=0$.
Prescribe that both players leave their bids unchanged for the rest of the game.
If $P_2$ deviates to any positive bid, then switch to a pure-strategy SPNE of the resulting positive-user subgame, whose existence is guaranteed by Lemma~\ref{lem:game:positive-state-existence}.

We verify that this prescription is sequentially optimal.
If $P_2$ deviates to a positive bid, then under any continuation her final bid remains positive, and $P_1$'s final bid remains at least $b_{\dtr}^{k,\gamma}$ because bids are non-decreasing.
Therefore the final weight of $P_1$ is at least
$\bigl(b_{\dtr}^{k,\gamma}\bigr)^k
=
W_{\dtr}^{k,\gamma}.$

By Lemma~\ref{lem:2sgame:deterrence}, $P_2$'s terminal utility is at most zero, regardless of her final positive bid.
Staying at zero also gives her utility zero.
Hence $P_2$ has no profitable deviation.
While~$P_2$ stays at zero, $P_1$ wins for sure and earns $R-x_1(h)$.

Since bids cannot be decreased, raising $P_1$'s bid from $x_1(h)$ only lowers this utility.
Thus $P_1$ also has no profitable deviation.
Therefore this prescription is an SPNE of every such deterring subgame.

We now carry out the concrete analysis for the all-zero subgame beginning at step $2j-1$.
We prescribe that~$P_1$ chooses~$b_1^{(j)}=b_{\dtr}^{k,\gamma}$ and then follows the deterring continuation described above.
This gives $P_1$ utility
$u_{\dtr}^{k,\gamma} = R-b_{\dtr}^{k,\gamma}$.

We now check that $P_1$ has no profitable deviation at step~$2j-1$.
There are three relevant cases.

First, if she follows the prescribed strategy and publishes~$b_{\dtr}^{k,\gamma}$, she obtains $u_{\dtr}^{k,\gamma}$ by the deterring continuation.

Second, if she chooses $b_1^{(j)}=0$, then $P_2$ has not observed any transaction and therefore has no feasible positive entry action at step~$2j$, i.e., $b_2^{(j)}=0$.
The game then reaches the all-zero subgame beginning at step~$2j+1$, and the induction hypothesis gives $P_1$ utility~$u_{\dtr}^{k,\gamma}$.

Third, if she chooses some other positive bid $b_1^{(j)}>0$, then the game enters a positive-user subgame.
For that continuation, choose a pure-strategy SPNE supplied by Lemma~\ref{lem:game:positive-state-existence}.
Let
\[
\bigl(b_1^{(m)},b_2^{(m)}\bigr)
\]
be the final bids under this continuation.
By Lemma~\ref{lem:game:positive-entry-bound},
\[
u_1\bigl(b_1^{(m)},b_2^{(m)}\bigr)
\le
\max\left\{
u_{\dtr}^{k,\gamma},
u_{\mathrm{acc}}^{k,\gamma}
\right\}
=
u_{\dtr}^{k,\gamma},
\]
where the equality uses the hypothesis
\[
u_{\dtr}^{k,\gamma}>u_{\mathrm{acc}}^{k,\gamma}.
\]
Thus no positive deviation can give $P_1$ more than the prescribed deterring bid.

It remains only to specify the continuation after histories not covered by the prescribed path.
If the play reaches a later all-zero subgame, we use the SPNE given by the induction hypothesis.
If the play reaches a positive-user subgame, we use an SPNE supplied by Lemma~\ref{lem:game:positive-state-existence}, except in the deterring continuation described above, which has already been shown to be an SPNE.
Therefore every continuation following the prescribed action is subgame perfect, and $P_1$ has no profitable deviation at the current all-zero state.
The prescribed strategies consequently form an SPNE of the all-zero subgame beginning at step $2j-1$.

This completes the backward induction.
% \end{fullproof}
\end{proof}

\fi

  \section{Numerical Simulation for $k$ Values}\label{app:empirical-k}

  We numerically find the exponent parameter $k$ that makes PRECEDE Anti-Spam and deterrence the equilibrium.
  For a given $\gamma$, a valid $k$ satisfies the deterrence condition $u^{k,\gamma}_{\dtr} > u^{k,\gamma}_{\mathrm{acc}}$ (Theorem~\ref{thm:2sgame:eq}), the Anti-Spam bound $k\ge\ln 2/\ln(1+\gamma)$ (Theorem~\ref{thm:sybil:tight}), and $k>1$ by PRECEDE's definition~(\S\ref{sec:precede:ordering}).
  Since we have no closed form for $u^{k,\gamma}_{\mathrm{acc}}$, we evaluate these conditions numerically.

  We target a range $\gamma\in[\gamma_{\min},\gamma_{\max}]$ and seek $k$ values that work across it.
  We set this range from our empirical measurements of $\gamma$ (\S\ref{app:practical:gamma}), in two cases: to secure against front-running on Uniswap V3 alone, $[0.867,1.087]$, and to secure both V2 and V3, $[0.259,1.087]$.

  Because the deterrence advantage $u^{k,\gamma}_{\dtr}-u^{k,\gamma}_{\mathrm{acc}}$ need not be monotone in $k$, checking the smallest feasible $k$ alone does not suffice: A larger $k$ may fail where a smaller one works.
  We therefore require that \emph{every} $k$ above $k_{\min}$ meets all three conditions over the \emph{entire} $\gamma$ range, and we find the smallest such $k_{\min}$.
  A numerical search must be bounded, so we check the conditions across $[k_{\min},k_{\max}]$ with $k_{\max}=20$, and make no claim for $k>k_{\max}$.
  We set~$R=1$ without loss of generality, since all revenues scale with it.

  We find $k_{\min}$ by bisection, starting from the Anti-Spam bound $\ln 2/\ln(1+\gamma_{\min})$, the smallest $k$ the range admits, verifying for each candidate that all conditions hold across the rectangle ${[\gamma_{\min},\gamma_{\max}]\times[k_{\min},k_{\max}]}$, where we search for the worst point by differential evolution, a continuous global optimizer.
  We then verify the resulting rectangle twice over: On a uniform~$101\times101$ grid, and by five differential-evolution searches, which explore the rectangle continuously to catch violations hiding between grid points.

  For $\gamma \in [0.867,1.087]$, the search yields $k_{\min}\approx 1.146$, above the Anti-Spam bound $\approx 1.111$, so the deterrence condition is what pushes $k_{\min}$ up.
  For ${\gamma \in [0.259,1.087]}$, it yields $k_{\min}\approx 3.010$, the Anti-Spam bound itself, so the search stops where it starts and Anti-Spam alone determines~$k_{\min}$.
  Both $k_{\min}$ values are rounded up, since $k_{\min}$ is a lower bound on the admissible $k$.
  \begin{showWhenSubmit}
    The code is in our anonymous repo~\cite{artifact}.
  \end{showWhenSubmit}

\iffullversion

\section{Proof of Proposition~\ref{prop:tradeoff:optimal-k}}\label{app:proof:optimal-k}

\begin{restated}{Proposition}{prop:tradeoff:optimal-k}{Revenue-optimal $k$}
    Fix $\gamma > 0$. The user's deterrence revenue $u^{k,\gamma}_{\dtr}(k,\gamma)$, viewed as a function of ${k>1}$, is strictly increasing on $\bigl(1,\,1+\frac{1}{\gamma}\bigr)$ and strictly decreasing on $\bigl(1+\frac{1}{\gamma},\infty\bigr)$. Its maximum, attained uniquely at $k = 1 + \frac{1}{\gamma}$, is
    \begin{equation*}
      u^{k,\gamma}_{\dtr}\Bigl(1+\tfrac{1}{\gamma},\,\gamma\Bigr) \;=\; \frac{\gamma}{1+\gamma}\,R .
    \end{equation*}
\end{restated}

  \begin{proof}
% \begin{fullproof}
    Differentiating $u^{k,\gamma}_{\dtr}$ in $k$ directly is cumbersome, so we instead study the bid $b^{k,\gamma}_{\dtr}$ and recover the revenue through $u^{k,\gamma}_{\dtr} = R - b^{k,\gamma}_{\dtr}$. 
    Let
    \begin{equation*}
      \begin{aligned}
        L(k) &\coloneqq \ln\frac{b^{k,\gamma}_{\dtr}(k,\gamma)}{R} \\
             &= \frac{(k-1)\ln(k-1) - k\ln k - \ln\gamma}{k}.
      \end{aligned}
    \end{equation*}
    Differentiating term by term, the numerator satisfies
    \begin{equation*}
      \frac{d}{dk}\bigl[(k-1)\ln(k-1)-k\ln k\bigr] = \ln\!\Bigl(1-\tfrac{1}{k}\Bigr),
    \end{equation*}
    so that, after collecting terms,
    \begin{equation*}
      L'(k) \;=\; \frac{\ln\!\bigl(\gamma(k-1)\bigr)}{k^{2}} ,
    \end{equation*}
    so $L'(k) < 0$ for $k < 1+\frac{1}{\gamma}$ and $L'(k) > 0$ for ${k > 1+\frac{1}{\gamma}}$. Hence $b^{k,\gamma}_{\dtr}$ is minimized uniquely at $k = 1+\frac{1}{\gamma}$, and ${u^{k,\gamma}_{\dtr} = R - b^{k,\gamma}_{\dtr}}$ is correspondingly maximized there. At $k = 1+\frac{1}{\gamma}$, $k-1 = \frac{1}{\gamma}$ and $k = \frac{1+\gamma}{\gamma}$, so
    \begin{equation*}
      \begin{aligned}
        &(k-1)\ln(k-1) - k\ln k - \ln\gamma \\
        =& -\tfrac{1}{\gamma}\ln\gamma - \tfrac{1+\gamma}{\gamma}\bigl[\ln(1+\gamma) - \ln\gamma\bigr] - \ln\gamma \\
        =& -\tfrac{1+\gamma}{\gamma}\ln(1+\gamma) ,
      \end{aligned}
    \end{equation*}
    where the $\ln\gamma$ contributions cancel (their coefficients sum to $-\tfrac{1}{\gamma}+\tfrac{1+\gamma}{\gamma}-1 = 0$). Dividing by $k = \frac{1+\gamma}{\gamma}$ gives ${L = -\ln(1+\gamma)}$, i.e.\ $b^{k,\gamma}_{\dtr}\bigl(1+\tfrac{1}{\gamma},\,\gamma\bigr) = \frac{R}{1+\gamma}$.
    Therefore, we obtain the desired result:
    \[
      u^{k,\gamma}_{\dtr}\bigl(1+\tfrac{1}{\gamma},\,\gamma\bigr) = R - b^{k,\gamma}_{\dtr}  = \frac{\gamma R}{1+\gamma}. \qedhere
    \]
% \end{fullproof}
  \end{proof}

\fi

\end{document}